\newcommand{\Rmnum}[1]{\expandafter\@slowromancap\romannumeral #1@}
\newcommand\wh[1]{\hstretch{2}{\hat{\hstretch{.5}{#1}}}}
\newtheorem{theorem}{Theorem}
\newtheorem{lemma}{Lemma}
\newtheorem{remark}{Remark}
\newtheorem{assumption}{Assumption}
\newtheorem{corollary}{Corollary}
\theoremstyle{definition}
\providecommand{\propositionname}{Proposition}
\patchcmd{\maketitle}{\@fnsymbol}{\@alph}{}{}  
\title{Blind Federated Edge Learning}
\author{\IEEEauthorblockN{Mohammad Mohammadi Amiri, Tolga M. Duman, Deniz G\"und\"uz,\\ Sanjeev R. Kulkarni, H. Vincent Poor\thanks{M. Mohammadi Amiri, S. R. Kulkarni, and H. V. Poor are with the Department of Electrical Engineering, Princeton University, Princeton, NJ 08544, USA (e-mail: \{mamiri, kulkarni, poor\}@princeton.edu).} \thanks{T. M. Duman is with the Department of Electrical and Electronics Engineering, Bilkent University, Ankara 06800, Turkey (e-mail: duman@ee.bilkent.edu.tr).} \thanks{D. G\"und\"uz is with the Department of Electrical and Electronic Engineering, Imperial College London, London SW7 2AZ, U.K. (e-mail: d.gunduz@imperial.ac.uk).}
}
}
\date{}
\begin{document}
 
\maketitle

\begin{abstract}
We study federated edge learning (FEEL), where wireless edge devices, each with its own dataset, learn a global model collaboratively with the help of a wireless access point acting as the parameter server (PS). \makeatletter{\renewcommand*{\@makefnmark}{}
\footnotetext{Part of this work was presented at the IEEE Global Conference on Signal and Information Processing (GlobalSIP), Ottawa, ON, Canada, Nov. 2019 \cite{MohammadTolgaDenizFLGlobalSIP}.}\makeatother}
At each iteration, wireless devices perform local updates using their local data and the most recent global model received from the PS, and send their local updates to the PS over a wireless fading multiple access channel (MAC). 
The PS then updates the global model according to the signal received over the wireless MAC, and shares it with the devices.
Motivated by the additive nature of the wireless MAC, we propose an analog `over-the-air' aggregation scheme, in which the devices transmit their local updates in an uncoded fashion.
Unlike recent literature on over-the-air edge learning, here we assume that the devices do not have channel state information (CSI), while the PS has imperfect CSI. 
Instead, the PS is equipped multiple antennas to alleviate the destructive effect of the channel, exacerbated due to the lack of perfect CSI. 
We design a receive beamforming scheme at the PS, and show that it can compensate for the lack of perfect CSI when the PS has a sufficient number of antennas.
We also derive the convergence rate of the proposed algorithm highlighting the impact of the lack of perfect CSI, as well as the number of PS antennas.
Both the experimental results and the convergence analysis illustrate the performance improvement of the proposed algorithm with the number of PS antennas, where the wireless fading MAC becomes deterministic despite the lack of perfect CSI when the PS has a sufficiently large number of antennas. 
\end{abstract}


\section{Introduction}\label{SecIntro}

With the growing prevalence of Internet of things (IoT) devices, constantly collecting information about various physical phenomena, and the growth in the number and processing capabilities of mobile edge devices (phones, tablets, smart watches and activity monitors), there is a growing interest in enabling machine learning (ML) to learn from data distributed across edge devices.
Centralized ML techniques are often developed, assuming that the datasets are offloaded to a central processor. 
In the case of wireless edge devices, centralized ML techniques are not desirable, since offloading such massive amounts of data to a central cloud may be too costly in terms of energy and bandwidth, and may compromise data privacy. 
\textit{Federated learning} (FL) has been developed to enable ML at the wireless edge by pushing the network intelligence to the edge by utilizing the processing capabilities of wireless devices.

With FL, wireless devices train a global model collaboratively using their local datasets, which remain localized enhancing data privacy, with the help of a parameter server (PS) that keeps track of the model \cite{DCKonecnyFederated}.
At each iteration of FL, the PS shares the current global model with the devices, and collects the local model updates from the devices to update the global model. 
This procedure continues until the global model converges, or the devices stop participating in the training because of hitting their limited power budget, or moving out of the coverage of the PS.

FL involves communications over unreliable wireless networks with limited resources, particularly in the device-to-PS direction where a large number of devices, each with limited bandwidth and power, communicate with the PS over a shared wireless medium.  
Therefore, it is vital to design communication-efficient protocols for the realization of an FL framework.
Several approaches have been proposed in recent years to limit the communication requirements in the FL setting \cite{DCKonecnyFederated,McMahan2017CommunicationEfficientLO,GoogleMcMahanFed,KonecnyFLBeyondData,FLWithnonIIDZhao,XLiFedAveFLnonIID,MDSV_NIPS2020}. 
However, these works ignore the physical characteristics of the underlying communication channels for wireless edge learning and consider interference-and-error-free rate-limited communication links.


Recently there have been significant efforts to incorporate physical layer characteristics of wireless networks into FL system design \cite{MohammadDenizDSGDCS,KaibinParallelWork,YangFedLearOverAirComp,MohamamdDenizFLOverAirSPAWC19,FLTWCMohammadDenizFading,MohammadTolgaDenizFLGlobalSIP,TungVuFLMassiveMIMO,YoSebMohammadMIMOFL,CohenAnalogGDDL,RaviCommEffFLGaussian,DenizOneBitAgg,OsvaldoFLPrivacyFree,YangArafaVinceAgeBasedFL,ShiZhouFastConverFL,HowardVinceSchedulingsFL,YuxuanFLRedundantData,MohammadDenizSanjVinceISIT20,JinHyunAhnFLNoisyDownlink,AccelDNNFLEdge,ZengFLCPU_GPU,ChenVinceJointLearCommFL,DinhFLWireNetConver,FLConvergenceMohDenSanjVince,MohammadFLConvNoisyDownlink,ShiFLLatecyConsWirelessFL,DenizCommunLearnEdge,GuoFLCustomizedDesign,FreyOverTheAirCompDisMachineLear,SeryOverAirFLCohenHeterData}, referred to as \textit{federated edge learning} (FEEL).
Several studies have incorporated over-the-air computation into FEEL utilizing the superposition property of the wireless multiple access channel (MAC) for reliable transmission from the devices to the PS, where the MAC naturally provides the sum of the updates from the devices to the PS \cite{MohammadDenizDSGDCS,KaibinParallelWork,YangFedLearOverAirComp,MohamamdDenizFLOverAirSPAWC19,FLTWCMohammadDenizFading,MohammadTolgaDenizFLGlobalSIP,CohenAnalogGDDL,RaviCommEffFLGaussian,DenizOneBitAgg,SeryOverAirFLCohenHeterData,GuoFLCustomizedDesign,FreyOverTheAirCompDisMachineLear}.
Various device scheduling techniques for FEEL have been introduced in order to select a subset of devices sharing the limited wireless resources in each communication round \cite{YangArafaVinceAgeBasedFL,MohammadDenizSanjVinceISIT20,ShiZhouFastConverFL,YuxuanFLRedundantData,HowardVinceSchedulingsFL}.
Also, allocating resources to optimize a performance measure is another active research direction in FEEL \cite{TungVuFLMassiveMIMO,AccelDNNFLEdge,ZengFLCPU_GPU,ChenVinceJointLearCommFL,DinhFLWireNetConver,ShiFLLatecyConsWirelessFL}.   
Several studies have provided convergence guarantees of FEEL under different practical constraints and types of heterogeneity in a federated setting \cite{XLiFedAveFLnonIID,MDSV_NIPS2020,DinhFLWireNetConver,FLConvergenceMohDenSanjVince,MohammadFLConvNoisyDownlink,GuoFLCustomizedDesign}. 
Furthermore, beamforming techniques at the PS with multiple antennas have been designed to improve the quality of the estimated signal used for updating the global model \cite{YangFedLearOverAirComp,MohammadTolgaDenizFLGlobalSIP,YoSebMohammadMIMOFL}.
In \cite{YangFedLearOverAirComp}, a beamforming technique is used at the PS to maximize the number of devices participating in each communication round of training, while \cite{YoSebMohammadMIMOFL} introduces a nonlinear estimation method to recover the sum of updates sent from the devices using their sparsity.

In this paper, we extend our previous work in \cite{MohammadTolgaDenizFLGlobalSIP} and study FEEL over a wireless fading MAC from the devices to the PS.
In order to benefit from the over-the-air computation, we consider uncoded transmission of local model updates from the devices to the PS, whose advantages over digital transmission have been shown in \cite{MohammadDenizDSGDCS,KaibinParallelWork, YangFedLearOverAirComp, MohamamdDenizFLOverAirSPAWC19,FLTWCMohammadDenizFading}. 
Over-the-air computation over a wireless fading MAC requires each transmitting device to  scale its transmission depending on the instantaneous channel state so that they arrive at the same power level at the PS.
This, in turn, necessitates perfect channel state information (CSI) at the devices, acquisition of which would introduce additional delays and reduce the spectral efficiency.
Alternatively, in this work, we consider FEEL with no CSI at the devices and imperfect CSI at the PS.
To the best of our knowledge, this is the first paper in the FEEL literature to consider no CSI at the transmitters (CSIT) and imperfect CSI at the receiver for the device-to-PS transmission.
We employ multiple antennas at the PS and design a receive beamformer to overcome the exacerbated negative impact of the underlying wireless fading MAC due to the lack of CSIT and perfect CSI at the PS. 
We analytically show that the proposed beamforming technique alleviates the destructive effects of the interference and noise terms at the PS thanks to the utilization of multiple antennas;
and, in the limit, the fading MAC boils down to a deterministic channel with identical gains from all the devices, which is due to channel hardening \cite{GunnarssonMassiveMIMOChannelHarden}. 
We also provide a convergence analysis of the proposed algorithm, and study the impact of the lack of CSIT and perfect CSI at the PS on the convergence rate.
The convergence analysis shows how the increasing number of antennas at the PS remedies the lack of perfect CSI in the system.  
Numerical experiments on MNIST and CIFAR-10 datasets, corroborated by the analytical convergence results, illustrate the success of the proposed algorithm in combating the unavailability of perfect CSI in the system.
It is shown that, despite the lack of CSI at the devices and perfect CSI at the PS, with sufficiently large number of PS antennas, the proposed algorithm can perform as well as having error-free communication links from the devices to the PS.


\textit{Notations}: $\mathbb{R}$ and $\mathbb{C}$ represent the sets of real and complex values, respectively. 
We denote entry-wise complex conjugate of vector $\boldsymbol{x}$ by $\left( \boldsymbol{x} \right)^*$, and ${\rm{Re}} \{ \boldsymbol{x} \}$ and ${\rm{Im}} \{ \boldsymbol{x} \}$ return entry-wise real and imaginary components of $\boldsymbol{x}$, respectively. 
For $\boldsymbol{x}$ and $\boldsymbol{y}$ with the same dimension, $\boldsymbol{x} \circ \boldsymbol{y}$ returns their element-wise product. 
We denote a zero-mean normal distribution with variance $\sigma^2$ by $\mathcal{N} \left( 0,\sigma^2 \right)$, and $\mathcal{C N} \left( 0,\sigma^2 \right)$ represents a circularly symmetric complex normal distribution with real and imaginary terms each distributed according to $\mathcal{N} \left( 0,\sigma^2 / 2 \right)$. 
We let $[i] \triangleq \{ 1, \dots, i \}$. 
Notation $\left| \cdot \right|$ returns the cardinality of a set or the absolute value of a real number, and the $l_2$ norm of vector $\boldsymbol{x}$ is denoted by $\left\| \boldsymbol{x} \right\|_2$.

\section{System Model}\label{SecProbFormul}

In FL the goal is to minimize a loss function, $F \left( \boldsymbol{\theta} \right)$, where $\boldsymbol{\theta} \in \mathbb{R}^d$ represents the model parameters to be optimized, collaboratively across $M$ devices.
We denote device $m$'s local dataset by $\mathcal{B}_m$ with $B_m \triangleq \left| \mathcal{B}_m \right|$, for $m \in [M]$, and $B \triangleq \sum\nolimits_{m=1}^{M} B_m$. 
We have
\begin{align}\label{GenEmpLossFunc}
F \left( \boldsymbol{\theta} \right) =  \sum\limits_{m=1}^{M} \frac{B_m}{B} F_m \left( \boldsymbol{\theta} \right),   
\end{align}
where $F_m \left( \boldsymbol{\theta} \right)$ represents the average empirical loss at device $m$ with respect to model parameters $\boldsymbol{\theta}$; 
that is,
\begin{align}
F_m \left( \boldsymbol{\theta} \right) = \frac{1}{B_m} \sum\limits_{\boldsymbol{u} \in \mathcal{B}_m} f \left(\boldsymbol{\theta}, \boldsymbol{u} \right), \quad m \in [M],   
\end{align}
where $f \left(\boldsymbol{\theta}, \boldsymbol{u} \right)$ denotes the empirical loss function at data sample $\boldsymbol{u}$ with respect to the model parameters $\boldsymbol{\theta}$ and is defined by the learning task.
Devices perform stochastic gradient descent (SGD) to minimize the loss function $F_m \left( \boldsymbol{\theta} \right)$. 
During global iteration $t$, having received the model parameters $\boldsymbol{\theta} (t)$ from the PS, device $m$ performs $\tau$ local iterations of SGD, with the following update during the $i$-th local iteration:
\begin{align}
\boldsymbol{\theta}_m^{i+1} (t) = \boldsymbol{\theta}_m^i (t) - \eta^i_m (t) \nabla F_m \left( \boldsymbol{\theta}_m^i (t), \xi_m^i (t) \right),  \quad \mbox{$i \in [\tau]$},   
\end{align}
where $\boldsymbol{\theta}_m^1 (t) = {\boldsymbol{\theta}} (t)$, $\eta^i_m (t)$ represents the learning rate, and $\nabla F_m \left( \boldsymbol{\theta}_m^i (t), \xi_m^i (t)  \right)$ denotes the stochastic gradient estimate with respect to $\boldsymbol{\theta}_m^i (t)$ and the local mini-batch sample $\xi_m^i (t)$, chosen uniformly at random from the local dataset $\mathcal{B}_m$, for $m \in [M]$.
We highlight that $\nabla F_m \left( \boldsymbol{\theta}_m^i (t), \xi_m^i (t) \right)$ provides an unbiased estimate of the actual gradient $\nabla F_m \left( \boldsymbol{\theta}_m^i (t) \right)$ with respect to the randomness of the stochastic gradient function; 
that is,
\begin{align}\label{AverageStochGradientEst}
\mathbb{E}_{\xi} \left[ \nabla F_m \left( \boldsymbol{\theta}_m^i (t), \xi_m^i (t) \right) \right] = \nabla F_m \left( \boldsymbol{\theta}_m^i (t)  \right), \quad \forall i \in [\tau], \forall m \in [M], \forall t.   
\end{align}
After performing $\tau$ local updates, device $m$ aims to send its local model update $\Delta \boldsymbol{\theta}_m (t) = \boldsymbol{\theta}_m^{\tau+1} (t) - {\boldsymbol{\theta}} (t)$ to the PS, for $m \in [M]$. 
In the ideal case of receiving the accurate local model updates from the devices, the PS updates the global model according to
\begin{align}\label{ParallelSGDModelUpdate}
\boldsymbol{\theta} (t+1) =\boldsymbol{\theta} (t) + \Delta \boldsymbol{\theta} (t),   
\end{align}
where we have defined
\begin{align}\label{DefDeltaThetat}
\Delta \boldsymbol{\theta} (t) \triangleq    \frac{1}{M} \sum\limits_{m=1}^{M} \Delta \boldsymbol{\theta}_m (t).
\end{align}
However, in our model, the devices transmit their local model updates over a wireless shared medium, which provides the PS with a noisy estimate of $\Delta \boldsymbol{\theta} (t)$. 
In the following, we describe the wireless channels from the devices to the PS, which is equipped with $K$ antennas.


We model the shared wireless channel from the devices to the PS with $K$ antennas as a wireless fading MAC, where OFDM is used to divide the available bandwidth into $s$ subchannels, $s \le d$ (in practice, we typically have $s \ll d$). 
We assume that $N$ OFDM symbols can be transmitted over each subchannel at each global iteration. 
The received vector corresponding to the $n$-th OFDM symbol during global iteration $t$ at the $k$-th antenna of the PS is given by
\begin{align}\label{ReceivedVectorPSGenAntennak}
\boldsymbol{y}^n_k (t) = \sum\limits_{m = 1}^{M} \boldsymbol{h}^n_{m,k} (t) \circ \boldsymbol{x}^n_{m} (t) + \boldsymbol{z}^n_k (t), \quad \mbox{$k \in [K]$},
\end{align}
where $\boldsymbol{x}^n_{m} (t)$ is the $n$-th symbol of dimension $s$ transmitted by device $m$, $\boldsymbol{h}^n_{m,k} (t) \in \mathbb{C}^s$ denotes the vector of channel gains from device $m$ to the $k$-th PS antenna, $m \in [M]$, and $\boldsymbol{z}^n_{k} (t) \in \mathbb{C}^s$ represents the additive noise at the $k$-th antenna of the PS, $n \in [N]$. 
The $i$-th entry of channel vector $\boldsymbol{h}^n_{m,k} (t)$, denoted by $h^n_{m,k,i} (t)$, is distributed according to $\mathcal{C N} \left( 0, \sigma_h^2 \right)$, $i \in [s]$, and different entries of $\boldsymbol{h}^n_{m,k} (t)$ can be correlated, while the channel gains are assumed to be independent and identically distributed (iid) across PS antennas, OFDM symbols, and wireless devices, $k \in [K]$, $n \in [N]$, $m \in [M]$.
Similarly, different entries of noise vector $\boldsymbol{z}^n_k (t)$ can be correlated, and its $i$-th entry, denoted by $z^n_{k,i} (t)$, distributed according to $\mathcal{C N} \left( 0, \sigma_z^2 \right)$, $i \in [s]$, $k \in [K]$, $n \in [N]$.
Noise vectors are also assumed to be iid across PS antennas and OFDM symbols. 
We consider the following average power constraint at each device assuming a total of $T$ global iterations:       
\begin{align}\label{AvePowerConsGen}
\frac{1}{NT} \sum\limits_{t=1}^{T} \sum\limits_{n=1}^{N} \mathbb{E} \left[ ||\boldsymbol{x}^n_{m} (t)||^2_2 \right] \le \bar{P}, \quad \forall m \in [M],
\end{align}
where the expectation is taken with respect to the randomness of the communication channel.

We assume that the devices do not have CSI, and the PS has imperfect/noisy CSI about the wireless fading MAC. 
To be precise, we assume that the PS has only imperfect CSI about the sum of the channel gains from the devices to each PS antenna, i.e., $\sum\nolimits_{m = 1}^{M} \boldsymbol{h}^n_{m,k} (t)$, $\forall k \in [K]$, for $n \in [N]$.
We denote the CSI of $\sum\nolimits_{m=1}^{M}
\boldsymbol{h}^n_{m,k} (t)$ at the PS by $\wh{\boldsymbol{h}}^n_{k} (t)$, where \cite{WeberImperctCSIMIMOTCOM}
\begin{align}\label{ImperfectCSIPSModel}
\wh{\boldsymbol{h}}^n_{k} (t) = \sum\limits_{m=1}^{M} \boldsymbol{h}^n_{m,k} (t) + \tilde{\boldsymbol{h}}^n_{k} (t), \quad \forall n,k,t,    
\end{align}
where $\tilde{\boldsymbol{h}}^n_{k} (t)$ represents the independent CSI estimation error vector with each entry an iid random variable with zero-mean and variance $\tilde{\sigma}_h^2$.
At each global iteration, the goal at the PS is to estimate $\Delta \boldsymbol{\theta} (t)$, denoted by $\Delta \wh{\boldsymbol{\theta}} \left( t \right)$, based on its received symbols $\boldsymbol{y}^n_k (t)$, and the CSI $\wh{\boldsymbol{h}}^n_{k} (t)$, $\forall n,k$. 
The PS then updates the global model as
\begin{align}\label{ParallelSGDModelUpdateNoisy}
\boldsymbol{\theta} (t+1) =\boldsymbol{\theta} (t) + \Delta \wh{\boldsymbol{\theta}} \left( t \right),   
\end{align}
and shares the new global model with the devices accurately.

\begin{remark}
The CSI of $\sum\nolimits_{m = 1}^{M} \boldsymbol{h}^n_{m,k} (t)$, $\forall n,k$, dictates that the PS only needs to estimate the sum of the channel gains from all the devices to each antenna rather than each individual channel gain $\boldsymbol{h}^n_{m,k} (t)$. 
This significantly reduces the overhead of channel estimation, particularly for a relatively large number of devices $M$ or number of PS antennas $K$, and this overhead does not increase with $M$. 
\end{remark}

We note that the PS is interested in the average of the local model updates computed by the devices rather than each individual model update. 
Motivated by the additive nature of the wireless MAC, we consider an analog approach similarly to \cite{MohammadDenizDSGDCS,KaibinParallelWork,YangFedLearOverAirComp,FLTWCMohammadDenizFading}, where the devices transmit their gradient estimates in an analog fashion without employing any channel coding.

\section{Analog FEEL without CSIT}\label{SecPeoposedAnalog}


Next, we present the proposed analog FEEL scheme in the absence of CSIT at the devices. 
For the global model update at the PS, we first assume perfect CSI at the PS, in which case $\tilde{\sigma}^2_h =0$, and study the impact of the imperfect CSI at the PS in the next subsection.

At the global iteration $t$, device $m$ aims to transmit its local update ${\Delta \boldsymbol{\theta}}_m \left( t \right) \in \mathbb{R}^{d}$ over $N = \left\lceil {d/2s} \right\rceil$ OFDM symbols across $s$ subchannels in an uncoded manner, $m \in [M]$. We denote the $i$-th entry of ${\Delta \boldsymbol{\theta}}_m \left( t \right)$ by ${\Delta {\theta}}_{m, i} \left( t \right)$, $i \in [d]$, and define, for $n \in [N]$, $m \in [M]$,
\begin{subequations}
\label{gnmESADSGDDef}
\begin{align}\label{gnmRealESADSGDDef}
\Delta {\boldsymbol{\theta}}^{n, {\rm{re}}}_{m} \left( t \right)& \triangleq [ \Delta {\theta}_{m,2(n-1)s+1} \left(t \right), \dots, \Delta {{\theta}}_{m,(2n-1)s} \left( t \right)]^T,\\
\Delta {\boldsymbol{\theta}}^{n, \rm{im}}_{m} \left( t \right)& \triangleq [ \Delta {\theta}_{m,(2n-1)s+1} \left( t \right), \dots, \Delta {\theta}_{m,2ns} \left( t \right)]^T,
\label{gnmImagESADSGDDef}\\
\Delta {\boldsymbol{\theta}}^n_{m} \left( t \right) & \triangleq \Delta {\boldsymbol{\theta}}^{n, {\rm{re}}}_{m} \left( t \right) + j \Delta {\boldsymbol{\theta}}^{n, {\rm{im}}}_{m} \left( t \right),
\label{gnmRealImagESADSGDDef}
\end{align}
\end{subequations}
where $j \triangleq \sqrt{-1}$, and we zero-pad $\Delta {\boldsymbol{\theta}}_m \left( t \right)$ to have length $2sN$. The $i$-th entry of $\Delta {\boldsymbol{\theta}}^n_{m} \left( t \right)$ is then given by
\begin{align}\label{ithgmn}
\Delta \theta^n_{m,i} \left( t \right) = \Delta \theta_{m,2(n-1)s+i} & \left( t  \right) + j \Delta \theta_{m,(2n-1)s+i} \left( t \right), \; \mbox{for $i \in [s]$, $n \in [N]$, $m \in [M]$}.
\end{align}
According to \eqref{gnmESADSGDDef}, we have
\begin{align}\label{gmwrtgmnReIm}
\Delta {\boldsymbol{\theta}}_m \left( t \right) = & \big[ \Delta {\boldsymbol{\theta}}^{1, {\rm{re}}}_{m} \left( t \right), \Delta {\boldsymbol{\theta}}^{1, {\rm{im}}}_{m} \left( t \right), \dots,  \Delta {\boldsymbol{\theta}}^{N, {\rm{re}}}_{m} \left( t \right), \Delta {\boldsymbol{\theta}}^{N, {\rm{im}}}_{m} \left( t \right) \big]^T,
\end{align}
with $N = \left\lceil {d/2s} \right\rceil$. At the $n$-th OFDM symbol of iteration $t$, device $m$ sends 
\begin{align}\label{workermSends}
\boldsymbol{x}^n_{m} (t) = \alpha_t \Delta \boldsymbol{\theta}^n_{m} (t), \quad n \in [N], m \in [M],
\end{align}
where $\alpha_t$ is the scaling factor that will be chosen according to the power constraint. 
Accordingly, the average transmit power depends on $\alpha_t$, and is evaluated as follows: 
\begin{align}\label{AvePowerConsWorkerm}
\frac{1}{NT} \sum\limits_{t=1}^{T} \alpha_t^2 \sum\limits_{n=1}^{N} ||\Delta \boldsymbol{\theta}^n_{m} (t)||^2_2 \le \bar{P}. 
\end{align}

The PS observes the following signal at its $k$-th antenna, for $k \in [K], n \in [N]$:
\begin{align}\label{ReceivedVectorPSScheAntennak}
\boldsymbol{y}^n_k (t) = \alpha_t \sum\limits_{m = 1}^{M} \boldsymbol{h}^n_{m,k} (t) \circ \Delta \boldsymbol{\theta}^n_{m} (t) + \boldsymbol{z}^n_k (t). 
\end{align}

\subsection{Perfect CSI at the PS}
In this subsection, we assume that the PS has access to perfect CSI about the sum of the channel gains from all the devices to each antenna, i.e., $\tilde{\sigma}^2_h =0$ and $\wh{\boldsymbol{h}}^n_{k} (t) = \sum\nolimits_{m=1}^{M} \boldsymbol{h}^n_{m,k} (t)$, $\forall n,k,t$.
Having access to perfect CSI, the PS combines the signals at different antennas in the following form:
\begin{align}\label{ReceivedVectorPSScheCombAntennas}
\boldsymbol{y}^n (t) \triangleq \frac{1}{K} \sum\limits_{k=1}^{K} \Big( \sum\limits_{m = 1}^{M} \boldsymbol{h}^n_{m,k} (t) \Big)^{*} \circ \boldsymbol{y}^n_k (t), 
\end{align}
whose $i$-th entry is given by
\begin{align}\label{ReceivedVectorPSScheCombAntennasith}
y^n_{i} (t) = \frac{1}{K} \sum\limits_{k=1}^{K}  \sum\limits_{m = 1}^{M} \left( {h}^n_{m,k,i} (t) \right)^{*} {y}^n_{k,i} (t), 
\end{align}
where ${y}^n_{k,i} (t)$ denotes the $i$-th entry of $\boldsymbol{y}^n_{k} (t)$, $i \in [s]$, $n \in [N]$. By substituting ${y}^n_{k,i} (t)$, given in \eqref{ReceivedVectorPSScheAntennak}, it follows that
\begin{align}\label{ReceivedVectorPSScheCombAntennasReWrith}
{y}^n_i (t) = & \underbrace{\alpha_t \sum\limits_{m=1}^{M} \Big( \frac{1}{K} \sum\limits_{k=1}^{K} \left| {h}^n_{m,k,i} (t) \right|^2 \Big)  \Delta {\theta}^n_{m,i} (t)}_{\text{\normalfont signal term}} \nonumber\\
&  + \underbrace{\frac{\alpha_t}{K}  \sum\limits_{m=1}^{M} \sum\limits_{m'=1, m' \ne m}^{M} \sum\limits_{k=1}^{K} \left( {h}^n_{m,k,i} (t) \right)^{*} {h}^n_{m',k, i} (t) \Delta {\theta}^n_{m', i} (t)}_{\text{\normalfont interference term}} \nonumber\\
&  +  \underbrace{\frac{1}{K} \sum\limits_{m=1}^{M} \sum\limits_{k=1}^{K} \left( {h}^n_{m,k, i} (t) \right)^{*} z_{k,i}^n (t)}_{\text{\normalfont noise term}}. 
\end{align}
As we can see in \eqref{ReceivedVectorPSScheCombAntennasReWrith}, ${y}^n_i (t)$ consists of three terms, specified as the signal, interference, and noise components, respectively.
Following the law of large numbers, as the number of antennas at the PS $K \to \infty$, the signal term approaches 
\begin{align}\label{SigTermApproaches}
y_{i, {\rm{sig}}}^n (t) \triangleq \alpha_t \sigma_h^2 \sum\limits_{m=1}^{M} \Delta {\theta}^n_{m,i} (t), \quad i \in [s], n \in [N],
\end{align}
from which the PS can recover 
\begin{subequations}\label{PSSigTermRecReIm}
\begin{align}\label{PSSigTermRecRe}
\frac{1}{M} \sum\limits_{m=1}^{M} \Delta {\theta}_{m,2(n-1)s+i} \left( t \right) &= \frac{ {\rm{Re}} \left\{ y_{i, {\rm{sig}}}^n (t) \right\} }{\alpha_t M \sigma_h^2},\\
\frac{1}{M} \sum\limits_{m=1}^{M} \Delta {\theta}_{m,(2n-1)s+i} \left( t  \right) &= \frac{ {\rm{Im}} \left\{ y_{i, {\rm{sig}}}^n (t) \right\} }{\alpha_t M \sigma_h^2}.\label{PSSigTermRecIm}
\end{align}
\end{subequations}
However, the interference term in \eqref{ReceivedVectorPSScheCombAntennasReWrith} does not allow the exact recovery of $\frac{1}{M} \sum\nolimits_{m=1}^{M} \Delta {\theta}_{m,2(n-1)s+i} \left( t \right)$ and $\frac{1}{M} \sum\nolimits_{m=1}^{M} \Delta {\theta}_{m,(2n-1)s+i} \left( t \right)$ from ${y}^n_i (t)$, which is observed at the PS. 
To analyze the interference term, defined as $y_{i, {\rm{itf}}}^n(t)$, we rewrite it as follows:
\begin{align}\label{IntTerAnDef}
y_{i, {\rm{itf}}}^n(t) = \alpha_t  \sum\limits_{m=1}^{M} \Big( \frac{1}{K} \sum\limits_{k=1}^{K} {h}^n_{m,k, i} (t) \sum\limits_{m'=1, m' \ne m}^{M} \left( {h}^n_{m',k,i} (t) \right)^{*} \Big) \Delta {\theta}^n_{m, i} (t), \quad i \in [s], n \in [N]. 
\end{align}
We then define, for $m \in [M], i \in [s], n \in [N]$,
\begin{align}
\mathfrak{h}_{m,i}^n (t) \triangleq \frac{1}{K} \sum\limits_{k=1}^{K} {h}^n_{m,k, i} (t) \sum\limits_{m'=1, m' \ne m}^{M} \left( {h}^n_{m',k,i} (t) \right)^{*},   
\end{align}
and is easy to verify that the mean and the variance of $\mathfrak{h}_{m,i}^n (t)$ are given by
\begin{subequations}\label{MeanVarMathFrakh}
\begin{align}\label{MeanMathFrakh}
\mathbb{E} \left[ \mathfrak{h}_{m,i}^n (t) \right] =& 0,\\ 
\mathbb{E} \left[ \left| \mathfrak{h}_{m,i}^n (t) \right|^2 \right] =& \frac{(M-1) \sigma_h^4}{K},\label{VarMathFrakh}  
\end{align}
\end{subequations}
respectively. 
We note that the local updates computed at each iteration are independent of the channel realizations experienced during the same iteration. 
From the analysis in \eqref{MeanVarMathFrakh}, we conclude that the interference term in \eqref{ReceivedVectorPSScheCombAntennasReWrith} has zero-mean and $M$ terms, each with a variance that scales with $(M-1) / K$. 
Thus, for a fixed number of wireless devices $M$, the variance of the interference term in \eqref{ReceivedVectorPSScheCombAntennasReWrith} approaches zero as $K \to \infty$. 
In practice, it is feasible to employ a sufficiently large number of antennas at the PS exploiting massive multiple-input multiple-output (MIMO) systems \cite{RusekScaleUpMassiveMIMO}. 
Numerical results with a finite number of antennas will be presented in Section \ref{SecExperiments}.

According to the above analysis, the PS estimates $\frac{1}{M} \sum\nolimits_{m=1}^{M} \Delta {\theta}_{m,2(n-1)s+i} \left( t \right)$ and $\frac{1}{M} \sum\nolimits_{m=1}^{M} \Delta {\theta}_{m,(2n-1)s+i} \left( t \right)$, for $i \in [s]$, $n \in [N]$, through
\begin{subequations}\label{PSSigTermRecReImEst}
\begin{align}\label{PSSigTermRecReEst}
\Delta \hat{\theta}_{2(n-1)s+i} \left( t \right) &= \frac{ {\rm{Re}} \left\{ y_{i}^n (t) \right\} }{\alpha_t M \sigma_h^2},\\
\Delta \hat{\theta}_{(2n-1)s+i} \left( t \right) &= \frac{ {\rm{Im}} \left\{ y_{i}^n (t) \right\} }{\alpha_t M \sigma_h^2},\label{PSSigTermRecImEst}
\end{align}
\end{subequations}
respectively. 
It then utilizes the estimated vector $\Delta \wh{\boldsymbol{\theta}} ( t) \triangleq \big[ \Delta \hat{\theta}_{1} \left( t \right), \dots, \Delta \hat{\theta}_{d} \left( t \right) \big]^T$, which is an unbiased estimate of the average of the local model updates, to update the global model as
\begin{align}\label{GlobalModelUpdatePerfectCSISubSec}
\boldsymbol{\theta} (t+1) =  \boldsymbol{\theta} (t) + \Delta \wh{\boldsymbol{\theta}} ( t).  
\end{align}

\subsection{Imperfect CSI at the PS} 
We now generalize the above beamforming technique by considering imperfect CSI at the PS. 
Let $\wh{\boldsymbol{h}}^n_{k} (t) = [\hat{h}^n_{k,1} (t), \dots, \hat{h}^n_{k,s} (t)]^T$ and $\tilde{\boldsymbol{h}}^n_{k} (t) = [\tilde{h}^n_{k,1} (t), \dots, \tilde{h}^n_{k,s} (t)]^T$. 
In the case of imperfect CSI at the PS, the received signals at different PS antennas are combined as follows:
\begin{align}\label{ReceivedVectorPSScheCombAntennas_ImpCSI}
\boldsymbol{y}^n (t) & = \frac{1}{K} \sum\limits_{k=1}^{K} \Big( \wh{\boldsymbol{h}}^n_{k} (t) \Big)^{*} \circ \boldsymbol{y}^n_k (t) \nonumber\\
& = \frac{1}{K} \sum\limits_{k=1}^{K} \Big( \sum\limits_{m = 1}^{M} {\boldsymbol{h}}^n_{m,k} (t) \Big)^{*} \circ \boldsymbol{y}^n_k (t) + \frac{1}{K} \sum\limits_{k=1}^{K} \Big( \tilde{\boldsymbol{h}}^n_{k} (t) \Big)^{*} \circ \boldsymbol{y}^n_k (t), 
\end{align}
which generalizes the expression for the perfect CSI case given in \eqref{ReceivedVectorPSScheCombAntennas}. 
Accordingly, we have
\begin{align}\label{ReceivedSignalImperfectCSI}
{y}^n_i (t) = & \alpha_t \sum\limits_{m=1}^{M} \Big( \frac{1}{K} \sum\limits_{k=1}^{K} \left| {h}^n_{m,k,i} (t) \right|^2 \Big)  \Delta {\theta}^n_{m,i} (t) \nonumber\\
& + \frac{\alpha_t}{K}  \sum\limits_{m=1}^{M} \sum\limits_{m'=1, m' \ne m}^{M} \sum\limits_{k=1}^{K} \left( {h}^n_{m,k,i} (t) \right)^{*} {h}^n_{m',k, i} (t) \Delta {\theta}^n_{m', i} (t) \nonumber\\
& + \frac{1}{K} \sum\limits_{m=1}^{M} \sum\limits_{k=1}^{K} \left( {h}^n_{m,k, i} (t) \right)^{*} z_{k,i}^n (t) \nonumber\\
&+ \frac{\alpha_t}{K}  \sum\limits_{m=1}^{M} \sum\limits_{k=1}^{K} \left( \tilde{h}^n_{k,i} (t) \right)^{*} {h}^n_{m,k, i} (t) \Delta {\theta}^n_{m, i} (t) \nonumber\\
& + \frac{1}{K} \sum\limits_{k=1}^{K} \left( \tilde{h}^n_{k, i} (t) \right)^{*} z_{k,i}^n (t),
\end{align}
where the last two terms on the right hand side (RHS) are due to having imperfect CSI at the PS, and for $\tilde{\sigma}_h^2=0$ the above expression is equivalent to \eqref{ReceivedVectorPSScheCombAntennasReWrith}. 
We denote the extra interference term introduced because of the lack of perfect CSI at the PS by $\tilde{y}_{i, {\rm{itf}}}^n(t)$ given by
\begin{align}
\tilde{y}_{i, {\rm{itf}}}^n(t) = \alpha_t \sum\limits_{m=1}^{M} \Big( \frac{1}{K} \sum\limits_{k=1}^{K} \left( \tilde{h}^n_{k,i} (t) \right)^{*} {h}^n_{m,k, i} (t) \Big) \Delta {\theta}^n_{m, i} (t), \quad i \in [s], n \in [N].  
\end{align}
We define, for $m \in [M], i \in [s], n \in [N]$,
\begin{align}
\tilde{\mathfrak{h}}_{m,i}^n (t) \triangleq \frac{1}{K} \sum\limits_{k=1}^{K} \left( \tilde{h}^n_{k,i} (t) \right)^{*} {h}^n_{m,k, i} (t),   
\end{align}
where we have
\begin{subequations}
\begin{align}
\mathbb{E} \left[ \tilde{\mathfrak{h}}_{m,i}^n (t) \right] =& 0,\\ 
\mathbb{E} \left[ \big| \tilde{\mathfrak{h}}_{m,i}^n (t) \big|^2 \right] =& \frac{\tilde{\sigma}_h^2 \sigma_h^2}{K}. 
\end{align}
\end{subequations}
Therefore, lack of perfect CSI at the PS introduces an extra interference term with zero-mean which includes $M$ terms, each with a variance scaled with $1/K$. 
Similarly to the perfect CSI scenario, the PS estimates $\frac{1}{M} \sum\nolimits_{m=1}^{M} \Delta {\theta}_{m,2(n-1)s+i} \left( t \right)$ and $\frac{1}{M} \sum\nolimits_{m=1}^{M} \Delta {\theta}_{m,(2n-1)s+i} \left( t \right)$, for $i \in [s]$, $n \in [N]$, through
\begin{subequations}\label{DeltaThetaHatImperfectCSI}
\begin{align}
\Delta \hat{\theta}_{2(n-1)s+i} \left( t \right) &= \frac{ {\rm{Re}} \left\{ y_{i}^n (t) \right\} }{\alpha_t M \sigma_h^2},\\
\Delta \hat{\theta}_{(2n-1)s+i} \left( t \right) &= \frac{ {\rm{Im}} \left\{ y_{i}^n (t) \right\} }{\alpha_t M \sigma_h^2},
\end{align}
\end{subequations}
and uses $\Delta \wh{\boldsymbol{\theta}} ( t)$ to update the global model as in \eqref{GlobalModelUpdatePerfectCSISubSec}. 

\begin{remark}\label{RemPowerDecay}
We note that with SGD the empirical variances of the local model updates decay over time and approach zero asymptotically \cite{BottouLargeScaleSGD,ScalableDNNStorm,DCLimitedPrecisionGupta,MohammadDenizDSGDCS,UseLocalSGDLin}. 
Thus, for robust communication of the local model updates against noise at each global iteration, it is reasonable to increase the power allocation factor $\alpha_t$ over time.        
\end{remark}

\begin{remark}\label{RemCompression}
We remark that the main focus in this paper is to develop techniques for FEEL with no CSIT, as well as imperfect CSI at the PS. 
Our approach to tackle this problem is to employ multiple antennas at the PS, which can help to mitigate the effect of fading, and, in the limit, align the received signals at the PS. 
We can further employ some of the existing schemes in the literature providing more efficient communication over the limited bandwidth wireless MAC, such as the idea of linear projection proposed in \cite{MohammadDenizDSGDCS}. 
We leave the analysis of such combined techniques as future work.       
\end{remark}

\section{Convergence Analysis}\label{SecConverge}
In this section, we provide a convergence analysis of the proposed analog FEEL scheme with no CSIT and imperfect CSI at the PS. 
For ease of presentation, we consider $N=1$, i.e., $s=d/2$, and drop the dependency of all the variables on $n$. 
Accordingly, the received signal at the PS, given in \eqref{ReceivedSignalImperfectCSI}, can be rewritten as follows:
\begin{subequations}\label{Convergence_y_i_t_rewrite}
\begin{align}
{y}_i (t) = \sum\limits_{l=1}^{5} {y}_{i,l} (t), \quad \mbox{for $i \in [d/2]$},
\end{align}
where 
\begin{align}
y_{i,1} (t) \triangleq & \alpha_t \sum\limits_{m=1}^{M} \Big( \frac{1}{K} \sum\limits_{k=1}^{K} \left| {h}_{m,k,i} (t) \right|^2 \Big)  \left( \Delta {\theta}_{m,i} (t) + j \Delta {\theta}_{m,d/2+i} (t) \right), \\
y_{i,2} (t) \triangleq & \frac{\alpha_t}{K}  \sum\limits_{m=1}^{M} \sum\limits_{m'=1, m' \ne m}^{M} \sum\limits_{k=1}^{K} \left( {h}_{m,k,i} (t) \right)^{*} {h}_{m',k, i} (t) \left( \Delta {\theta}_{m',i} (t) + j \Delta {\theta}_{m',d/2+i} (t) \right), \\
y_{i,3} (t) \triangleq & \frac{1}{K} \sum\limits_{m=1}^{M} \sum\limits_{k=1}^{K} \left( {h}_{m,k, i} (t) \right)^{*} z_{k,i} (t),\\
y_{i,4} (t) \triangleq & \frac{\alpha_t}{K}  \sum\limits_{m=1}^{M} \sum\limits_{k=1}^{K} \left( \tilde{h}_{k,i} (t) \right)^{*} {h}_{m,k, i} (t) \left( \Delta {\theta}_{m,i} (t) + j \Delta {\theta}_{m,d/2+i} (t) \right),\\
y_{i,5} (t) \triangleq & \frac{1}{K} \sum\limits_{k=1}^{K} \left( \tilde{h}_{k, i} (t) \right)^{*} z_{k,i} (t).
\end{align}
\end{subequations}
We further define, for $l \in [5]$, 
\begin{align}\label{DeltaHatTheta_l}
\Delta \hat{\theta}_{i, l} \left( t \right) \triangleq \begin{cases} 
\frac{ {\rm{Re}} \left\{ y_{i, l} (t) \right\} }{\alpha_t M \sigma_h^2}, & \mbox{if $1 \le i \le d/2$},\\
\frac{ {\rm{Im}} \left\{ y_{i-d/2, l} (t) \right\} }{\alpha_t M \sigma_h^2}, & \mbox{otherwise},
\end{cases} 
\end{align}
according to which the estimate of the average local updates at the PS can be rewritten as
\begin{align}
\Delta \hat{\theta}_{i} \left( t \right) = \sum\limits_{l=1}^{5} \Delta \hat{\theta}_{i, l} \left( t \right), \quad \mbox{for $i \in [d]$}.    
\end{align}

\subsection{Preliminaries}
Let the optimal solution minimizing the loss function $F(\boldsymbol{\theta})$ be defined as
\begin{align}
\boldsymbol{\theta}^* \triangleq \arg \mathop {\min }\limits_{\boldsymbol{\theta}} F(\boldsymbol{\theta}),  
\end{align}
and we denote the minimum value of the loss function by $F^* = F(\boldsymbol{\theta}^*)$. 
We also denote the minimum value of the local loss function $F_m$ by $F^*_m$, for $m \in [M]$. 
We further define 
\begin{align}
\Gamma \triangleq F^* -  \sum\limits_{m=1}^{M} \frac{B_m}{B} F^*_m,    
\end{align}
where we note that $\Gamma \ge 0$ captures the amount of bias in the data distribution across the devices. 
$\Gamma$ increases with the heterogeneity of data across the devices.

We use the same learning rate across different devices and local iterations during each global iteration, but allow it to change over different global iterations; 
that is, we assume $\eta_m^i(t) = \eta(t)$, $\forall m, i$. Accordingly, we have
\begin{align}\label{ConvSGDDevicem}
\boldsymbol{\theta}_m^{i+1} (t) = \boldsymbol{\theta}_m^i (t) - \eta (t) \nabla F_m \left( \boldsymbol{\theta}_m^i (t), \xi_m^i (t) \right),  \quad \mbox{$i \in [\tau]$}, \mbox{$m \in [M]$}, 
\end{align}
and 
\begin{align}\label{DeltaTheta_m_t_conver}
\boldsymbol{\theta}_m^{i+1} (t) - \boldsymbol{\theta}_m^{1} (t) = - \eta (t) \sum\limits_{l=1}^{i} \nabla F_m \left( \boldsymbol{\theta}_m^l (t), \xi_m^l (t) \right).   
\end{align}

\begin{assumption}\label{AssumpSmoothLoss}
The loss functions $F_1, \dots, F_M$ are all $L$-smooth; that is, $\forall \boldsymbol{v}, \boldsymbol{w} \in \mathbb{R}^d$, 
\begin{align}\label{ConvLSmoothCondit}
F_m(\boldsymbol{v}) - F_m(\boldsymbol{w}) \le \langle \boldsymbol{v} - \boldsymbol{w} , \nabla F_m (\boldsymbol{w}) \rangle + \frac{L}{2} \left\| \boldsymbol{v} - \boldsymbol{w} \right\|^2_2, \quad \forall m \in [M].
\end{align}
\end{assumption}

\begin{assumption}\label{AssumpStrongConvexLoss}
The loss functions $F_1, \dots, F_M$ are all $\mu$-strongly convex; that is, $\forall \boldsymbol{v}, \boldsymbol{w} \in \mathbb{R}^d$, 
\begin{align}\label{ConvMuStConvexCondit}
F_m(\boldsymbol{v}) - F_m(\boldsymbol{w}) \ge \langle \boldsymbol{v} - \boldsymbol{w} , \nabla F_m (\boldsymbol{w}) \rangle + \frac{\mu}{2} \left\| \boldsymbol{v} - \boldsymbol{w} \right\|^2_2, \quad \forall m \in [M].      
\end{align}
\end{assumption}

\begin{assumption}\label{AssumpBoundedVarGradient}
The expected squared $l_2$-norm of the stochastic gradients are bounded; that is,
\begin{align}\label{ConvNorm2Bound}
\mathbb{E}_{\xi} \left [ \left\| \nabla F_m \left( \boldsymbol{\theta}_m^i (t), \xi_m^i (t) \right) \right\|^2_2 \right] \le G^2, \quad \forall i \in [\tau], \forall m \in [M], \; \forall t.      
\end{align}
\end{assumption}

\subsection{Convergence Rate}

Here we provide the convergence rate of the proposed analog FEEL scheme with blind transmitters. The proof is provided in Appendix \ref{A_AppTheorem}.    

\begin{theorem}\label{A_Theoremtheta_thetastar}
Let $0 < \eta(t) \le \min \big\{ 1, \frac{1}{\mu \tau} \big\}$, $\forall t$. We have
\begin{subequations}\label{A_ConvTheoremtheta_thetastarKequalM}
\begin{align}\label{A_ConvTheoremtheta_thetastarKequalM_main}
\mathbb{E} \left[ \left\| \boldsymbol{\theta} (t) - {\boldsymbol{\theta}}^* \right\|_2^2 \right] \le  \Big( \prod\limits_{i=0}^{t-1} A(i) \Big) \left\| {\boldsymbol{\theta}} (0) - {\boldsymbol{\theta}}^* \right\|_2^2 + \sum\limits_{j=0}^{t-1} B(j) \prod\limits_{i=j+1}^{t-1} A(i),  
\end{align}
where 
\begin{align}\label{A_ConvTheoremtheta_thetastarKequalM_A}
A(i) \triangleq & 1 - \mu \eta (i) \left( \tau - \eta(i) (\tau - 1 ) \right),\\ 
B(i) \triangleq &  \frac{\big(1+\frac{\tilde{\sigma}_h^2}{M{\sigma}_h^2}\big) \eta^2(i) \tau^2 G^2}{K} + \frac{\big(1+\frac{\tilde{\sigma}_h^2}{M{\sigma}_h^2}\big)\sigma_z^2d}{2 \alpha_t^2 KM \sigma_h^2}  + \left( 1+ \mu (1- \eta(i)) \right) \eta^2(i) G^2 \frac{\tau (\tau-1)(2\tau-1)}{6} \nonumber\\
& + (\tau^2 + \tau-1) \eta^2(i) G^2  + 2  \eta(i) (\tau - 1) \Gamma, \label{A_ConvTheoremtheta_thetastarKequalM_B}
\end{align}
\end{subequations}
and the expectation is with respect to the stochastic gradient function and the randomness of the underlying wireless channel.
\end{theorem}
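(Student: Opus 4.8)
The plan is to reduce the theorem to a one-step recursion and then unroll it. Concretely, I would first establish that, under the stated step-size condition, the iterates obey
\begin{align*}
\mathbb{E}\big[\|\boldsymbol{\theta}(t+1)-\boldsymbol{\theta}^*\|_2^2\big] \le A(t)\,\mathbb{E}\big[\|\boldsymbol{\theta}(t)-\boldsymbol{\theta}^*\|_2^2\big] + B(t),
\end{align*}
with $A(t)$ and $B(t)$ as in the statement, and then iterate this inequality from $t-1$ down to $0$. The unrolling is a routine induction yielding the product $\prod_{i=0}^{t-1}A(i)$ multiplying the initial gap and the weighted sum $\sum_{j=0}^{t-1}B(j)\prod_{i=j+1}^{t-1}A(i)$ of the additive contributions, so the crux is the one-step bound.

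To obtain it, I would start from the update \eqref{GlobalModelUpdatePerfectCSISubSec} and expand
\begin{align*}
\|\boldsymbol{\theta}(t+1)-\boldsymbol{\theta}^*\|_2^2 = \|\boldsymbol{\theta}(t)-\boldsymbol{\theta}^*\|_2^2 + 2\big\langle \boldsymbol{\theta}(t)-\boldsymbol{\theta}^*,\, \Delta\wh{\boldsymbol{\theta}}(t)\big\rangle + \|\Delta\wh{\boldsymbol{\theta}}(t)\|_2^2.
\end{align*}
Taking the channel expectation first, I would use that the decomposition $\Delta\hat{\theta}_i(t)=\sum_{l=1}^5\Delta\hat{\theta}_{i,l}(t)$ has signal mean $\mathbb{E}[\Delta\wh{\boldsymbol{\theta}}(t)]=\frac{1}{M}\sum_{m=1}^M\Delta\boldsymbol{\theta}_m(t)$, since terms $l=2,\dots,5$ are zero-mean by \eqref{MeanMathFrakh} and the independence of the additive noise and CSI error. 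Hence the inner product collapses onto the signal direction, while $\mathbb{E}[\|\Delta\wh{\boldsymbol{\theta}}(t)\|_2^2]$ splits into $\|\frac{1}{M}\sum_m\Delta\boldsymbol{\theta}_m(t)\|_2^2$ plus the variances of the interference/noise contributions. Summing the per-entry variances of terms $l=1,2,4$ (using $\mathbb{E}[|\mathfrak{h}^n_{m,i}|^2]=(M-1)\sigma_h^4/K$, $\mathbb{E}[|\tilde{\mathfrak{h}}^n_{m,i}|^2]=\tilde\sigma_h^2\sigma_h^2/K$, and the $\sigma_h^4/K$ fluctuation of $\frac{1}{K}\sum_k|h_{m,k,i}|^2$) together with $\|\Delta\boldsymbol{\theta}_m(t)\|_2^2\le\eta^2(t)\tau^2G^2$ from Assumption \ref{AssumpBoundedVarGradient} yields the first $B(t)$ term $\frac{(1+\tilde\sigma_h^2/(M\sigma_h^2))\eta^2\tau^2G^2}{K}$; summing the variances of the noise terms $l=3,5$ over the $d/2$ subchannels gives the second term $\frac{(1+\tilde\sigma_h^2/(M\sigma_h^2))\sigma_z^2 d}{2\alpha_t^2 KM\sigma_h^2}$. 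This is where the $1/K$ hardening enters.

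Next I would take the expectation over the SGD sampling. Writing $\frac{1}{M}\sum_m\Delta\boldsymbol{\theta}_m(t)=-\eta(t)\frac{1}{M}\sum_m\sum_{l=1}^\tau\nabla F_m(\boldsymbol{\theta}_m^l(t),\xi_m^l(t))$ and using unbiasedness \eqref{AverageStochGradientEst}, the inner product becomes $-2\eta(t)\sum_l\langle\boldsymbol{\theta}(t)-\boldsymbol{\theta}^*,\frac{1}{M}\sum_m\nabla F_m(\boldsymbol{\theta}_m^l(t))\rangle$. I would split $\boldsymbol{\theta}(t)-\boldsymbol{\theta}^*=(\boldsymbol{\theta}(t)-\boldsymbol{\theta}_m^l(t))+(\boldsymbol{\theta}_m^l(t)-\boldsymbol{\theta}^*)$, apply $\mu$-strong convexity (Assumption \ref{AssumpStrongConvexLoss}) to the second piece to extract the contraction $-\mu\|\boldsymbol{\theta}_m^l(t)-\boldsymbol{\theta}^*\|^2$ and the function-gap terms, and control the drift $\boldsymbol{\theta}(t)-\boldsymbol{\theta}_m^l(t)=\eta(t)\sum_{l'<l}\nabla F_m(\boldsymbol{\theta}_m^{l'}(t),\xi_m^{l'}(t))$ via Cauchy--Schwarz and Assumption \ref{AssumpBoundedVarGradient}, which produces $\sum_{l=1}^\tau(l-1)^2=\frac{\tau(\tau-1)(2\tau-1)}{6}$ and hence the third $B(t)$ term. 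The residual gaps $F_m(\boldsymbol{\theta}^*)-F_m^*$ aggregate into $\Gamma$, giving the heterogeneity term $2\eta(t)(\tau-1)\Gamma$, while bounding $\|\frac{1}{M}\sum_m\Delta\boldsymbol{\theta}_m(t)\|^2$ supplies the remaining $(\tau^2+\tau-1)\eta^2(t)G^2$ term; collecting the $-\mu\eta(t)\tau$ and $+\mu\eta^2(t)(\tau-1)$ coefficients of $\|\boldsymbol{\theta}(t)-\boldsymbol{\theta}^*\|^2$ then produces the contraction factor $A(t)=1-\mu\eta(t)(\tau-\eta(t)(\tau-1))$, the condition $\eta(t)\le\min\{1,1/(\mu\tau)\}$ being exactly what keeps $A(t)$ below one and the cross-term bounds valid.

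I expect the main obstacle to be this last step: simultaneously extracting the strong-convexity contraction, controlling the accumulated local drift over the $\tau$ inner iterations, and folding the per-device function gaps into $\Gamma$ with the precise constants appearing in $A(t)$ and $B(t)$. The bookkeeping of which terms contribute to the coefficient of $\|\boldsymbol{\theta}(t)-\boldsymbol{\theta}^*\|_2^2$ versus the additive $B(t)$, and ensuring the inequalities remain tight enough that the step-size condition guarantees the contraction, is the delicate part; by contrast the channel-expectation variance computations and the final unrolling are comparatively mechanical.
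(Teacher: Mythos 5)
Your proposal is correct and takes essentially the same route as the paper's proof: your direct expansion of $\|\boldsymbol{\theta}(t+1)-\boldsymbol{\theta}^*\|_2^2$ combined with channel-unbiasedness, $\mathbb{E}\big[\Delta\wh{\boldsymbol{\theta}}(t)\big]=\Delta\boldsymbol{\theta}(t)$, is exactly the paper's decomposition around the auxiliary noiseless iterate $\boldsymbol{\upsilon}(t+1)=\boldsymbol{\theta}(t)+\Delta\boldsymbol{\theta}(t)$ (its Lemmas 1--3), and your variance bookkeeping for the five components $\Delta\hat{\theta}_{i,l}(t)$, the strong-convexity/drift analysis with the $\frac{\tau(\tau-1)(2\tau-1)}{6}$ sum and the $\Gamma$ aggregation, and the final unrolled recursion all mirror the paper's Appendices B--D step for step. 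The only cosmetic difference is that the paper makes the bias--variance orthogonality explicit via $\boldsymbol{\upsilon}(t+1)$ and a vanishing cross-term lemma, whereas you leave it implicit.
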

\begin{proof}
See Appendix \ref{A_AppTheorem}. 
\end{proof}

\begin{corollary}\label{CorollLossGap}
Let $0 < \eta(t) \le \min \big\{ 1, \frac{1}{\mu \tau} \big\}$, $\forall t$. Given a total number of $T$ global iterations, the $L$-smoothness of loss function $F(\cdot)$ results in 
\begin{align}\label{A_ConvF_FstarKequalM}
\mathbb{E} \left[ F( \boldsymbol{\theta} (T)) \right] - F^* \le & \frac{L}{2} \mathbb{E} \left[ \left\| \boldsymbol{\theta} (T) - {\boldsymbol{\theta}}^* \right\|_2^2 \right] \nonumber \\
\le & \frac{L}{2} \Big( \prod\limits_{i=0}^{T-1} A(i) \Big) \left\| {\boldsymbol{\theta}} (0) - {\boldsymbol{\theta}}^* \right\|_2^2 + \frac{L}{2} \sum\limits_{j=0}^{T-1} B(j) \prod\limits_{i=j+1}^{T-1} A (i), 
\end{align}
where the last inequality follows from \eqref{A_ConvTheoremtheta_thetastarKequalM_main}. 
\end{corollary}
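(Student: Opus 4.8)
The plan is to derive the two inequalities in \eqref{A_ConvF_FstarKequalM} in sequence: the first from the $L$-smoothness of the aggregate objective $F$ evaluated at the optimum $\boldsymbol{\theta}^*$, and the second by directly substituting the per-iteration bound of Theorem \ref{A_Theoremtheta_thetastar} at $t=T$. Since the hypothesis $0<\eta(t)\le\min\{1,1/(\mu\tau)\}$ of the corollary coincides with that of the theorem, the theorem is applicable verbatim.

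First I would justify invoking smoothness of $F$ itself: although Assumption \ref{AssumpSmoothLoss} posits $L$-smoothness only for the local losses $F_1,\dots,F_M$, the global objective $F=\sum_{m=1}^M (B_m/B)F_m$ of \eqref{GenEmpLossFunc} is a convex combination of them and hence inherits $L$-smoothness. Concretely, applying \eqref{ConvLSmoothCondit} to each $F_m$ with $\boldsymbol{v}=\boldsymbol{\theta}(T)$ and $\boldsymbol{w}=\boldsymbol{\theta}^*$ and taking the same convex combination yields $F(\boldsymbol{\theta}(T))-F(\boldsymbol{\theta}^*)\le \langle \boldsymbol{\theta}(T)-\boldsymbol{\theta}^*,\nabla F(\boldsymbol{\theta}^*)\rangle+\frac{L}{2}\|\boldsymbol{\theta}(T)-\boldsymbol{\theta}^*\|_2^2$.

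The key simplification is that $\boldsymbol{\theta}^*$ is the minimizer of $F$. By Assumption \ref{AssumpStrongConvexLoss} the aggregate $F$ is $\mu$-strongly convex (again by convex combination) and differentiable by smoothness, so its unconstrained minimizer is a stationary point and $\nabla F(\boldsymbol{\theta}^*)=\boldsymbol{0}$. This annihilates the inner-product term and leaves $F(\boldsymbol{\theta}(T))-F^*\le \frac{L}{2}\|\boldsymbol{\theta}(T)-\boldsymbol{\theta}^*\|_2^2$ pointwise; taking expectations over the stochastic gradients and the channel randomness and using linearity gives the first inequality $\mathbb{E}[F(\boldsymbol{\theta}(T))]-F^*\le \frac{L}{2}\mathbb{E}[\|\boldsymbol{\theta}(T)-\boldsymbol{\theta}^*\|_2^2]$.

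Finally I would invoke Theorem \ref{A_Theoremtheta_thetastar} at $t=T$ and multiply \eqref{A_ConvTheoremtheta_thetastarKequalM_main} through by $L/2$ to obtain the second inequality. At the level of the corollary there is essentially no obstacle; the entire substantive burden rests on Theorem \ref{A_Theoremtheta_thetastar}, whose proof must establish the one-step contraction $\mathbb{E}[\|\boldsymbol{\theta}(t+1)-\boldsymbol{\theta}^*\|_2^2]\le A(t)\,\mathbb{E}[\|\boldsymbol{\theta}(t)-\boldsymbol{\theta}^*\|_2^2]+B(t)$ and unroll it. The hard part there is controlling the drift of the $\tau$ local iterates relative to $\boldsymbol{\theta}(t)$, together with the heterogeneity gap $\Gamma$ and the channel-induced variance terms scaling as $1/K$ that populate $B(t)$.
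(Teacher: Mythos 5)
Your proposal is correct and follows essentially the same route as the paper: apply $L$-smoothness of $F$ at $\boldsymbol{\theta}^*$ (where $\nabla F(\boldsymbol{\theta}^*) = \boldsymbol{0}$) to get the first inequality, then substitute the bound of Theorem \ref{A_Theoremtheta_thetastar} at $t = T$ scaled by $L/2$ for the second. The only difference is that you explicitly spell out details the paper leaves implicit, namely that $F$ inherits $L$-smoothness and $\mu$-strong convexity as a convex combination of the $F_m$ and that stationarity of $\boldsymbol{\theta}^*$ kills the inner-product term.
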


\begin{remark}
The second term in $B(i)$, $\frac{\left(1+ \tilde{\sigma}_h^2/(M{\sigma}_h^2) \right)\sigma_z^2d}{2 \alpha_t^2 K M \sigma_h^2}$, which is the result of the additive noise over the MAC, is not scaled with $\eta(i)$.
Therefore, even for a decreasing learning rate $\eta(t)$, i.e., $\mathop {\lim }\limits_{t \to \infty } \eta(t) = 0$, we have $\mathop {\lim }\limits_{t \to \infty } B(t) = \frac{\left(1+ \tilde{\sigma}_h^2/(M{\sigma}_h^2) \right)\sigma_z^2d}{2 \alpha_t^2 K M \sigma_h^2} \ne 0$, which shows that $\mathop {\lim }\limits_{t \to \infty } \mathbb{E} \left[ F( \boldsymbol{\theta} (t)) \right] - F^* \ne 0$. 
However, we note that the destructive effect of this term in the convergence rate reduces with the number of PS antennas, $K$.
We further remark that $\frac{\tilde{\sigma}_h^2 \eta^2(i) \tau^2 G^2}{{\sigma}_h^2KM} + \frac{\tilde{\sigma}_h^2 \sigma_z^2d}{2 \alpha_t^2 K M^2 \sigma_h^4}$ captures the impact of the imperfect CSI at the PS, which also reduces with $K$.  
\end{remark}

\begin{corollary}
Consider a simplified setting $\eta(t) = \eta$, $\forall t$, and $\tau = 1$. 
Accordingly, Corollary \ref{CorollLossGap} can be simplified as
\begin{align}
\mathbb{E} \left[ F( \boldsymbol{\theta} (T)) \right] - F^* & \le \frac{L}{2} (1 - \mu \eta)^T \left\| \boldsymbol{\theta} (0) - {\boldsymbol{\theta}}^* \right\|_2^2 \nonumber\\
& + \frac{L}{2 \mu \eta} \Big( \Big( 1+\frac{\tilde{\sigma}_h^2}{M{\sigma}_h^2} \Big) \Big( \frac{\eta^2 G^2}{K} + \frac{\sigma_z^2d}{2 \alpha_t^2 M K \sigma_h^2} \Big) + \eta^2 G^2 \Big) \big( 1- (1-\mu \eta)^T \big).    
\end{align}
\end{corollary}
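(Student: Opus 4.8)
The plan is to obtain this statement directly from Corollary~\ref{CorollLossGap} by substituting the simplified parameters $\tau=1$ and $\eta(i)=\eta$ into the definitions of $A(i)$ and $B(i)$ in \eqref{A_ConvTheoremtheta_thetastarKequalM_A}--\eqref{A_ConvTheoremtheta_thetastarKequalM_B}, and then evaluating the leading product and the weighted sum in \eqref{A_ConvF_FstarKequalM} in closed form. Since Corollary~\ref{CorollLossGap} already supplies the bound $\frac{L}{2}\big(\prod_{i=0}^{T-1}A(i)\big)\|\boldsymbol{\theta}(0)-\boldsymbol{\theta}^*\|_2^2+\frac{L}{2}\sum_{j=0}^{T-1}B(j)\prod_{i=j+1}^{T-1}A(i)$, no new smoothness or probabilistic argument is required; the work is purely a specialization and an exact summation.

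First I would evaluate $A(i)$. Setting $\tau=1$ makes the inner factor $\tau-\eta(i)(\tau-1)$ collapse to $1$, so $A(i)=1-\mu\eta$, which is independent of $i$. Hence the leading product telescopes to $\prod_{i=0}^{T-1}A(i)=(1-\mu\eta)^T$, which reproduces the first term of the claimed bound once multiplied by $\frac{L}{2}\|\boldsymbol{\theta}(0)-\boldsymbol{\theta}^*\|_2^2$.

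Next I would specialize $B(i)$, noting that in this simplified setting the power scaling is likewise held fixed so that $B(i)$ becomes a constant $B$. With $\tau=1$, the third summand in \eqref{A_ConvTheoremtheta_thetastarKequalM_B} vanishes because its factor $\tau(\tau-1)(2\tau-1)/6$ is zero, and the last summand vanishes through its factor $\tau-1$, so the heterogeneity penalty $\Gamma$ drops out entirely. The fourth summand $(\tau^2+\tau-1)\eta^2 G^2$ reduces to $\eta^2 G^2$, while the first two summands retain their shared prefactor $\big(1+\tilde{\sigma}_h^2/(M\sigma_h^2)\big)$. Collecting these gives $B=\big(1+\tilde{\sigma}_h^2/(M\sigma_h^2)\big)\big(\eta^2 G^2/K+\sigma_z^2 d/(2\alpha_t^2 KM\sigma_h^2)\big)+\eta^2 G^2$, matching the bracketed quantity in the statement.

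Finally I would evaluate the weighted sum. Since $A(i)=1-\mu\eta$ and $B(j)=B$ are now constant, $\prod_{i=j+1}^{T-1}A(i)=(1-\mu\eta)^{T-1-j}$, and under the index reversal $k=T-1-j$ the sum becomes the geometric series $B\sum_{k=0}^{T-1}(1-\mu\eta)^k=B\,\frac{1-(1-\mu\eta)^T}{\mu\eta}$; multiplying by $\frac{L}{2}$ yields precisely the second term. There is no deep obstacle here—the corollary is a direct consequence of the preceding loss-gap bound—so the only points needing care are verifying that the $\tau=1$ substitution annihilates exactly the three $(\tau-1)$-weighted contributions and performing the index reversal in the geometric sum correctly; both are elementary, and the result follows at once.
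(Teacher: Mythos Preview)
Your proposal is correct and is exactly the intended derivation: the paper states this corollary without proof, as it follows immediately from Corollary~\ref{CorollLossGap} by the substitution $\tau=1$, $\eta(i)=\eta$ and the resulting geometric summation you carry out. The only cosmetic point is that the paper leaves $\alpha_t$ in the bound rather than explicitly declaring it fixed, but your treatment of $B(i)$ as a constant is precisely what is needed to obtain the stated closed form.
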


\begin{table}[t!]
\caption{CNN architecture for image classification on MNIST and CIFAR-10.}
\centering
\begin{tabular}{|c||c|}
\hline
MNIST & CIFAR-10 \\ \specialrule{.2em}{.01em}{.01em}
\multirow{3.75}{*}{\begin{tabular}[c]{@{}c@{}}5 $\times$ 5 convolutional layer, 32 channels, \\ ReLU activation, same padding\end{tabular}}        & \begin{tabular}[c]{@{}c@{}}3 $\times$ 3 convolutional layer, 32 channels, \\ ReLU activation, same padding\end{tabular}  \\ \cline{2-2} & \begin{tabular}[c]{@{}c@{}}3 $\times$ 3 convolutional layer, 32 channels, \\ ReLU activation, same padding\end{tabular}  \\ \cline{2-2} & 2 $\times$ 2 max pooling \\ \hline
\multirow{3}{*}{2 $\times$ 2 max pooling} & dropout with probability 0.2 \\ \cline{2-2} & \begin{tabular}[c]{@{}c@{}}3 $\times$ 3 convolutional layer, 64 channels, \\ ReLU activation, same padding\end{tabular}  \\ \hline
\multirow{3}{*}{\begin{tabular}[c]{@{}c@{}}5 $\times$ 5 convolutional layer, 64 channels, \\ ReLU activation, same padding\end{tabular}}        & \begin{tabular}[c]{@{}c@{}}3 $\times$ 3 convolutional layer, 64 channels, \\ ReLU activation, same padding\end{tabular}  \\ \cline{2-2} & 2 $\times$ 2 max pooling \\ \cline{2-2} & dropout with probability 0.3 \\ \hline
\multirow{3}{*}{2 $\times$ 2 max pooling} & \begin{tabular}[c]{@{}c@{}}3 $\times$ 3 convolutional layer, 128 channels, \\ ReLU activation, same padding\end{tabular} \\ \cline{2-2} & \begin{tabular}[c]{@{}c@{}}3 $\times$ 3 convolutional layer, 128 channels, \\ ReLU activation, same padding\end{tabular} \\ \hline
\multirow{2}{*}{\begin{tabular}[c]{@{}c@{}}fully connected layer with 1024 units,\\ ReLU activation, dropout with probability 0.2\end{tabular}} & 2 $\times$ 2 max pooling \\ \cline{2-2} & dropout with probability 0.4  \\ \hline
\multicolumn{2}{|c|}{softmax output layer with 10 units}         \\ \hline
\end{tabular}
\label{TableCNNArchit}
\end{table}

\section{Numerical Experiments}\label{SecExperiments}
Here we evaluate the performance of the proposed analog FEEL algorithm with no CSI available at the wireless devices. 
We are particularly interested in investigating the impact of the number of PS antennas on the performance. 
We perform image classification on MNIST \cite{LeCunMNIST} and CIFAR-10 datasets \cite{Cifar10DatasetKrizhevsky} using ADAM optimizer \cite{ADAMDC}. 
We train different convolutional neural networks (CNNs) whose architectures are described in Table \ref{TableCNNArchit}.   
The performance is measured as the accuracy with respect to the test dataset, known as the \textit{test accuracy}, versus the global iteration count, $t$. 

We consider two data distribution scenarios across the devices. 
In the non-iid data distribution scenario, we split the training data samples with the same label/class to $M/10$ disjoint groups (assuming that $M$ is divisible by $10$). 
Thus, having 10 labels/classes for both MNIST and CIFAR-10 datasets, this results in $M$ disjoint training datasets, each consisting of samples with the same label/class, and we assign each group to a distinct device.
On the other hand, in the iid data distribution scenario, we randomly split the training dataset into $M$ disjoint datasets, and assign each of them to a distinct device.
We set the local mini-batch sample size to $\left| \xi_m^i (t) \right| = 500$, $\forall m, i, t$, for each experiment.

We consider $M=20$ wireless devices in the system.  
For simplicity, we assume that the $s$ channel gains associated with each OFDM symbol from each device to each PS antenna are iid, and $\sigma_h^2 = 1$.
For each experiment, we measure the test accuracy for $T=400$ global iterations, and we set the power allocation factor at the devices to $\alpha_t = 1 + 10^{-3}t$, $t \in [T]$. 
We further assume that $s = d/2$ resulting in $N=1$. 
We note that, for a fixed power allocation factor $\alpha_t$, $\forall t$, the value of $s$ does not have any impact on the accuracy of the proposed analog FEEL scheme; 
instead, any change in $s$ scales the average transmit power, whose value is proportional to $N$.
For the experiments, we assume that the CSI estimation error at the PS, i.e., $\tilde{h}^n_{k,i} (t)$, is distributed according to $\mathcal{C N} \left( 0, \tilde{\sigma}_h^2 \right)$, $\forall k, i, n, t$.

For numerical comparison, we also consider a benchmark, in which the PS receives the average of the local model updates ${\Delta \boldsymbol{\theta}}(t) = \frac{1}{M} \sum\nolimits_{m=1}^{M} {\Delta \boldsymbol{\theta}}_m \left( t \right)$ from the devices in an error-free manner, and updates the global model according to this noiseless observation at each iteration. 
We refer to this as the \textit{error-free shared link} scenario, and its accuracy can serve as an upper bound on the performance of the proposed analog FEEL scheme.  

\begin{figure}[t!]
\centering
\begin{subfigure}{.5\textwidth}
  \centering
  \includegraphics[scale=0.55,trim={16pt 5pt 43pt 31pt},clip]{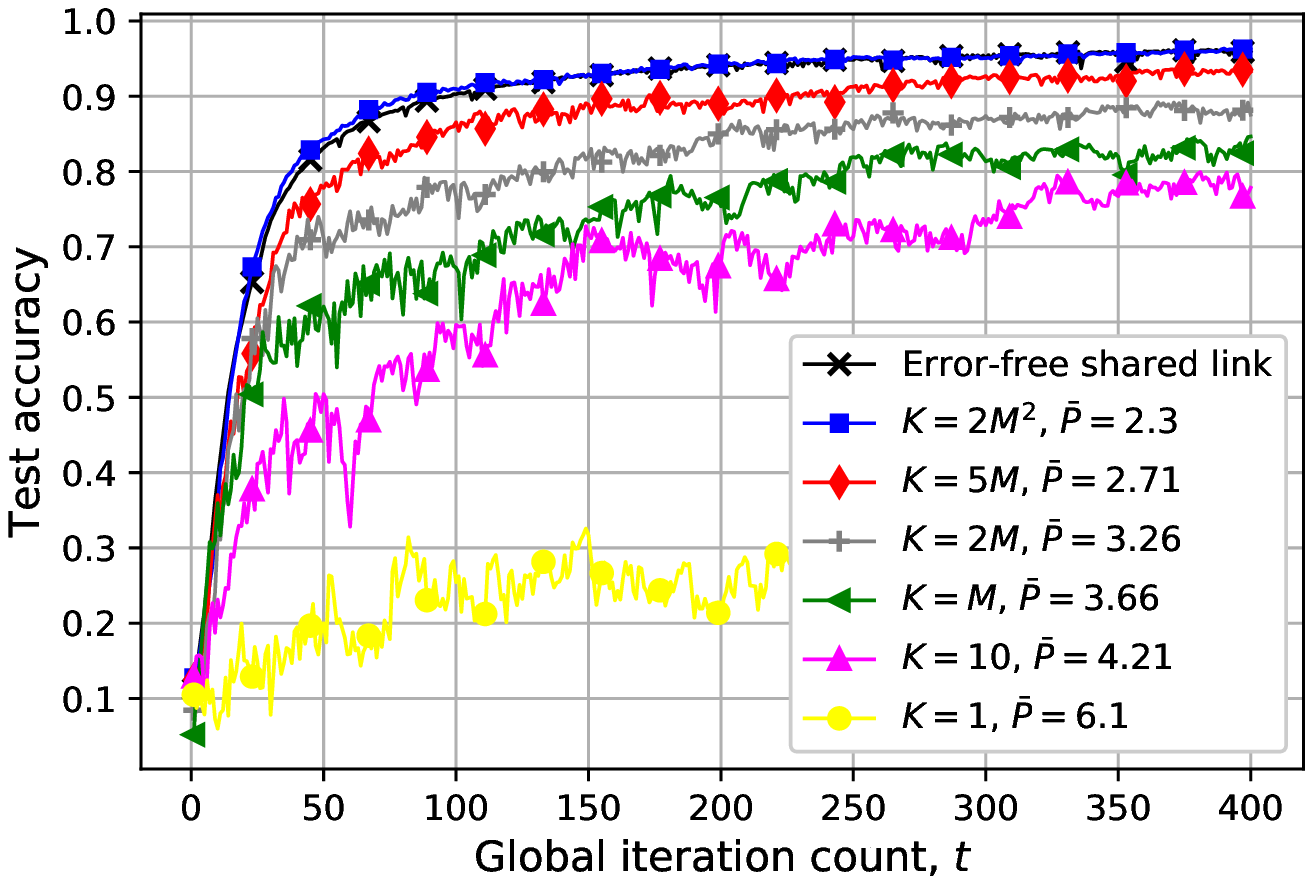}\vspace{0cm}
  \caption{Perfect CSI at PS, $\big(\sigma^2_z, \tilde{\sigma}_h^2\big) = (10, 0)$}
  \label{Fig_nonIID_CSI_Lownoise}
\end{subfigure}%
\begin{subfigure}{.5\textwidth}
  \centering
  \includegraphics[scale=0.55,trim={16pt 5pt 43pt 31pt},clip]{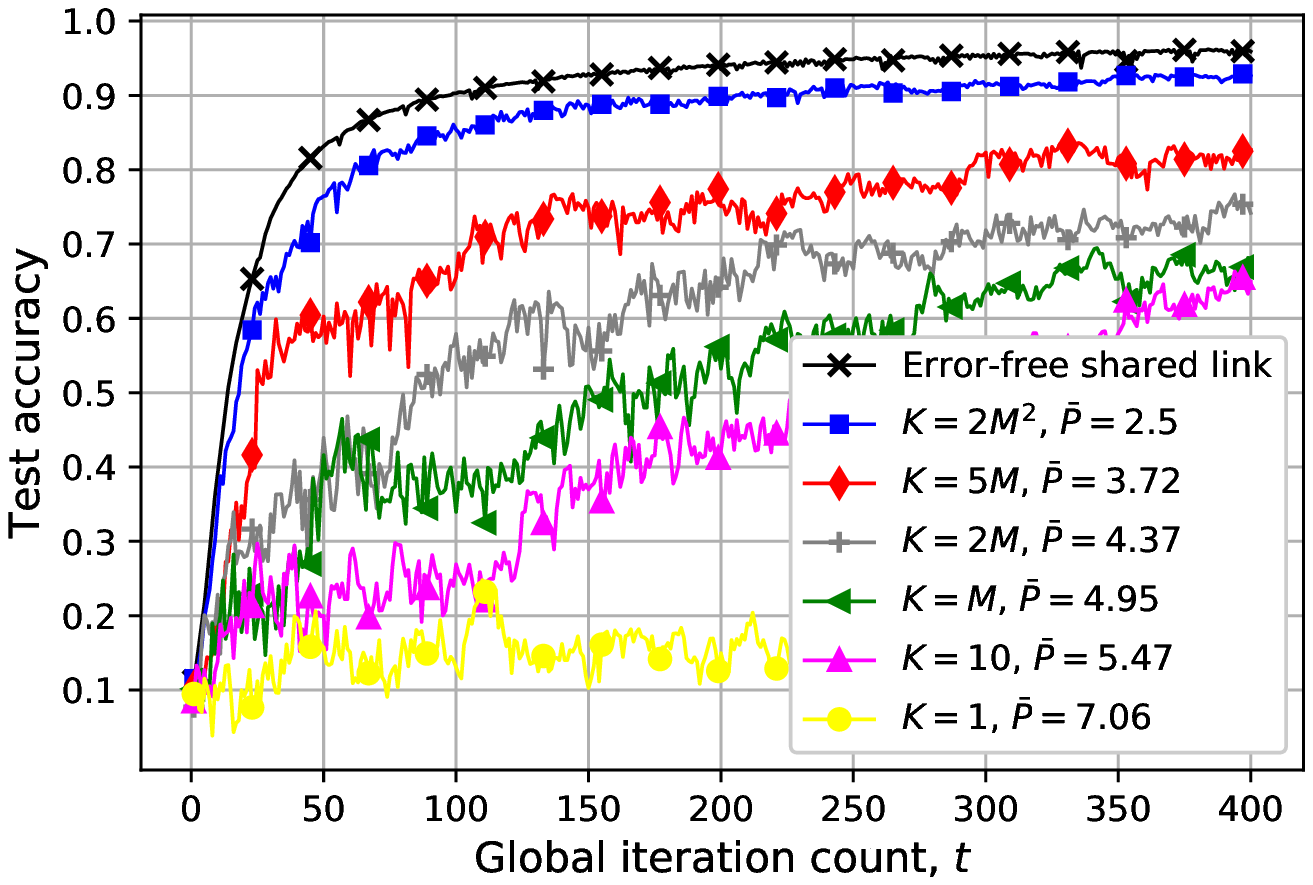}\vspace{0cm}
  \caption{Perfect CSI at PS, $\big(\sigma^2_z, \tilde{\sigma}_h^2\big) = (50, 0)$}
  \label{Fig_nonIID_CSI_Highnoise}
\end{subfigure}\\\vspace{.5cm}
\begin{subfigure}{.49\textwidth}
  \centering
  \includegraphics[scale=0.55,trim={16pt 5pt 43pt 31pt},clip]{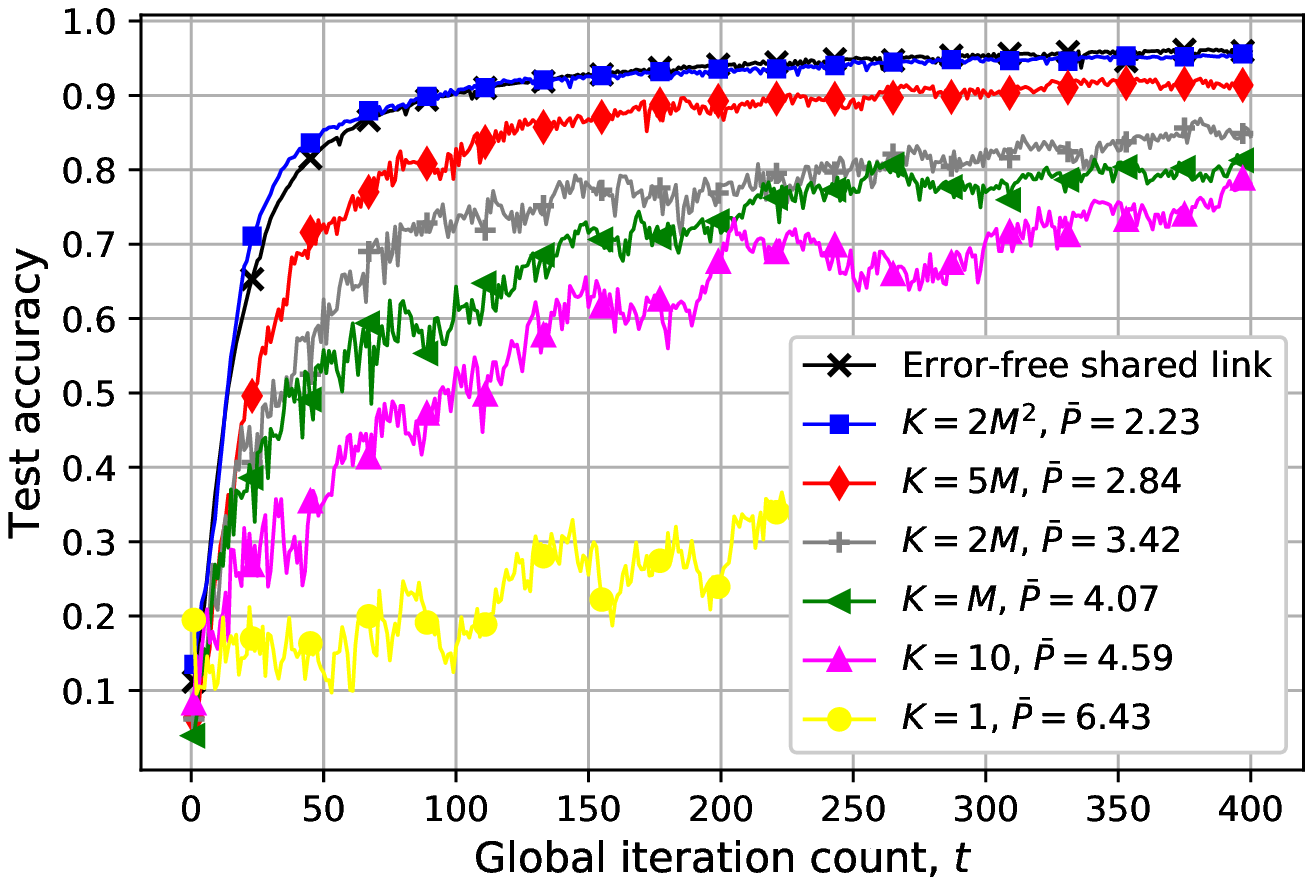}\vspace{0cm}
  \caption{Imperfect CSI at PS, $\big(\sigma^2_z, \tilde{\sigma}_h^2\big) = (10, M{\sigma}_h^2/2)$}
  \label{Fig_nonIID_imCSI_Highnoise}
\end{subfigure}
\begin{subfigure}{.49\textwidth}
  \centering
  \includegraphics[scale=0.55,trim={16pt 5pt 43pt 31pt},clip]{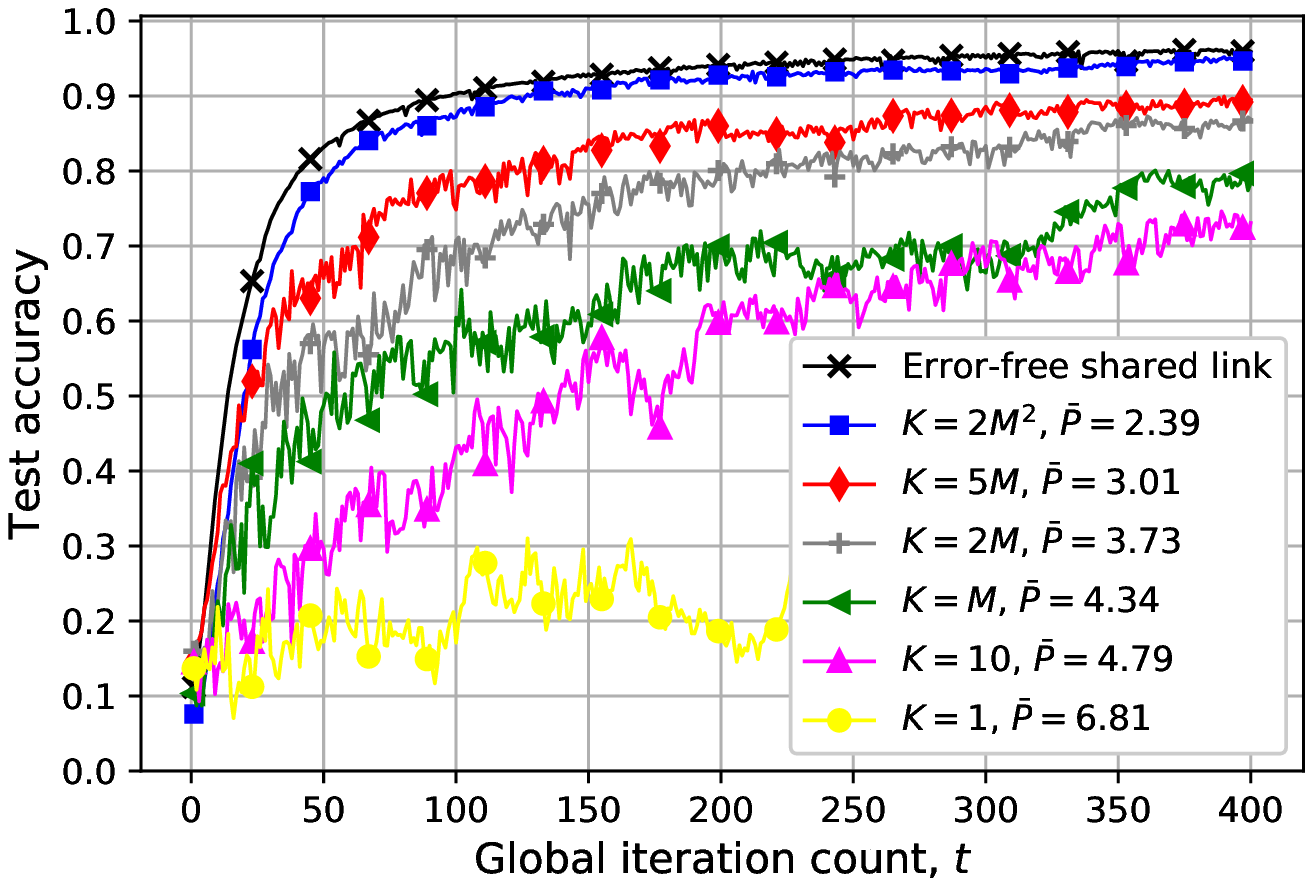}\vspace{0cm}
  \caption{Imperfect CSI at PS, $\big(\sigma^2_z, \tilde{\sigma}_h^2\big) = (10, M{\sigma}_h^2)$}
  \label{Fig_nonIID_imCSI_Highnoise_2}
\end{subfigure}
\caption{Test accuracy of the proposed analog FEEL algorithm for non-iid MNIST data with different number of antennas $K \in \{ 1,10,M,2M,5M,2M^2 \}$ for $M=20$, $\sigma_h^2 = 1$ $\tau = 3$, and $\left| \xi_m^i (t) \right| = 500$, $\forall m, i, t$.}
\label{Fig_nonIID_CSI_imCSI}
\end{figure}

In Fig. \ref{Fig_nonIID_CSI_imCSI} we illustrate the performance of the proposed analog FEEL scheme with no CSIT for increasing number of PS antennas, $K \in \{ 1,10,M,2M,5M,2M^2 \}$, with non-iid MNIST data distributed across the devices, and number of local iterations $\tau = 3$. 
In Figs. \ref{Fig_nonIID_CSI_Lownoise} and \ref{Fig_nonIID_CSI_Highnoise} we assume perfect CSI at the PS, and investigate the performance for an increase in the noise variance from $\sigma_z^2 = 10$ to $\sigma_z^2 = 50$.
We also include the performance of the error-free shared link scenario. 
As can be seen, both the final test accuracy and the convergence speed increases with the number of PS antennas, with the improvement significantly more noticeable when the noise level is higher. 
This is due to the fact that increasing $K$ mitigates the effects of both the interference and noise terms, inferred from \eqref{ReceivedVectorPSScheCombAntennasReWrith}. 
Thus, the advantage of having more PS antennas is more pronounced when the channel is noisier. 
For example, for $\sigma_z^2 = 10$, the proposed scheme with $K = 2M^2$ antennas at the PS and average power $\bar{P}=2.3$ performs as well as the error-free shared link scenario.
On the other hand, further reducing the average signal-to-noise ratio $\bar{P}/{\sigma_{z}^2}$ by setting $\sigma_z^2 = 50$ results in a small performance gap between the error-free shared link scenario and the proposed scheme with $K = 2M^2$.
These results illustrate the success of the proposed scheme with sufficient number of PS antennas in mitigating the noise term even when the average signal-to-noise ratio $\bar{P}/{\sigma_{z}^2}$ is as small as $0.05$.
Surprisingly, the accuracy improves drastically even with a few antennas at the PS, e.g., $K=10$. 
We note that, with all the other parameters fixed, the required average transmit power reduces with $K$, which verifies a faster convergence rate with higher $K$ resulting in a faster reduction in the empirical variances of the local model updates over time. 
The same observation is made by reducing $\sigma_z^2$ from $50$ to $10$ while all the other parameters are fixed.

\begin{figure}[t!]
\centering
\begin{subfigure}{.5\textwidth}
  \centering
  \includegraphics[scale=0.55,trim={16pt 5pt 43pt 31pt},clip]{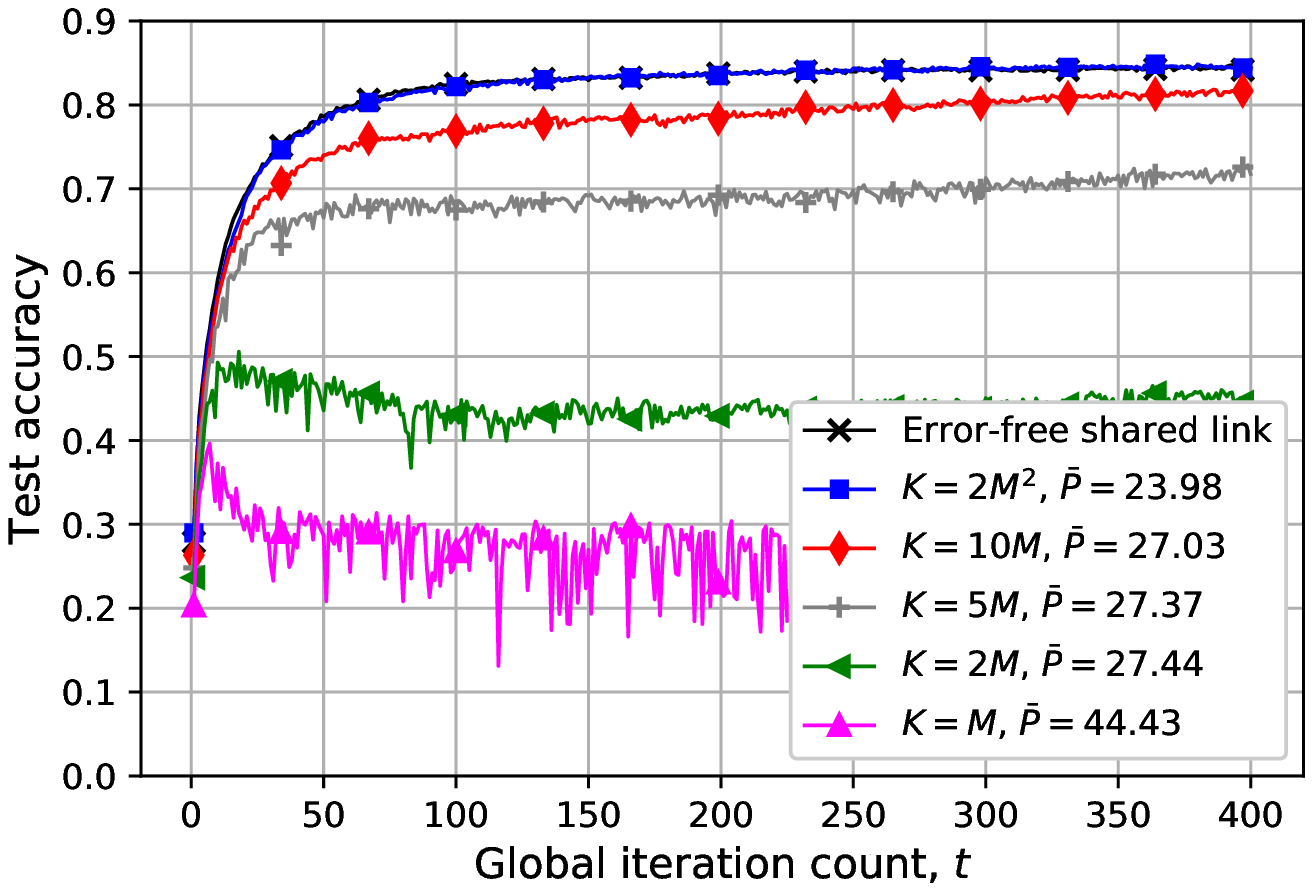}
  \caption{Perfect CSI at PS, $\tilde{\sigma}_h^2 = 0$}
  \label{Fig_IID_CSI_Lownoise}
\end{subfigure}%
\begin{subfigure}{.5\textwidth}
  \centering
  \includegraphics[scale=0.55,trim={16pt 5pt 43pt 31pt},clip]{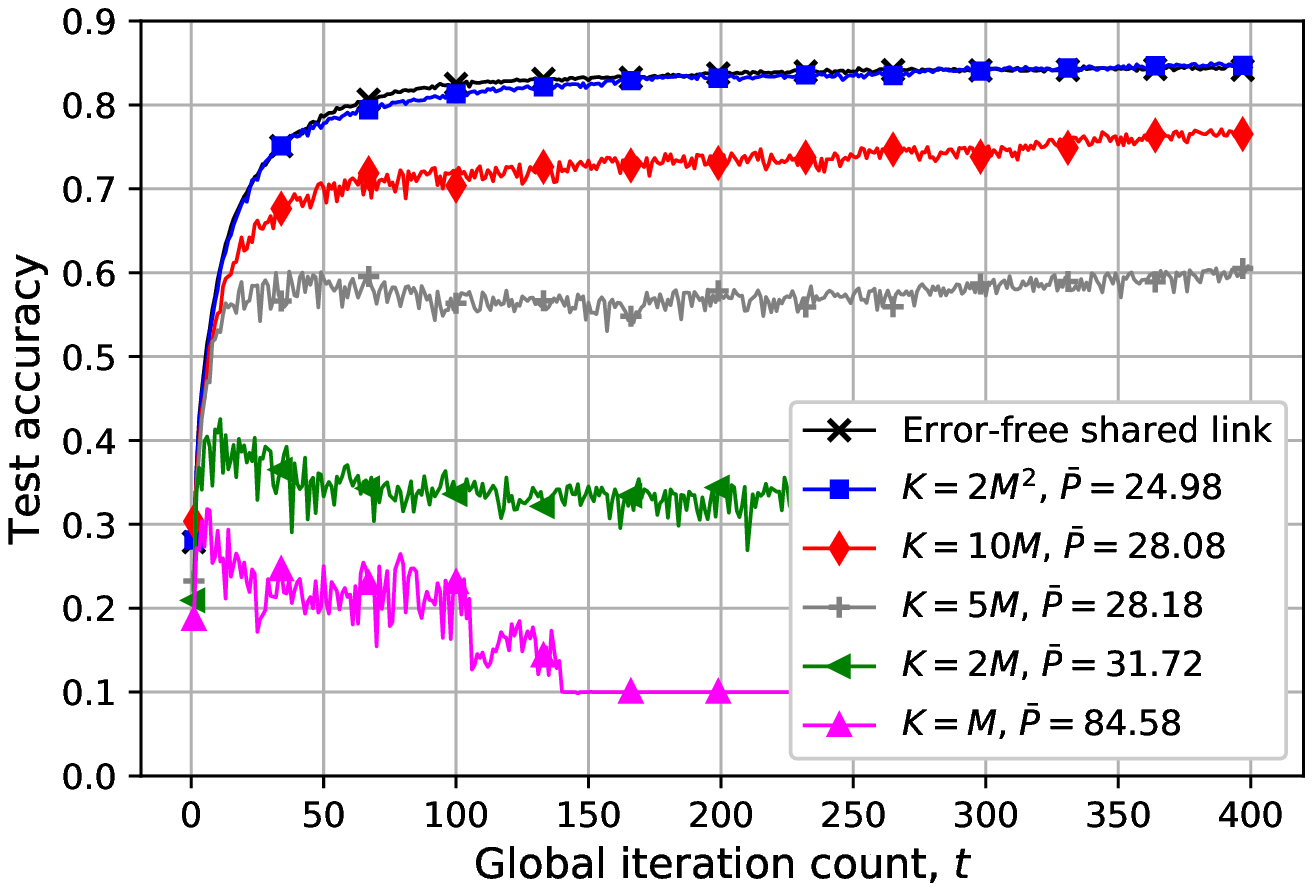}
  \caption{Imperfect CSI at PS, $\tilde{\sigma}_h^2 = M \sigma_h^2/2$}
  \label{Fig_IID_imCSI_Highnoise}
\end{subfigure}
\caption{Test accuracy of the proposed analog FEEL algorithm for iid CIFAR-10 data with different number of antennas $K \in \{ M,2M, 5M, 10M, 2M^2 \}$ for $M=20$, $\sigma_h^2 = \sigma_z^2 = 1$, $\tau = 5$, and $\left| \xi_m^i (t) \right| = 500$, $\forall m, i, t$.}
\label{Fig_IID_imCSI}
\end{figure}

Similar observations can be made in Figs. \ref{Fig_nonIID_imCSI_Highnoise} and \ref{Fig_nonIID_imCSI_Highnoise_2} considering imperfect CSI at the PS with $\tilde{\sigma}_h^2 = M{\sigma}_h^2/2$ and $\tilde{\sigma}_h^2 = M{\sigma}_h^2$, respectively. 
We observe the additional benefits of a large number of PS antennas in mitigating the adverse effects of imperfect CSI at the PS.
Comparing the two figures, we can see that the benefits are more highlighted when the variance of the CSI estimation error is larger. 
Even when the variance of the CSI error is the same as that of the sum of channel gains from the devices, i.e., when $\tilde{\sigma}_h^2 = M {\sigma}_h^2$, the proposed scheme with a sufficient number of PS antennas performs almost as well as the error-free shared link scenario.
Therefore, the proposed analog FEEL scheme can alleviate the negative effects of both the lack of CSIT and the imperfect CSI at the PS.

In Fig. \ref{Fig_IID_imCSI}, we investigate the performance of the proposed analog FEEL scheme with no CSIT for the more challenging CIFAR-10 dataset, distributed in an iid manner across the devices, considering different $K$ values, $K \in \{ M,2M, 5M, 10M, 2M^2 \}$, with $\tau = 5$ local iteration steps.  
Similarly to Fig. \ref{Fig_nonIID_CSI_imCSI}, we observe that the performance of the proposed scheme improves significantly with the number of PS antennas, and the improvement is more pronounced when the CSI is imperfect at the PS. 
In both cases under consideration, the proposed scheme with $K=2M^2$ antennas at the PS provides a performance as well as that for the benchmark error-free shared link scenario. 
However, the average required power in the experiments with CIFAR-10 dataset is higher than that for MNIST; 
this is mainly because of the larger network architecture required to reach reasonable accuracy levels for CIFAR-10, which leads to the gradients with higher norms, and consequently, resulting in higher empirical variance for the local model updates.
Furthermore, the gap between the performance of the proposed analog FEEL scheme for different $K$ values is larger than that observed in Fig. \ref{Fig_nonIID_CSI_imCSI}, which indicates that the benefits of increasing $K$ is even more when training larger models for more challenging learning tasks. 

\begin{figure}[t!]
\centering
\begin{subfigure}{.5\textwidth}
  \centering
  \includegraphics[scale=0.55,trim={16pt 5pt 43pt 26pt},clip]{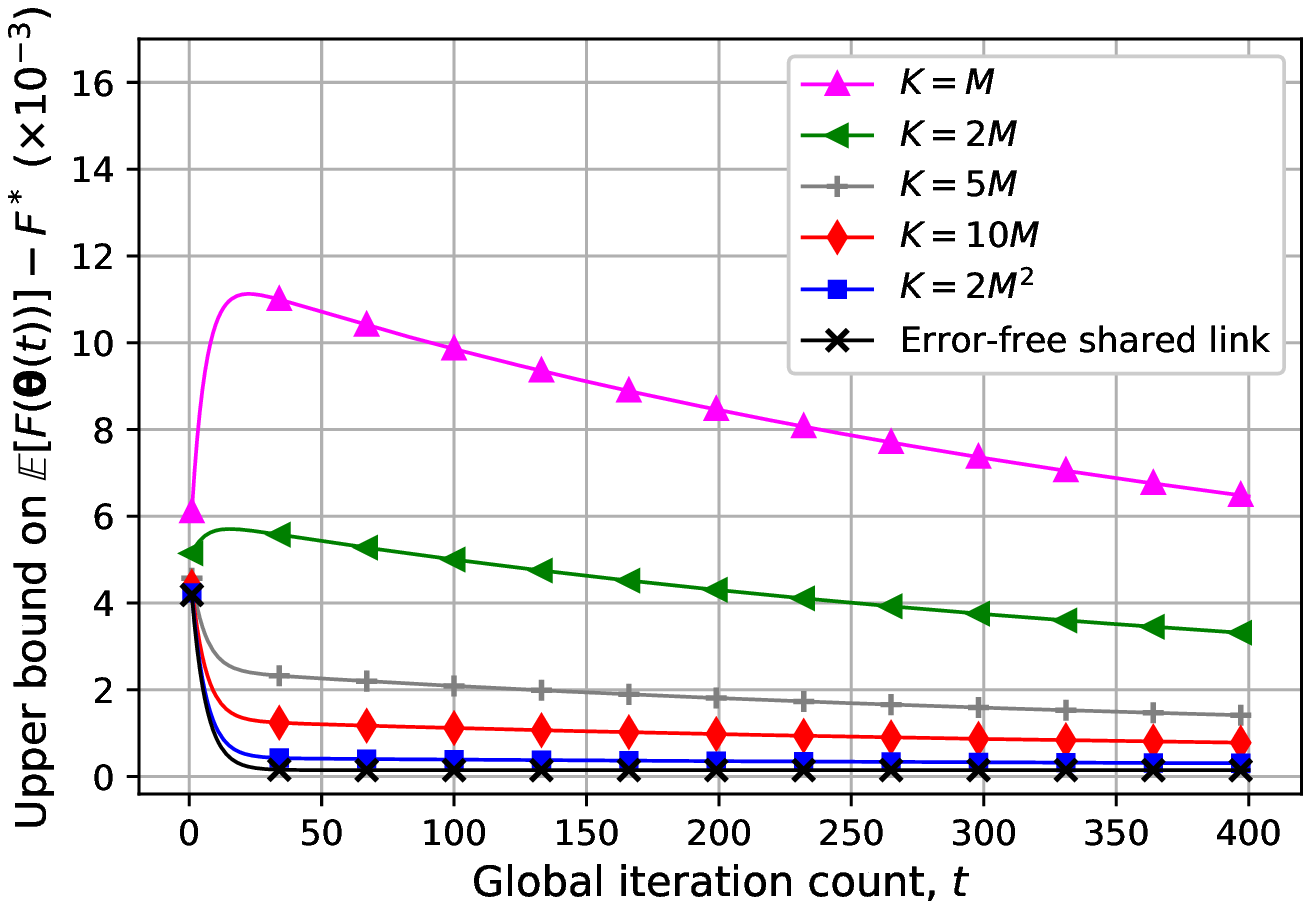}
  \caption{Perfect CSI at PS, $\tilde{\sigma}_h^2 = 0$}
  \label{Fig_conv_IID_perfectCSI}
\end{subfigure}%
\begin{subfigure}{.5\textwidth}
  \centering
  \includegraphics[scale=0.55,trim={16pt 5pt 43pt 26pt},clip]{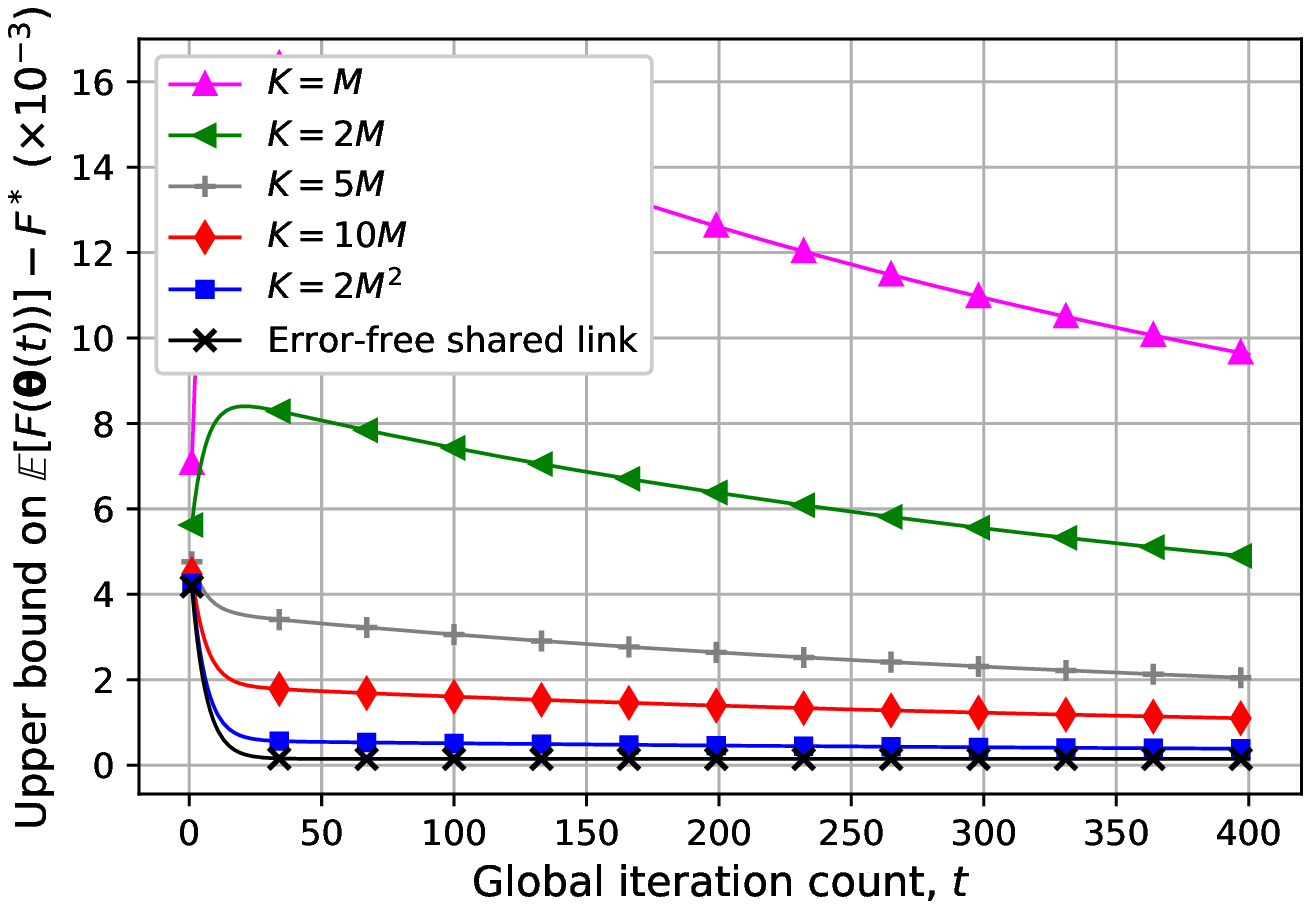}
  \caption{Imperfect CSI at PS, $\tilde{\sigma}_h^2 = M \sigma_h^2/2$}
  \label{Fig_conv_IID_imperfectCSI}
\end{subfigure}
\caption{Upper bound on $\mathbb{E} \left[ F( \boldsymbol{\theta} (T)) \right] - F^*$, given in \eqref{A_ConvF_FstarKequalM}, for different number of antennas $K \in \{ M,2M, 5M, 10M, 2M^2 \}$ with $d = 307498$ parameters used on CIFAR-10 training, $M=20$, $\sigma_z^2 = \sigma_h^2 = 1$, $\tau = 5$, $\mu = 1$, $L=5$, $G^2=\Gamma=1$, $\left\| \boldsymbol{\theta}(0) - \boldsymbol{\theta}^* \right\|_2^2 = 10^3$, and $\eta (t) = \frac{1}{\mu \tau (10^{-4}t + 1)}$.}
\label{Fig_conv_IID}
\end{figure}

In Fig. \ref{Fig_conv_IID}, we illustrate the convergence rate of the proposed analog FEEL algorithm, presented in Corollary \ref{CorollLossGap}, for the setting considered in Fig. \ref{Fig_IID_imCSI}, i.e., training on CIFAR-10 with iid distributed local datasets, for $K \in \{ M,2M,5M, 10M, 2M^2 \}$.
The CNN for training on CIFAR-10, whose architecture is provided in Table \ref{TableCNNArchit}, has $d = 307498$ parameters, and we have $M=20$ and $\sigma_z^2 = \sigma_h^2 = 1$. 
We set $\mu = 1$, $L=5$, $G^2=\Gamma=1$, $\left\| \boldsymbol{\theta}(0) - \boldsymbol{\theta}^* \right\|_2^2 = 10^3$. 
We consider a decreasing learning rate $\eta (t) = \frac{1}{\mu \tau (10^{-4}t + 1)}$, and $\alpha_t = 1 + 10^{-3}t$, $t \in [T]$.
We also consider the convergence rate of the error-free shared link scenario given as follows:
\begin{subequations}
\begin{align}
\mathbb{E} \left[ F( \boldsymbol{\theta} (T)) \right] - F^* \le \frac{L}{2} \Big( \prod\limits_{i=0}^{T-1} A_{\rm{ef}}(i) \Big) \left\| {\boldsymbol{\theta}} (0) - {\boldsymbol{\theta}}^* \right\|_2^2 + \frac{L}{2} \sum\limits_{j=0}^{T-1} B_{\rm{ef}}(j) \prod\limits_{i=j+1}^{T-1} A_{\rm{ef}} (i),
\end{align}
where $0 < \eta(t) \le \min \big\{ 1, \frac{1}{\mu \tau} \big\}$, and we have
\begin{align}
A_{\rm{ef}} (i) \triangleq & 1 - \mu \eta (i) \left( \tau - \eta(i) (\tau - 1 ) \right),\\ 
B_{\rm{ef}}(i) \triangleq & \left( 1+ \mu (1- \eta(i)) \right) \eta^2(i) G^2 \frac{\tau (\tau-1)(2\tau-1)}{6}  + (\tau^2 + \tau-1) \eta^2(i) G^2  + 2  \eta(i) (\tau - 1) \Gamma,
\end{align}
\end{subequations}
which can be obtained by following the procedure presented in the proof of Theorem \ref{A_Theoremtheta_thetastar}. 
We investigate the convergence rate for the cases of having perfect and imperfect CSI at the PS in Fig. \ref{Fig_conv_IID_perfectCSI} and Fig. \ref{Fig_conv_IID_imperfectCSI}, respectively.
We observe that the analytical results illustrated in Fig. \ref{Fig_conv_IID} corroborate the experimental ones presented above, and the theoretical bound obtained for convex loss functions without CSIT approaches that of the perfect communication benchmark with the increasing number of PS antennas.

\section{Conclusions}\label{SecConc}
We have studied FEEL, where colocated wireless devices collaboratively train a global model using their local datasets, and transmit their local updates to the PS over a wireless fading MAC. 
With the goal of recovering the average local model updates at the PS through over-the-air computation, we have considered analog transmission of the local updates from the devices over the wireless MAC.
The current literature on over-the-air FEEL relies on perfect CSI both at the devices and the PS.
However, acquiring perfect CSI in mobile wireless networks is typically not possible, and even imperfect CSI estimation can introduce delays and waste channel resources.
Therefore, in this work, we have studies FEEL without any CSIT at the devices and with imperfect CSI at the PS.
To mitigate the effects of the time-varying channel without CSI, we assumed that the PS is equipped with multiple antennas, and designed a beamforming technique at the PS to estimate the computation result. 
We have derived the convergence rate of the proposed analog FEEL algorithm that highlights the impact of various system parameters on the performance.
Experimental results on MNIST and CIFAR-10 datasets corroborated the theoretical convergence results, and illustrated that, with the proposed algorithm, increasing the number of PS antennas provides a better estimate of the average local model updates thanks to a better alignment of the desired signals, as well as the elimination of the interference and noise terms. 
Asymptotically, the proposed scheme guarantees that the wireless MAC becomes deterministic, despite the lack of CSIT and perfect CSI at the PS.   

\appendices

\section{Proof of Theorem \ref{A_Theoremtheta_thetastar}}\label{A_AppTheorem}

We define an auxiliary variable $\boldsymbol{\upsilon} (t)$ given by 
\begin{align}\label{DefinitionVt}
\boldsymbol{\upsilon} (t+1) \triangleq \boldsymbol{\theta} (t) + \Delta {\boldsymbol{\theta}} (t),
\end{align}
where $\Delta {\boldsymbol{\theta}} (t)$ is as defined in \eqref{DefDeltaThetat}. 
We note that
\begin{align}
\boldsymbol{\theta} (t+1) = \boldsymbol{\theta} (t) + \Delta \wh{\boldsymbol{\theta}} (t).
\end{align}
We have 
\begin{align}\label{AppFDP_1}
& \left\| \boldsymbol{\theta} (t+1) - {\boldsymbol{\theta}}^* \right\|_2^2 =  \left\| \boldsymbol{\theta} (t+1) - {\boldsymbol{\upsilon}} (t+1) + {\boldsymbol{\upsilon}} (t+1)  - {\boldsymbol{\theta}}^* \right\|_2^2 \nonumber \\
& \; = \left\| \boldsymbol{\theta} (t+1) - {\boldsymbol{\upsilon}} (t+1) \right\|_2^2 + \left\| {\boldsymbol{\upsilon}} (t+1)  - {\boldsymbol{\theta}}^* \right\|_2^2  + 2 \langle \boldsymbol{\theta} (t+1) - {\boldsymbol{\upsilon}} (t+1) , {\boldsymbol{\upsilon}} (t+1)  - {\boldsymbol{\theta}}^* \rangle.  
\end{align}
In the following, we bound the three terms on the RHS of \eqref{AppFDP_1}.

\begin{lemma}\label{AppLemmaTerm_1}
We have
\begin{align}\label{AppLemmaTerm_1_Eq_1}
\mathbb{E} \left[ \left\| \boldsymbol{\theta} (t+1) - {\boldsymbol{\upsilon}} (t+1) \right\|_2^2 \right] \le \frac{\big(1+\frac{\tilde{\sigma}_h^2}{M{\sigma}_h^2}\big) \eta^2(t) \tau^2 G^2}{K} + \frac{\big(1+\frac{\tilde{\sigma}_h^2}{M{\sigma}_h^2}\big)\sigma_z^2d}{2 \alpha_t^2 K M \sigma_h^2}.     
\end{align}
\end{lemma}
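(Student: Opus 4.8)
The plan is to recognize that $\boldsymbol{\theta}(t+1) - \boldsymbol{\upsilon}(t+1) = \Delta\wh{\boldsymbol{\theta}}(t) - \Delta\boldsymbol{\theta}(t)$ is precisely the error between the PS estimate and the true average local update, so the lemma reduces to a variance computation for the beamforming estimator. I would work coordinate-by-coordinate using the decomposition $\Delta\hat{\theta}_i(t) = \sum_{l=1}^5\Delta\hat{\theta}_{i,l}(t)$ from \eqref{DeltaHatTheta_l}, writing the per-coordinate error as
\begin{align}
\Delta\hat{\theta}_i(t) - \Delta\theta_i(t) = \big(\Delta\hat{\theta}_{i,1}(t) - \Delta\theta_i(t)\big) + \sum_{l=2}^5\Delta\hat{\theta}_{i,l}(t). \nonumber
\end{align}
Since the local updates are independent of the channel realizations of the same iteration, I would first take the expectation over the channel variables $\{h_{m,k,i}\}$, $\{\tilde{h}_{k,i}\}$, $\{z_{k,i}\}$ conditioned on the updates. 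The crucial simplification is that each of the five contributions is zero-mean over the channel: for $l=1$ this follows from $\mathbb{E}[|h_{m,k,i}(t)|^2]=\sigma_h^2$, which makes $\Delta\hat{\theta}_{i,1}(t)$ unbiased, while for $l=2,\dots,5$ it is the zero-mean property already recorded for $\mathfrak{h}_{m,i}^n$ and $\tilde{\mathfrak{h}}_{m,i}^n$ and for the additive noise. Moreover, all cross terms vanish, because each of them contains an odd power of at least one zero-mean, mutually independent variable. Hence the expected squared error collapses to the sum of the five individual variances.

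Next I would exploit the real/imaginary packing: summing a single term over all $d$ coordinates recombines the $\mathrm{Re}$ and $\mathrm{Im}$ estimates into one modulus, i.e.
\begin{align}
\sum_{i=1}^d\mathbb{E}\big[\big(\Delta\hat{\theta}_{i,l}(t)\big)^2\big] = \frac{1}{\alpha_t^2 M^2\sigma_h^4}\sum_{i=1}^{d/2}\mathbb{E}\big[|y_{i,l}(t)|^2\big], \nonumber
\end{align}
which removes the need to track real and imaginary parts separately. I would then evaluate each $\mathbb{E}[|y_{i,l}(t)|^2]$. For the interference ($l=2$) and the CSI-error interference ($l=4$), using $\mathbb{E}[|\mathfrak{h}_{m,i}^n(t)|^2]=(M-1)\sigma_h^4/K$ and $\mathbb{E}[|\tilde{\mathfrak{h}}_{m,i}^n(t)|^2]=\tilde{\sigma}_h^2\sigma_h^2/K$ together with the uncorrelatedness of distinct device terms, these reduce to multiples of $\sum_m\|\Delta\boldsymbol{\theta}_m(t)\|_2^2$. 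For the signal residual ($l=1$) the needed ingredient is $\mathrm{Var}(|h_{m,k,i}(t)|^2/\sigma_h^2)=1$ (the fourth moment $\mathbb{E}[|h|^4]=2\sigma_h^4$ of a complex Gaussian), giving variance $1/K$ per averaged channel and again a multiple of $\sum_m\|\Delta\boldsymbol{\theta}_m(t)\|_2^2$. For the two noise terms ($l=3,5$), independence of $z$ from the channels yields $\mathbb{E}[|y_{i,3}(t)|^2]=M\sigma_h^2\sigma_z^2/K$ and $\mathbb{E}[|y_{i,5}(t)|^2]=\tilde{\sigma}_h^2\sigma_z^2/K$, whose sums over the $d/2$ subchannels produce the terms proportional to $\sigma_z^2 d$.

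The final assembly uses two bounds. From \eqref{DeltaTheta_m_t_conver}, Cauchy--Schwarz (or Jensen) gives $\|\Delta\boldsymbol{\theta}_m(t)\|_2^2\le\eta^2(t)\tau\sum_{l=1}^\tau\|\nabla F_m(\boldsymbol{\theta}_m^l(t),\xi_m^l(t))\|_2^2$, so Assumption \ref{AssumpBoundedVarGradient} yields $\mathbb{E}_\xi[\|\Delta\boldsymbol{\theta}_m(t)\|_2^2]\le\eta^2(t)\tau^2 G^2$. The key cancellation is that the signal-residual and interference contributions combine as $\tfrac{1}{MK}+\tfrac{M-1}{MK}=\tfrac{1}{K}$, collapsing $l=1$ and $l=2$ into the clean factor $\eta^2(t)\tau^2 G^2/K$; adding $l=4$ factors out $\big(1+\tilde{\sigma}_h^2/(M\sigma_h^2)\big)$, and the noise terms $l=3,5$ likewise factor out the same constant, reproducing exactly \eqref{AppLemmaTerm_1_Eq_1}. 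I expect the main obstacle to be the careful bookkeeping that justifies dropping every cross term and every off-diagonal device pair $m\neq m'$ --- verifying that each such expectation indeed carries an unmatched zero-mean factor --- rather than any single hard estimate; the remaining work is routine moment computation for complex Gaussians.
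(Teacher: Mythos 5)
Your proposal is correct and follows essentially the same route as the paper's Appendix B: the identical five-term decomposition via \eqref{DeltaHatTheta_l}, vanishing of all cross terms by unpaired zero-mean independent factors, the same moment computations (including $\mathbb{E}\big[|h|^4\big]=2\sigma_h^4$ for the signal residual and the second moments matching the paper's Lemmas 4--8), the same bound $\mathbb{E}\big[\|\Delta\boldsymbol{\theta}_m(t)\|_2^2\big]\le\eta^2(t)\tau^2G^2$ from \eqref{DeltaTheta_m_t_conver} with Jensen and Assumption \ref{AssumpBoundedVarGradient}, and the same final assembly $\tfrac{1}{KM^2}+\tfrac{M-1}{KM^2}=\tfrac{1}{KM}$ factoring out $\big(1+\tfrac{\tilde{\sigma}_h^2}{M\sigma_h^2}\big)$. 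Your only deviation is cosmetic: recombining the real and imaginary estimates into $\sum_{i=1}^{d/2}\mathbb{E}\big[|y_{i,l}(t)|^2\big]$, which makes the off-diagonal device terms vanish immediately via $\mathbb{E}\big[\mathfrak{h}_{m_1,i}\mathfrak{h}_{m_2,i}^*\big]=0$ (using circular symmetry), whereas the paper tracks ${\rm{Re}}$ and ${\rm{Im}}$ parts separately and cancels those cross terms only after summing the two halves.
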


\begin{proof}
See Appendix \ref{AppProofPDPLemmaTerm_1}. 
\end{proof}

\begin{lemma}\label{AppFDPLemmaTerm_2}
We have
\begin{align}\label{AppFDPBoundTerm_2}
& \mathbb{E} \left[ \left\| \boldsymbol{\upsilon} (t+1) - {\boldsymbol{\theta}}^* \right\|_2^2 \right] \le \left( 1 - \mu \eta (t) \left( \tau - \eta(t) (\tau - 1) \right) \right) \mathbb{E} \left[ \left\| \boldsymbol{\theta} (t) - {\boldsymbol{\theta}}^* \right\|_2^2 \right] \nonumber\\
& \;\;\,+ \left( 1+ \mu (1- \eta(t)) \right) \eta^2(t) G^2 \frac{\tau (\tau-1)(2\tau-1)}{6} + \eta^2(t) (\tau^2 + \tau-1) G^2  + 2 \eta(t) (\tau - 1) \Gamma.      
\end{align}
\end{lemma}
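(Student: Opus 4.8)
The plan is to expand $\boldsymbol{\upsilon}(t+1) - \boldsymbol{\theta}^*$ using the definition \eqref{DefinitionVt} together with \eqref{DefDeltaThetat} and the local-update identity \eqref{DeltaTheta_m_t_conver}, which give $\boldsymbol{\upsilon}(t+1) = \boldsymbol{\theta}(t) - \frac{\eta(t)}{M}\sum_{m=1}^{M}\sum_{l=1}^{\tau}\nabla F_m(\boldsymbol{\theta}_m^l(t),\xi_m^l(t))$. Writing $\|\boldsymbol{\upsilon}(t+1)-\boldsymbol{\theta}^*\|_2^2 = \|\boldsymbol{\theta}(t)-\boldsymbol{\theta}^*\|_2^2 - \frac{2\eta(t)}{M}\langle \boldsymbol{\theta}(t)-\boldsymbol{\theta}^*, \sum_{m,l}\nabla F_m(\boldsymbol{\theta}_m^l(t),\xi_m^l(t))\rangle + \frac{\eta^2(t)}{M^2}\|\sum_{m,l}\nabla F_m(\boldsymbol{\theta}_m^l(t),\xi_m^l(t))\|_2^2$ isolates the three contributions I would bound separately after taking $\mathbb{E}_\xi$.

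For the linear (inner-product) term, I would first use unbiasedness \eqref{AverageStochGradientEst} to replace each stochastic gradient by the true gradient $\nabla F_m(\boldsymbol{\theta}_m^l(t))$, then split $\boldsymbol{\theta}(t)-\boldsymbol{\theta}^* = (\boldsymbol{\theta}(t)-\boldsymbol{\theta}_m^l(t)) + (\boldsymbol{\theta}_m^l(t)-\boldsymbol{\theta}^*)$. On the second piece I apply $\mu$-strong convexity (Assumption \ref{AssumpStrongConvexLoss}) to obtain $\langle \boldsymbol{\theta}_m^l(t)-\boldsymbol{\theta}^*, \nabla F_m(\boldsymbol{\theta}_m^l(t))\rangle \ge F_m(\boldsymbol{\theta}_m^l(t)) - F_m(\boldsymbol{\theta}^*) + \frac{\mu}{2}\|\boldsymbol{\theta}_m^l(t)-\boldsymbol{\theta}^*\|_2^2$, which produces both the $-\mu\eta(t)$ contraction and the function-value gaps. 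On the first (client-drift) piece I apply Young's inequality to trade it against $\|\boldsymbol{\theta}(t)-\boldsymbol{\theta}_m^l(t)\|_2^2$ and $\|\nabla F_m(\boldsymbol{\theta}_m^l(t))\|_2^2$.

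The drift norms are controlled through \eqref{DeltaTheta_m_t_conver} and Assumption \ref{AssumpBoundedVarGradient}: since $\boldsymbol{\theta}_m^l(t)-\boldsymbol{\theta}(t) = -\eta(t)\sum_{j=1}^{l-1}\nabla F_m(\boldsymbol{\theta}_m^j(t),\xi_m^j(t))$, Jensen's inequality gives $\mathbb{E}\|\boldsymbol{\theta}(t)-\boldsymbol{\theta}_m^l(t)\|_2^2 \le \eta^2(t)(l-1)^2 G^2$, and summing $\sum_{l=1}^{\tau}(l-1)^2 = \frac{\tau(\tau-1)(2\tau-1)}{6}$ yields the corresponding additive term. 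The quadratic term is bounded by $\frac{\eta^2(t)}{M^2}\, M\tau\sum_{m,l}\mathbb{E}\|\nabla F_m(\boldsymbol{\theta}_m^l(t),\xi_m^l(t))\|_2^2 \le \eta^2(t)\tau^2 G^2$ via Jensen and Assumption \ref{AssumpBoundedVarGradient}. Finally, the function-value gaps are handled using $F_m(\boldsymbol{\theta}_m^l(t)) \ge F_m^*$ together with the definition $\Gamma = F^* - \sum_m \frac{B_m}{B}F_m^*$, which lower-bounds $\frac{1}{M}\sum_m(F_m(\boldsymbol{\theta}_m^l(t))-F_m(\boldsymbol{\theta}^*))$ by $-\Gamma$ and injects the heterogeneity penalty.

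The main obstacle I anticipate is the constant bookkeeping needed to land exactly on the stated coefficients: in particular, isolating the $l=1$ local step (where $\boldsymbol{\theta}_m^1(t)=\boldsymbol{\theta}(t)$ has no drift) so that the heterogeneity term carries the factor $(\tau-1)$ rather than $\tau$, choosing the Young's-inequality weight so the drift contribution acquires the precise $(1+\mu(1-\eta(t)))$ prefactor, and merging the leftover $\|\nabla F_m(\boldsymbol{\theta}_m^l(t))\|_2^2$ pieces with the quadratic term to recover the coefficient $(\tau^2+\tau-1)$. Assembling these while preserving the exact contraction factor $1-\mu\eta(t)(\tau-\eta(t)(\tau-1))$ is the delicate part; the individual inequalities themselves are routine.
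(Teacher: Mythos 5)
Your plan is correct and follows the paper's own proof essentially step for step: the same three-term expansion of $\left\| \boldsymbol{\upsilon}(t+1) - \boldsymbol{\theta}^* \right\|_2^2$, the quadratic term bounded by $\eta^2(t)\tau^2 G^2$ via Jensen and Assumption \ref{AssumpBoundedVarGradient}, the split of the inner product into the drift-free $i=1$ step (yielding the $\mu$-strong-convexity contraction and the $(\tau-1)$ rather than $\tau$ factor on $\Gamma$) versus $i \ge 2$, Young's inequality with weights $1/\eta(t)$ and $\eta(t)$ producing the $\left(1+\mu(1-\eta(t))\right)$ prefactor, the drift bound $\sum_{i=2}^{\tau}(i-1)^2 = \tau(\tau-1)(2\tau-1)/6$, and the heterogeneity term via $F_m^* \le F_m(\cdot)$ and the definition of $\Gamma$ — which is exactly what the paper does, merely packaging the $i \ge 2$ inner-product bound into its Lemma \ref{LemmaTermE}. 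The bookkeeping you flag as delicate (including $\tau^2 + \tau - 1 = \tau^2 + (\tau-1)$ from merging the quadratic term with the Young leftover) resolves precisely as you anticipate, so nothing is missing.
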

\begin{proof}
See Appendix \ref{AppProofFDPLemmaTerm_2}. 
\end{proof}

\begin{lemma}\label{LemmaThisTermEqualZero}
We have
\begin{align}
\mathbb{E} \big[ \langle \boldsymbol{\theta} (t+1) - {\boldsymbol{\upsilon}} (t+1) , {\boldsymbol{\upsilon}} (t+1)  - {\boldsymbol{\theta}}^* \rangle \big] = 0.    
\end{align}
\begin{proof}
From the definition of $\boldsymbol{\upsilon} (t+1)$ given in \eqref{DefinitionVt}, it follows that
\begin{align}
\mathbb{E} \big[ \langle \boldsymbol{\theta} (t+1) - {\boldsymbol{\upsilon}} (t+1) , {\boldsymbol{\upsilon}} (t+1)  - {\boldsymbol{\theta}}^* \rangle \big] = \mathbb{E} \big[ \langle \Delta \wh{\boldsymbol{\theta}} (t) - \Delta {\boldsymbol{\theta}} (t) , {\boldsymbol{\theta}} (t) + \Delta {\boldsymbol{\theta}} (t)  - {\boldsymbol{\theta}}^* \rangle \big].    
\end{align}
From the independence of $h_{m,k,i} (t)$, $\tilde{h}_{m,k,i} (t)$, and $z_{k,i} (t)$, $\forall m \in [M], \forall k\in [K],\forall i\in [d]$, and \eqref{ReceivedSignalImperfectCSI} and \eqref{DeltaThetaHatImperfectCSI}, expectation of $\Delta \wh{\boldsymbol{\theta}} (t)$ with respect to the channel gains and noise terms results in $\mathbb{E} \big[ \Delta \wh{\boldsymbol{\theta}} (t) \big] = \Delta {\boldsymbol{\theta}} (t)$.
Since the local model updates at the global iteration $t$ are independent of the channel characterizations during the same global iteration, it follows that 
\begin{align}
\mathbb{E} \big[ \langle \Delta \wh{\boldsymbol{\theta}} (t) - \Delta {\boldsymbol{\theta}} (t) , {\boldsymbol{\theta}} (t) + \Delta {\boldsymbol{\theta}} (t)  - {\boldsymbol{\theta}}^* \rangle \big] = 0.    
\end{align}
This completes the proof of Lemma \ref{LemmaThisTermEqualZero}. 
\end{proof}
\end{lemma}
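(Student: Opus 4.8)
The plan is to show the cross term has zero expectation by combining the conditional \emph{unbiasedness} of the channel-averaged estimate $\Delta\wh{\boldsymbol{\theta}}(t)$ with the \emph{independence} between the local updates of iteration $t$ and the channel and noise realizations of that same iteration. First I would substitute the definitions $\boldsymbol{\upsilon}(t+1)=\boldsymbol{\theta}(t)+\Delta\boldsymbol{\theta}(t)$ from \eqref{DefinitionVt} and $\boldsymbol{\theta}(t+1)=\boldsymbol{\theta}(t)+\Delta\wh{\boldsymbol{\theta}}(t)$ to rewrite the inner product as $\langle\Delta\wh{\boldsymbol{\theta}}(t)-\Delta\boldsymbol{\theta}(t),\,\boldsymbol{\theta}(t)+\Delta\boldsymbol{\theta}(t)-\boldsymbol{\theta}^*\rangle$. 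The purpose of this rewriting is to isolate all the fresh channel randomness in the first argument, while the second argument $\boldsymbol{\theta}(t)+\Delta\boldsymbol{\theta}(t)-\boldsymbol{\theta}^*$ depends only on the previous global model and on the true average of the local updates defined in \eqref{DefDeltaThetat}, both of which are produced from the local data without any CSIT.

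Next I would apply the tower rule, conditioning on everything except the iteration-$t$ channel gains $h_{m,k,i}(t)$, the CSI errors $\tilde h_{k,i}(t)$, and the additive noise $z_{k,i}(t)$. Under this conditioning the second argument is a fixed vector and factors out of the expectation, so it suffices to establish the conditional unbiasedness $\mathbb{E}\big[\Delta\wh{\boldsymbol{\theta}}(t)\,\big|\,\cdot\big]=\Delta\boldsymbol{\theta}(t)$. I would verify this entry-wise from the five-term expansion of $y_i(t)$ in \eqref{ReceivedSignalImperfectCSI} and the estimator \eqref{DeltaThetaHatImperfectCSI}: dividing the signal term by $\alpha_t M\sigma_h^2$ and using $\mathbb{E}\big[|h_{m,k,i}(t)|^2\big]=\sigma_h^2$ reproduces exactly the true average $\Delta\boldsymbol{\theta}(t)$, whereas the interference term, the noise term, and the two imperfect-CSI terms each have zero conditional mean --- the interference by the zero-mean property recorded in \eqref{MeanMathFrakh}, and the remaining terms by the zero-mean and mutual independence of $h_{m,k,i}(t)$, $\tilde h_{k,i}(t)$, and $z_{k,i}(t)$.

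Combining these two facts gives $\mathbb{E}\big[\langle\Delta\wh{\boldsymbol{\theta}}(t)-\Delta\boldsymbol{\theta}(t),\,\boldsymbol{\theta}(t)+\Delta\boldsymbol{\theta}(t)-\boldsymbol{\theta}^*\rangle\,\big|\,\cdot\big]=\langle\boldsymbol{0},\,\boldsymbol{\theta}(t)+\Delta\boldsymbol{\theta}(t)-\boldsymbol{\theta}^*\rangle=0$, and taking the outer expectation completes the proof. The step I expect to require the most care is the decoupling invoked in the second paragraph: I must justify that the local updates of iteration $t$ are genuinely independent of that iteration's channel and noise, so that the second argument can legitimately be treated as deterministic when averaging over the channel. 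This is precisely the blind-transmission modeling assumption (no CSIT, with each gradient depending only on the local data) already noted in the text following \eqref{MeanVarMathFrakh}; once it is granted, the remainder reduces to linearity of expectation over the five zero-mean or unbiased components.
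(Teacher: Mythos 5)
Your proposal is correct and follows essentially the same route as the paper's proof: rewrite the cross term as $\langle \Delta \wh{\boldsymbol{\theta}} (t) - \Delta {\boldsymbol{\theta}} (t),\, {\boldsymbol{\theta}} (t) + \Delta {\boldsymbol{\theta}} (t) - {\boldsymbol{\theta}}^* \rangle$, use the unbiasedness $\mathbb{E}\big[\Delta \wh{\boldsymbol{\theta}} (t)\big] = \Delta {\boldsymbol{\theta}} (t)$ coming from \eqref{ReceivedSignalImperfectCSI}--\eqref{DeltaThetaHatImperfectCSI}, and invoke the independence of the iteration-$t$ local updates from that iteration's channel and noise. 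Your explicit tower-rule conditioning and entry-wise verification of the five zero-mean/unbiased components simply make rigorous what the paper states more informally, so this is the same argument in finer detail.
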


Substituting the results in Lemmas \ref{AppLemmaTerm_1}-\ref{LemmaThisTermEqualZero} into \eqref{AppFDP_1} yields
\begin{align}
&\left\| \boldsymbol{\theta} (t+1) - {\boldsymbol{\theta}}^* \right\|_2^2 \le \left( 1 - \mu \eta (t) \left( \tau - \eta(t) (\tau - 1) \right) \right) \mathbb{E} \left[ \left\| \boldsymbol{\theta} (t) - {\boldsymbol{\theta}}^* \right\|_2^2 \right] \nonumber\\
& \;\;\,+ \frac{\big(1+\frac{\tilde{\sigma}_h^2}{M{\sigma}_h^2}\big) \eta^2(t) \tau^2 G^2}{K} + \frac{\big(1+\frac{\tilde{\sigma}_h^2}{M{\sigma}_h^2}\big)\sigma_z^2d}{2 \alpha_t^2 KM \sigma_h^2} + \left( 1+ \mu (1- \eta(t)) \right) \eta^2(t) G^2 \frac{\tau (\tau-1)(2\tau-1)}{6} \nonumber\\
& \;\;\, + \eta^2(t) (\tau^2 + \tau-1) G^2  + 2 \eta(t) (\tau - 1) \Gamma.
\end{align}
Solving the above inequality recursively concludes Theorem \ref{A_Theoremtheta_thetastar}.

\section{Proof of Lemma \ref{AppLemmaTerm_1}}\label{AppProofPDPLemmaTerm_1}
We have
\begin{align}\label{APP_sum_of_5_terms}
\mathbb{E} \left[ \left\| \boldsymbol{\theta} (t+1) - {\boldsymbol{\upsilon}} (t+1) \right\|_2^2 \right] =  \mathbb{E} \left[ \big\| \Delta \wh{\boldsymbol{\theta}} (t) - \Delta {\boldsymbol{\theta}} (t) \big\|_2^2 \right] = \sum\limits_{i=1}^{d} \mathbb{E} \left[ \big( \Delta \hat{{\theta}}_i (t) - \Delta {{\theta}}_i (t) \big)^2 \right],   
\end{align}
where $\Delta {{\theta}}_i (t)$ denotes the $i$-th entry of vector $\Delta {\boldsymbol{\theta}} (t)$, for $i \in [d]$. 
In the following, we bound $\mathbb{E} \left[ \big( \Delta \hat{{\theta}}_i (t) - \Delta {{\theta}}_i (t) \big)^2 \right]$, $\forall i$. 
Here we remind that $\Delta \hat{\theta}_{i} \left( t \right) = \sum\nolimits_{l=1}^{5} \Delta \hat{\theta}_{i, l} \left( t \right)$, where $\Delta \hat{\theta}_{i, l} \left( t \right)$ is defined in \eqref{DeltaHatTheta_l}. 
From the independence of $h_{m,k,i} (t)$, $\tilde{h}_{k,i} (t)$, and $z_{k,i} (t)$, $\forall m \in [M], \forall k\in [K],\forall i\in [d]$, and the fact that the local model updates at the global iteration $t$ are independent of the channel realizations during the same global iteration, it is easy to verify that  
\begin{align}
\mathbb{E} \left[ \big( \Delta \hat{{\theta}}_i (t) - \Delta {{\theta}}_i (t) \big)^2 \right] = \mathbb{E} \left[ \big( \Delta \hat{{\theta}}_{i,1} (t) - \Delta {{\theta}}_i (t) \big)^2 \right] + \sum\limits_{l=2}^{5} \mathbb{E} \left[ \Delta \hat{{\theta}}_{i,l}^2 (t) \right].
\end{align}

\begin{lemma}\label{LemmaAppThetaHat_term1}
We have
\begin{align}
\sum\limits_{i=1}^{d} \mathbb{E} \left[ \big( \Delta \hat{{\theta}}_{i,1} (t) - \Delta {{\theta}}_i (t) \big)^2 \right] = \frac{1}{KM^2}  \sum\limits_{m=1}^{M} \mathbb{E} \big[ \left\| \Delta \boldsymbol{\theta}_{m} (t) \right\|_2^2 \big].    
\end{align}
\end{lemma}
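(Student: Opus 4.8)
The plan is to unfold the definition of $\Delta\hat{\theta}_{i,1}(t)$ in \eqref{DeltaHatTheta_l} together with the expression for $y_{i,1}(t)$ in \eqref{Convergence_y_i_t_rewrite}, and compare it directly against the $i$-th entry of the true average $\Delta\boldsymbol{\theta}(t)$ from \eqref{DefDeltaThetat}, namely $\Delta\theta_i(t) = \frac{1}{M}\sum_{m=1}^M \Delta\theta_{m,i}(t)$. For $1 \le i \le d/2$, since $\frac{1}{K}\sum_{k=1}^K |h_{m,k,i}(t)|^2$ is real, taking the real part of $y_{i,1}(t)$ and dividing by $\alpha_t M \sigma_h^2$ gives
\begin{align}
\Delta\hat{\theta}_{i,1}(t) = \frac{1}{M}\sum_{m=1}^M \Big( \frac{1}{K\sigma_h^2}\sum_{k=1}^K |h_{m,k,i}(t)|^2 \Big) \Delta\theta_{m,i}(t),
\end{align}
so that the error decomposes cleanly as a sum of zero-mean centered channel fluctuations,
\begin{align}
\Delta\hat{\theta}_{i,1}(t) - \Delta\theta_i(t) = \frac{1}{M}\sum_{m=1}^M \Big( \frac{1}{K\sigma_h^2}\sum_{k=1}^K |h_{m,k,i}(t)|^2 - 1 \Big) \Delta\theta_{m,i}(t).
\end{align}

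Next I would take the expectation of the square of this error. Here I would invoke two independence facts already used in the excerpt: the channel gains are iid across devices, and the local updates at iteration $t$ are independent of the channel realizations during the same iteration. Because each centered factor $\frac{1}{K\sigma_h^2}\sum_k |h_{m,k,i}(t)|^2 - 1$ has zero mean and the factors for distinct $m$ are independent, every cross term ($m \ne m'$) factors into a product of a zero-mean expectation and the update correlation, hence vanishes. What remains is a diagonal sum, and the only quantity to compute is the variance of $\frac{1}{K}\sum_k |h_{m,k,i}(t)|^2$: using $\mathbb{E}[|h_{m,k,i}(t)|^2] = \sigma_h^2$ and $\mathrm{Var}(|h_{m,k,i}(t)|^2) = \sigma_h^4$ for the circularly symmetric complex Gaussian, averaging over $K$ iid terms yields variance $\sigma_h^4/K$. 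This produces
\begin{align}
\mathbb{E}\big[ (\Delta\hat{\theta}_{i,1}(t) - \Delta\theta_i(t))^2 \big] = \frac{1}{KM^2}\sum_{m=1}^M \mathbb{E}\big[ \Delta\theta_{m,i}^2(t) \big].
\end{align}

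Finally, the case $d/2 < i \le d$ is identical after reindexing $i \mapsto i - d/2$ and replacing ${\rm{Re}}\{\cdot\}$ by ${\rm{Im}}\{\cdot\}$; the imaginary part of $y_{i-d/2,1}(t)$ isolates the component $\Delta\theta_{m,i}(t)$ with the same real channel weight, so the per-coordinate bound holds for all $i \in [d]$. Summing over $i$ then collapses $\sum_{i=1}^d \mathbb{E}[\Delta\theta_{m,i}^2(t)]$ into $\mathbb{E}[\|\Delta\boldsymbol{\theta}_m(t)\|_2^2]$, giving the claimed identity. I expect the only genuinely delicate step to be the justification of the cross-term cancellation, which relies jointly on the iid-across-devices assumption and on the channel-update independence to factor the expectation; the variance computation $\mathrm{Var}(|h_{m,k,i}(t)|^2) = \sigma_h^4$ is routine once the complex Gaussian statistics are invoked.
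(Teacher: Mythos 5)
Your proof is correct and follows essentially the same route as the paper's: both expand the error as $\frac{1}{M}\sum_{m}\bigl(\frac{1}{K\sigma_h^2}\sum_{k}|h_{m,k,i}(t)|^2-1\bigr)\Delta\theta_{m,i}(t)$, kill the cross terms via independence across devices together with channel--update independence, and compute the diagonal variance from the complex Gaussian moment (your $\mathrm{Var}(|h_{m,k,i}(t)|^2)=\sigma_h^4$ is the same fact as the paper's $\mathbb{E}[|h_{m,k,i}(t)|^4]=2\sigma_h^4$). Your explicit treatment of the indices $d/2<i\le d$ via the imaginary part is a minor presentational addition that the paper leaves implicit, but the argument is identical.
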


\begin{proof}
According to the definition of $\Delta \hat{{\theta}}_{i,1} (t)$, given in \eqref{DeltaHatTheta_l}, we have
\newcommand\firstequal{\mathrel{\overset{\makebox[0pt]{\mbox{\normalfont\tiny\sffamily (a)}}}{=}}}
\newcommand\secondequal{\mathrel{\overset{\makebox[0pt]{\mbox{\normalfont\tiny\sffamily (b)}}}{=}}}
\begin{align}\label{App_Proof_first_term_1}
&\mathbb{E} \left[ \big( \Delta \hat{{\theta}}_{i,1} (t) - \Delta {{\theta}}_i (t) \big)^2 \right] = \mathbb{E} \Big[ \Big( \frac{1}{M} \sum\limits_{m=1}^{M} \Big( \frac{1}{K \sigma^2_h} \sum\limits_{k=1}^{K} \left| {h}_{m,k,i} (t) \right|^2 -1 \Big)  \Delta {\theta}_{m,i} (t) \Big)^2 \Big] \nonumber\\
& = \mathbb{E} \Big[ \frac{1}{M^2} \sum\limits_{m_1=1}^{M} \sum\limits_{m_2=1}^{M} \Big(  1 - \frac{1}{K \sigma^2_h} \sum\limits_{k=1}^{K} \left| {h}_{m_1,k,i} (t) \right|^2 - \frac{1}{K \sigma^2_h} \sum\limits_{k=1}^{K} \left| {h}_{m_2,k,i} (t) \right|^2 \Big.\nonumber\\
& \quad \quad \; \Big. + \frac{1}{K^2 \sigma^4_h} \sum\limits_{k_1=1}^{K} \sum\limits_{k_2=1}^{K} \left| {h}_{m_1,k_1,i} (t) \right|^2 \left| {h}_{m_2,k_2,i} (t) \right|^2 \Big) \Delta {\theta}_{m_1,i} (t) \Delta {\theta}_{m_2,i} (t) \Big] \nonumber\\
& \firstequal \mathbb{E} \Big[ \frac{1}{M^2} \sum\limits_{m=1}^{M} \Big( -\frac{1}{K} + \frac{1}{K^2 \sigma_h^4} \sum\limits_{k=1}^{K} \left| {h}_{m,k,i} (t) \right|^4 \Big) \Delta {\theta}^2_{m,i} (t) \Big] \secondequal \mathbb{E} \Big[ \frac{1}{KM^2} \sum\limits_{m=1}^{M} \Delta {\theta}^2_{m,i} (t) \Big], 
\end{align}
where (a) follows from the independence of $h_{m,k,i} (t)$, $\forall m, k$, and (b) follows since $\mathbb{E} \big[ \left| {h}_{m,k,i} (t) \right|^2 \big] = \sigma^2_h$ and $\mathbb{E} \big[ \left| {h}_{m,k,i} (t) \right|^4 \big] = 2 \sigma^4_h$. 
Lemma \ref{LemmaAppThetaHat_term1} follows from \eqref{App_Proof_first_term_1}.
\end{proof}

\begin{lemma}\label{LemmaAppThetaHat_term2}
We have
\begin{align}
\sum\limits_{i=1}^{d} \mathbb{E} \left[ \Delta \hat{{\theta}}^2_{i,2} (t) \right] = \frac{(M-1)}{KM^2}  \sum\limits_{m=1}^{M} \mathbb{E} \big[ \left\| \Delta \boldsymbol{\theta}_{m} (t) \right\|_2^2 \big].
\end{align}
\end{lemma}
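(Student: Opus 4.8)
The plan is to mirror the structure used for the first interference term in Lemma~\ref{LemmaAppThetaHat_term1}, reducing the sum of squared estimates to a squared modulus and then invoking the second-moment properties of the effective interference coefficient $\mathfrak{h}_{m,i}(t)$ that were already recorded in \eqref{MeanVarMathFrakh}.

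First I would observe that, by the definition of $\Delta\hat\theta_{i,2}(t)$ in \eqref{DeltaHatTheta_l}, the real-part indices $i\in[d/2]$ contribute $\mathrm{Re}\{y_{i,2}(t)\}/(\alpha_t M\sigma_h^2)$ and the imaginary-part indices $i=d/2+i'$ contribute $\mathrm{Im}\{y_{i',2}(t)\}/(\alpha_t M\sigma_h^2)$ with the \emph{same} subchannel index $i'$. Hence the two halves recombine into a modulus, and
\begin{align}
\sum_{i=1}^{d}\mathbb{E}\big[\Delta\hat\theta_{i,2}^2(t)\big]
= \frac{1}{\alpha_t^2 M^2\sigma_h^4}\sum_{i'=1}^{d/2}\mathbb{E}\big[|y_{i',2}(t)|^2\big].
\end{align}
Writing $y_{i',2}(t)=\alpha_t\sum_{m=1}^{M}\mathfrak{h}_{m,i'}(t)\,\Delta\theta^{\mathbb{C}}_{m,i'}(t)$ with $\Delta\theta^{\mathbb{C}}_{m,i'}(t)\triangleq\Delta\theta_{m,i'}(t)+j\Delta\theta_{m,d/2+i'}(t)$ (the grouping of \eqref{Convergence_y_i_t_rewrite} by the transmitting device), I would expand $|y_{i',2}(t)|^2$ as a double sum over device pairs $(m,m'')$.

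Next I would use that the local updates are independent of the channel realizations of the same global iteration to factor each summand into a channel correlation $\mathbb{E}[\mathfrak{h}_{m,i'}(t)\overline{\mathfrak{h}_{m'',i'}(t)}]$ times an update correlation. The crux is to show that the off-diagonal channel correlations vanish: for $m\ne m''$ the product $\mathfrak{h}_{m,i'}(t)\overline{\mathfrak{h}_{m'',i'}(t)}$ is a sum of products of four independent circularly symmetric complex Gaussians, and a Wick/Isserlis pairing (using $\mathbb{E}[h]=\mathbb{E}[h^2]=0$, so only conjugate-matched pairs of identical device--antenna indices survive) forces every nonzero pairing to require $m=m''$, because the inner summations defining $\mathfrak{h}_{m,i'}$ and $\mathfrak{h}_{m'',i'}$ exclude $m$ and $m''$ respectively. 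For $m=m''$ the diagonal second moment is already available from \eqref{VarMathFrakh}, namely $\mathbb{E}[|\mathfrak{h}_{m,i'}(t)|^2]=(M-1)\sigma_h^4/K$, so no new Gaussian computation is needed there.

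Substituting, only the diagonal terms remain, giving
\begin{align}
\mathbb{E}\big[|y_{i',2}(t)|^2\big]
= \alpha_t^2\frac{(M-1)\sigma_h^4}{K}\sum_{m=1}^{M}\mathbb{E}\big[\Delta\theta_{m,i'}^2(t)+\Delta\theta_{m,d/2+i'}^2(t)\big],
\end{align}
and summing over $i'\in[d/2]$ reassembles $\sum_{i'}\big(\Delta\theta_{m,i'}^2(t)+\Delta\theta_{m,d/2+i'}^2(t)\big)=\|\Delta\boldsymbol\theta_m(t)\|_2^2$, while the prefactor $\alpha_t^2\sigma_h^4$ cancels against the normalization $1/(\alpha_t^2 M^2\sigma_h^4)$ to yield the claimed $\frac{(M-1)}{KM^2}\sum_{m}\mathbb{E}[\|\Delta\boldsymbol\theta_m(t)\|_2^2]$. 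I expect the main obstacle to be the careful moment bookkeeping establishing the off-diagonal vanishing; once the problem is cast as the modulus computation above and \eqref{VarMathFrakh} is reused for the diagonal, the remainder is routine.
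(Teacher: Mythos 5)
Your proof is correct, and it takes a genuinely different route from the paper's. The paper's proof of Lemma \ref{LemmaAppThetaHat_term2} works coordinate-by-coordinate on $\big(\mathrm{Re}\{y_{i,2}(t)\}\big)^2$ and $\big(\mathrm{Im}\{y_{i,2}(t)\}\big)^2$ separately; there, the ``swapped-pair'' contributions (matching the summand indexed $(m,m',k)$ against the one indexed $(m',m,k)$) do \emph{not} vanish in either half, leaving residual cross-device products $\Delta\theta_{m,i}(t)\Delta\theta_{m',i}(t)-\Delta\theta_{m,d/2+i}(t)\Delta\theta_{m',d/2+i}(t)$ in \eqref{App_Proof_Second_term_1} and their sign-flipped counterparts in \eqref{App_Proof_Second_term_2}, which cancel only after the real and imaginary halves are summed over all $d$ coordinates. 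Your recombination of the two halves into $\sum_{i'=1}^{d/2}\mathbb{E}\big[|y_{i',2}(t)|^2\big]/(\alpha_t^2M^2\sigma_h^4)$ — valid by the definition in \eqref{DeltaHatTheta_l} since both halves read off the same subchannel index — makes this cancellation structural rather than computational: in the modulus expansion, the swapped pairing forces pseudo-moment factors of the form $\mathbb{E}\big[\big(({h}_{m,k,i}(t))^{*}\big)^2\big]$, which vanish by circular symmetry, so only the fully matched diagonal survives and the second moment $\mathbb{E}\big[|\mathfrak{h}_{m,i}(t)|^2\big]=(M-1)\sigma_h^4/K$ from \eqref{VarMathFrakh} can simply be reused. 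Your off-diagonal argument is sound (and in fact needs less than full Isserlis: zero mean, independence across devices and antennas, and $\mathbb{E}[h^2]=0$ already suffice, since each monomial contains a channel factor of device $m$ whose only admissible conjugate partners are excluded by the constraints $a\ne m$, $b\ne m''$, $m\ne m''$), and both proofs invoke the same independence of the iteration-$t$ updates from the iteration-$t$ channel realizations to factor the expectations. What each approach buys: yours is shorter and yields a clean per-subchannel identity for $\mathbb{E}\big[|y_{i,2}(t)|^2\big]$; the paper's retains explicit per-coordinate expressions, which makes visible that the individual coordinate second moments are \emph{not} diagonal in the devices — only their aggregate over all $d$ coordinates is — at the cost of the more delicate sign bookkeeping you avoided.
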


\begin{proof}
We first consider $1 \le i \le d/2$. By substituting $\Delta \hat{{\theta}}_{i,2} (t)$ from \eqref{DeltaHatTheta_l}, it follows that
\newcommand\firstequal{\mathrel{\overset{\makebox[0pt]{\mbox{\normalfont\tiny\sffamily (a)}}}{=}}}
\newcommand\secondequal{\mathrel{\overset{\makebox[0pt]{\mbox{\normalfont\tiny\sffamily (b)}}}{=}}}
\newcommand\thirdequal{\mathrel{\overset{\makebox[0pt]{\mbox{\normalfont\tiny\sffamily (c)}}}{=}}}
\begin{align}\label{App_Proof_Second_term_1}
&\mathbb{E} \left[ \Delta \hat{{\theta}}^2_{i,2} (t) \right]  = \mathbb{E} \Big[ \Big( \frac{1}{KM \sigma_h^2} \sum\limits_{m=1}^{M} \sum\limits_{m'=1, m' \ne m}^{M}  \sum\limits_{k=1}^{K} {\rm{Re}} \left\{ \left( {h}_{m,k,i} (t) \right)^{*} {h}_{m',k, i} (t) \left( \Delta {\theta}_{m',i} (t) + j \Delta {\theta}_{m',d/2+i} (t) \right) \right\} \Big)^2 \Big] \nonumber\\
& \firstequal \mathbb{E} \Big[ \frac{1}{K^2 M^2  \sigma_h^4} \sum\limits_{m=1}^{M} \sum\limits_{m'=1, m' \ne m}^{M}  \sum\limits_{k=1}^{K} \Big( \left( {\rm{Re}} \left\{ \left( {h}_{m,k,i} (t) \right)^{*} {h}_{m',k, i} (t) \left( \Delta {\theta}_{m',i} (t) + j \Delta {\theta}_{m',d/2+i} (t) \right) \right\} \right)^2 \Big. \nonumber\\
&\quad \;+ {\rm{Re}} \left\{ \left( {h}_{m,k,i} (t) \right)^{*} {h}_{m',k, i} (t) \left( \Delta {\theta}_{m',i} (t) + j \Delta {\theta}_{m',d/2+i} (t) \right) \right\} \nonumber\\
&\qquad \;\,\, \Big. {\rm{Re}} \left\{ \left( {h}_{m',k,i} (t) \right)^{*} {h}_{m,k, i} (t) \left( \Delta {\theta}_{m,i} (t) + j \Delta {\theta}_{m,d/2+i} (t) \right) \right\} \Big) \Big] \nonumber
\end{align}
\begin{align}
& \secondequal \mathbb{E} \Big[ \frac{1}{2KM^2} \sum\limits_{m=1}^{M} \Big( (M-1) \left( \Delta {\theta}^2_{m,i} (t) + \Delta {\theta}_{m,d/2+i}^2 (t) \right) \nonumber\\
& \qquad \; \qquad \qquad \quad \; + \sum\limits_{m'=1, m'\ne m}^{M} \left( \Delta {\theta}_{m,i} (t) \Delta {\theta}_{m',i} (t) - \Delta {\theta}_{m,d/2+i} (t) \Delta {\theta}_{m',d/2+i} (t)  \right) \Big) \Big],
\end{align}
where (a) and (b) follow from the independence of $h_{m,k,i} (t)$, $\forall m, k$, and $\mathbb{E} \big[ \left| {h}_{m,k,i} (t) \right|^2 \big] = \sigma^2_h$, respectively. 
Similarly, for $d/2+1 \le i \le d$, if follows that
\begin{align}\label{App_Proof_Second_term_2}
&\mathbb{E} \left[ \Delta \hat{{\theta}}^2_{i,2} (t) \right]  = \mathbb{E} \Big[ \Big( \frac{1}{KM \sigma_h^2} \sum\limits_{m=1}^{M} \sum\limits_{m'=1, m' \ne m}^{M}  \sum\limits_{k=1}^{K} {\rm{Im}} \left\{ \left( {h}_{m,k,i} (t) \right)^{*} {h}_{m',k, i} (t) \left( \Delta {\theta}_{m',i} (t) + j \Delta {\theta}_{m',d/2+i} (t) \right) \right\} \Big)^2 \Big] \nonumber\\
& = \mathbb{E} \Big[ \frac{1}{2KM^2} \sum\limits_{m=1}^{M} \Big( (M-1) \left( \Delta {\theta}^2_{m,i-d/2} (t) + \Delta {\theta}_{m,i}^2 (t) \right) \nonumber\\
& \qquad \; \qquad \qquad \quad \; + \sum\limits_{m'=1, m'\ne m}^{M} \left( \Delta {\theta}_{m,i} (t) \Delta {\theta}_{m',i} (t) - \Delta {\theta}_{m,i-d/2} (t) \Delta {\theta}_{m',i-d/2} (t)  \right) \Big) \Big].
\end{align}\end{proof}
From \eqref{App_Proof_Second_term_1} and \eqref{App_Proof_Second_term_2}, it follows that 
\begin{align}
\sum\limits_{i=1}^{d} \mathbb{E} \left[ \Delta \hat{{\theta}}^2_{i,2} (t) \right] &= \mathbb{E} \Big[ \frac{(M-1)}{KM^2}  \sum\limits_{m=1}^{M} \sum\limits_{i=1}^{d/2} \left( \Delta {\theta}^2_{m,i} (t) + \Delta {\theta}_{m,d/2+i}^2 (t) \right) \Big] \nonumber\\
&= \frac{(M-1)}{KM^2}  \sum\limits_{m=1}^{M} \mathbb{E} \big[ \left\| \Delta \boldsymbol{\theta}_{m} (t) \right\|_2^2 \big].    
\end{align}

\begin{lemma}\label{LemmaAppThetaHat_term3}
We have
\begin{align}\label{EQ_Lemma_App_thirs_term}
\sum\limits_{i=1}^{d} \mathbb{E} \left[ \Delta \hat{{\theta}}^2_{i,3} (t) \right] = \frac{\sigma_z^2 d}{2 \alpha_t^2 KM  \sigma_h^2}.
\end{align}
\end{lemma}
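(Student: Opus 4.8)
The plan is to reduce the sum over all $d$ coordinates to a single second-moment computation for the complex noise term $y_{i,3}(t)$, exploiting the circular symmetry that a product of independent circularly symmetric complex Gaussians inherits. First I would recall from \eqref{DeltaHatTheta_l} that for $1 \le i \le d/2$ we have $\Delta \hat{\theta}_{i,3}(t) = {\rm{Re}}\{ y_{i,3}(t) \}/(\alpha_t M \sigma_h^2)$, while for $d/2+1 \le i \le d$ we have $\Delta \hat{\theta}_{i,3}(t) = {\rm{Im}}\{ y_{i-d/2,3}(t) \}/(\alpha_t M \sigma_h^2)$, where $y_{i,3}(t) = \frac{1}{K}\sum_{m=1}^{M}\sum_{k=1}^{K} ( h_{m,k,i}(t) )^{*} z_{k,i}(t)$. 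Thus every summand is governed by either the real or the imaginary part of the same family $\{ y_{i,3}(t) \}_{i=1}^{d/2}$, and it suffices to control $\mathbb{E}[{\rm{Re}}\{ y_{i,3}(t) \}^2]$ and $\mathbb{E}[{\rm{Im}}\{ y_{i,3}(t) \}^2]$.

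The core step is the second-moment calculation. I would compute $\mathbb{E}[ | y_{i,3}(t) |^2 ]$ by expanding the double sum over index pairs $(m,k)$ and $(m',k')$ and invoking the mutual independence of the channel gains and the noise, together with the fact that they are iid across devices and antennas. Using $\mathbb{E}[ ( h_{m,k,i}(t) )^{*} h_{m',k',i}(t) ] = \sigma_h^2 \delta_{mm'}\delta_{kk'}$ and $\mathbb{E}[ z_{k,i}(t) ( z_{k',i}(t) )^{*} ] = \sigma_z^2 \delta_{kk'}$, all cross terms vanish and only the $MK$ diagonal terms survive, yielding $\mathbb{E}[ | y_{i,3}(t) |^2 ] = M \sigma_h^2 \sigma_z^2 / K$. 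Separately, I would verify circular symmetry by checking $\mathbb{E}[ y_{i,3}^2(t) ] = 0$, which follows because $\mathbb{E}[ z_{k,i}(t) z_{k',i}(t) ] = 0$ for circularly symmetric complex Gaussians (equivalently because $\mathbb{E}[ ( h_{m,k,i}(t) )^{*} ( h_{m',k',i}(t) )^{*} ] = 0$). Circular symmetry then gives $\mathbb{E}[{\rm{Re}}\{ y_{i,3}(t) \}^2] = \mathbb{E}[{\rm{Im}}\{ y_{i,3}(t) \}^2] = \tfrac{1}{2}\mathbb{E}[ | y_{i,3}(t) |^2 ] = M \sigma_h^2 \sigma_z^2 / (2K)$.

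Finally I would assemble the sum. Dividing by the normalization $(\alpha_t M \sigma_h^2)^2$, each of the $d$ coordinates contributes exactly $\sigma_z^2 / (2K \alpha_t^2 M \sigma_h^2)$, independently of whether it arises from a real or an imaginary part, and summing over $i \in [d]$ delivers the claimed $\sigma_z^2 d / (2 \alpha_t^2 K M \sigma_h^2)$. The only step requiring genuine care is the bookkeeping in the second-moment expansion: I must confirm that the factorization $\mathbb{E}[ h \cdot z \text{ terms}] = \mathbb{E}[h\text{ terms}]\,\mathbb{E}[z\text{ terms}]$ is licensed by channel--noise independence, and that the noise correlation $\mathbb{E}[ z_{k,i}(t) ( z_{k',i}(t) )^{*} ]$ forces $k = k'$, collapsing the fourfold sum to a single surviving index. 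The circular-symmetry argument used to split the power equally between the real and imaginary parts is the other point worth stating explicitly; neither is a computational difficulty so much as a bookkeeping checkpoint.
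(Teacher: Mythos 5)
Your proposal is correct and follows essentially the same route as the paper's proof of Lemma \ref{LemmaAppThetaHat_term3}: expand the second moment, use channel--noise independence and the iid structure across devices and antennas to kill all cross terms, and observe that the $MK$ surviving diagonal terms each contribute $\sigma_h^2\sigma_z^2$, normalized by $(\alpha_t K M \sigma_h^2)^2$. The only (harmless) difference is packaging: the paper expands $\big(\mathrm{Re}\{(h_{m,k,i})^* z_{k,i}\}\big)^2$ directly for $1 \le i \le d/2$ and repeats the argument for the imaginary-part coordinates, whereas you compute $\mathbb{E}\big[|y_{i,3}(t)|^2\big]$ once and split it evenly via the zero pseudo-covariance of the circularly symmetric terms, which is a slightly tidier way to obtain the same factor of $1/2$.
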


\begin{proof}
According to the definition of $\Delta \hat{{\theta}}_{i,3} (t)$, given in \eqref{DeltaHatTheta_l}, for $1 \le i \le d/2$ we have
\newcommand\firstequal{\mathrel{\overset{\makebox[0pt]{\mbox{\normalfont\tiny\sffamily (a)}}}{=}}}
\newcommand\secondequal{\mathrel{\overset{\makebox[0pt]{\mbox{\normalfont\tiny\sffamily (b)}}}{=}}}
\begin{align}\label{EQ_Proof_App_thirs_term}
\mathbb{E} & \left[ \Delta \hat{{\theta}}^2_{i,3} (t) \right] = \mathbb{E} \Big[ \Big( \frac{1}{\alpha_t K M \sigma_h^2} \sum\limits_{m=1}^{M} \sum\limits_{k=1}^{K} {\rm{Re}} \left\{ \left( {h}_{m,k,i} (t) \right)^{*} {z}_{k, i} (t)  \right\} \Big)^2 \Big] \nonumber\\
& \quad \quad \quad \quad \, \firstequal \mathbb{E} \Big[  \frac{1}{\alpha_t^2 K^2 M^2  \sigma_h^4} \sum\limits_{m=1}^{M} \sum\limits_{k=1}^{K} \left( {\rm{Re}} \left\{ \left( {h}_{m,k,i} (t) \right)^{*} {z}_{k, i} (t)  \right\} \right)^2 \Big] \secondequal \frac{\sigma_z^2}{2 \alpha_t^2 K M \sigma_h^2},
\end{align}
where (a) follows from the independence of $h_{m,k,i} (t)$ and $z_{k,i} (t)$, $\forall m, k$, and (b) follows since $\mathbb{E} \big[ \left| {h}_{m,k,i} (t) \right|^2 \big] = \sigma^2_h$ and $\mathbb{E} \big[ \left| {z}_{k,i} (t) \right|^2 \big] = \sigma^2_z$. 
The same result can be obtained for $d/2+1 \le i \le d$ by following the same procedure as above.
It is straightforward to derive \eqref{EQ_Lemma_App_thirs_term} from \eqref{EQ_Proof_App_thirs_term}.  
\end{proof}

\begin{lemma}\label{LemmaAppThetaHat_term4}
We have
\begin{align}
\sum\limits_{i=1}^{d} \mathbb{E} \left[ \Delta \hat{{\theta}}^2_{i,4} (t) \right] =
\frac{\tilde{\sigma}_h^2}{K M^2\sigma_h^2} \sum\limits_{m=1}^{M} \mathbb{E} \big[ \left\| \Delta \boldsymbol{\theta}_m (t) \right\|_2^2 \big].
\end{align}
\end{lemma}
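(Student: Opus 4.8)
The plan is to mirror the computation carried out for the interference term in Lemma~\ref{LemmaAppThetaHat_term2}, since $y_{i,4}(t)$ in \eqref{Convergence_y_i_t_rewrite} has exactly the same bilinear-in-the-channel structure as $y_{i,2}(t)$, with one of the two channel factors replaced by the CSI estimation error $\tilde{h}_{k,i}(t)$. First I would substitute the definition of $\Delta\hat{\theta}_{i,4}(t)$ from \eqref{DeltaHatTheta_l}, treating the case $1\le i\le d/2$ (where it is the real part of $y_{i,4}(t)$) and the case $d/2+1\le i\le d$ (where it is the imaginary part of $y_{i-d/2,4}(t)$) separately. Writing $c_{m,k,i}\triangleq(\tilde{h}_{k,i}(t))^{*}{h}_{m,k,i}(t)$, the estimate becomes $\Delta\hat{\theta}_{i,4}(t)=\frac{1}{KM\sigma_h^2}\sum_{m,k}{\rm{Re}}\{c_{m,k,i}(\Delta\theta_{m,i}(t)+j\Delta\theta_{m,d/2+i}(t))\}$ in the first case, and the analogous expression with ${\rm{Im}}$ in the second.

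The central step is to expand the square and take the expectation over the channel and noise randomness. Because $\tilde{h}_{k,i}(t)$ and ${h}_{m,k,i}(t)$ are independent and zero-mean, $c_{m,k,i}$ is zero-mean; moreover, since ${h}_{m,k,i}(t)$ is iid zero-mean across devices and everything is iid across antennas, the products $c_{m,k,i}$ and $c_{m',k',i}$ are uncorrelated whenever $(m,k)\ne(m',k')$. The subtle point worth noting explicitly is that although $\tilde{h}_{k,i}(t)$ is common to all devices for a fixed antenna $k$, the zero-mean independence of the ${h}_{m,k,i}(t)$ in $m$ still removes the same-$k$, different-$m$ cross terms. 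Combined with the fact that the local updates at iteration $t$ are independent of the channel realizations during that iteration, all cross terms vanish and only the diagonal $m=m'$, $k=k'$ contributions survive.

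For the diagonal terms I would use $\mathbb{E}[|c_{m,k,i}|^2]=\mathbb{E}[|\tilde{h}_{k,i}(t)|^2]\,\mathbb{E}[|{h}_{m,k,i}(t)|^2]=\tilde{\sigma}_h^2\sigma_h^2$, consistent with the variance $\tilde{\sigma}_h^2\sigma_h^2/K$ already recorded for $\tilde{\mathfrak{h}}_{m,i}^n(t)$. Since ${h}_{m,k,i}(t)$ is circularly symmetric, $c_{m,k,i}$ is circularly symmetric as well, so its real and imaginary parts each have variance $\tfrac12\tilde{\sigma}_h^2\sigma_h^2$ and are uncorrelated. Hence ${\rm{Re}}\{c_{m,k,i}(\Delta\theta_{m,i}(t)+j\Delta\theta_{m,d/2+i}(t))\}$ has second moment $\tfrac12\tilde{\sigma}_h^2\sigma_h^2(\Delta\theta_{m,i}^2(t)+\Delta\theta_{m,d/2+i}^2(t))$, and the same holds for the imaginary part in the second case. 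Summing the $K$ diagonal antenna contributions and dividing by $(KM\sigma_h^2)^2$ yields $\mathbb{E}[\Delta\hat{\theta}_{i,4}^2(t)]=\frac{\tilde{\sigma}_h^2}{2KM^2\sigma_h^2}\sum_{m=1}^{M}(\Delta\theta_{m,i}^2(t)+\Delta\theta_{m,d/2+i}^2(t))$ for $1\le i\le d/2$, with the analogous expression for $d/2+1\le i\le d$.

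Finally I would sum over $i\in[d]$. Adding the contributions from the two index ranges reconstructs the full squared norm twice: each range contributes $\sum_m\|\Delta\boldsymbol{\theta}_m(t)\|_2^2$, and this factor of $2$ cancels the $\tfrac12$ above, giving $\sum_{i=1}^{d}\mathbb{E}[\Delta\hat{\theta}_{i,4}^2(t)]=\frac{\tilde{\sigma}_h^2}{KM^2\sigma_h^2}\sum_{m=1}^{M}\mathbb{E}[\|\Delta\boldsymbol{\theta}_m(t)\|_2^2]$, as claimed. The main obstacle is not probabilistic but bookkeeping: keeping the real/imaginary split straight and correctly reassembling the two $i$-ranges into the full squared $\ell_2$ norm; the only genuinely nontrivial observation is that the shared CSI error $\tilde{h}_{k,i}(t)$ does not create surviving same-antenna cross terms, which follows from the zero-mean independence of the device channels.
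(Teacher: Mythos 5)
Your proposal is correct and follows essentially the same route as the paper's proof of Lemma~\ref{LemmaAppThetaHat_term4}: substitute $\Delta\hat{\theta}_{i,4}(t)$ from \eqref{DeltaHatTheta_l}, drop the cross terms in the squared sum by zero-mean independence, evaluate the diagonal terms using $\mathbb{E}\big[|\tilde{h}_{k,i}(t)|^2\big]\mathbb{E}\big[|h_{m,k,i}(t)|^2\big]=\tilde{\sigma}_h^2\sigma_h^2$ with the factor $\tfrac12$ from the real/imaginary split, and reassemble the two $i$-ranges into $\sum_{m}\|\Delta\boldsymbol{\theta}_m(t)\|_2^2$. Your explicit justification that the same-antenna, different-device cross terms vanish despite the shared factor $\tilde{h}_{k,i}(t)$ --- via the zero-mean independence of the $h_{m,k,i}(t)$ across devices --- is a point the paper leaves implicit in its step (a), and your circular-symmetry argument for the diagonal second moment is a harmless variant of the paper's direct computation.
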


\begin{proof}
By Substituting $\Delta \hat{{\theta}}_{i,4} (t)$ from \eqref{DeltaHatTheta_l}, for $1 \le i \le d/2$, we have
\newcommand\firstequal{\mathrel{\overset{\makebox[0pt]{\mbox{\normalfont\tiny\sffamily (a)}}}{=}}}
\newcommand\secondequal{\mathrel{\overset{\makebox[0pt]{\mbox{\normalfont\tiny\sffamily (b)}}}{=}}}
\begin{align}\label{App_Proof_Fourth_term_1}
& \mathbb{E} \left[ \Delta \hat{{\theta}}^2_{i,4} (t) \right]  = \mathbb{E} \Big[ \Big( \frac{1}{KM \sigma_h^2} \sum\limits_{m=1}^{M} \sum\limits_{k=1}^{K} {\rm{Re}} \big\{ \big( \tilde{h}_{k,i} (t) \big)^{*} {h}_{m,k, i} (t) \left( \Delta {\theta}_{m,i} (t) + j \Delta {\theta}_{m,d/2+i} (t) \right) \big\} \Big)^2 \Big] \nonumber\\
& \firstequal \mathbb{E} \Big[  \frac{1}{K^2 M^2 \sigma_h^4} \sum\limits_{m=1}^{M} \sum\limits_{k=1}^{K} \left( {\rm{Re}} \big\{ \big( \tilde{h}_{k,i} (t) \big)^{*} {h}_{m,k, i} (t) \left( \Delta {\theta}_{m,i} (t) + j \Delta {\theta}_{m,d/2+i} (t) \right)  \big\} \right)^2 \Big] \nonumber\\
& \secondequal \mathbb{E} \Big[ \frac{\tilde{\sigma}_h^2}{2 K M^2 \sigma_h^2} \sum\limits_{m=1}^{M} \left( \Delta {\theta}^2_{m,i} (t) + \Delta {\theta}_{m,d/2+i}^2 (t) \right) \Big],
\end{align}
where (a) follows from the independence of $\tilde{h}_{k,i} (t)$ and ${h}_{m,k,i} (t)$, $\forall m, k$, and (b) is the result of $\mathbb{E} \big[ \big| \tilde{h}_{k,i} (t) \big|^2 \big] = \tilde{\sigma}^2_h$ and $\mathbb{E} \big[ \left| {h}_{m,k,i} (t) \right|^2 \big] = \sigma^2_h$, $\forall i$.
Similarly, for $d/2+1 \le i \le d$, we can obtain
\begin{align}\label{App_Proof_Fourth_term_2}
& \mathbb{E} \left[ \Delta \hat{{\theta}}^2_{i,4} (t) \right] = \mathbb{E} \Big[ \frac{\tilde{\sigma}_h^2}{2 K M^2 \sigma_h^2} \sum\limits_{m=1}^{M} \left( \Delta {\theta}^2_{m,i-d/2} (t) + \Delta {\theta}_{m,d/2}^2 (t) \right) \Big].
\end{align}
From \eqref{App_Proof_Fourth_term_1} and \eqref{App_Proof_Fourth_term_2}, we have
\begin{align}
\sum\limits_{i=1}^{d} \mathbb{E} \left[ \Delta \hat{{\theta}}^2_{i,4} (t) \right] &= \mathbb{E} \Big[ \frac{\tilde{\sigma}_h^2}{K M^2 {\sigma}_h^2}  \sum\limits_{m=1}^{M} \sum\limits_{i=1}^{d/2} \left( \Delta {\theta}^2_{m,i} (t) + \Delta {\theta}_{m,d/2+i}^2 (t) \right) \Big]  = \frac{\tilde{\sigma}_h^2}{K M^2\sigma_h^2} \sum\limits_{m=1}^{M} \mathbb{E} \big[ \left\| \Delta \boldsymbol{\theta}_m (t) \right\|_2^2 \big].   
\end{align}
\end{proof}

\begin{lemma}\label{LemmaAppThetaHat_term5}
We have
\begin{align}
\sum\limits_{i=1}^{d} \mathbb{E} \left[ \Delta \hat{{\theta}}^2_{i,5} (t) \right] = \frac{\tilde{\sigma}_h^2 \sigma_z^2 d}{2 \alpha_t^2 K M^2 \sigma_h^4}.
\end{align}
\end{lemma}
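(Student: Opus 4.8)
The plan is to follow the template of the proof of Lemma~\ref{LemmaAppThetaHat_term3} almost verbatim, since $\Delta\hat{\theta}_{i,5}(t)$ is simply the additive-noise term reweighted by the CSI estimation error $\tilde{h}_{k,i}(t)$ in place of the true channel gain $h_{m,k,i}(t)$. First I would substitute the definition of $\Delta\hat{\theta}_{i,5}(t)$ from \eqref{DeltaHatTheta_l} together with the expression for $y_{i,5}(t)$, so that for $1\le i\le d/2$,
\[
\Delta\hat{\theta}_{i,5}(t)=\frac{1}{\alpha_t K M\sigma_h^2}\sum_{k=1}^{K}{\rm{Re}}\left\{\big(\tilde{h}_{k,i}(t)\big)^{*}z_{k,i}(t)\right\}.
\]
The single conceptual difference from the third term is that the error $\tilde{h}_{k,i}(t)$ carries only the antenna index $k$ and not the device index $m$, because the PS estimates the \emph{sum} of the channel gains rather than each individual gain; hence the inner summation runs over $k$ alone, with no accompanying sum over $m$.

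Next I would expand the square and take the expectation. Using that the estimation errors $\tilde{h}_{k,i}(t)$ and the noise samples $z_{k,i}(t)$ are independent and iid across antennas, every cross term with $k_1\ne k_2$ vanishes: for $k_1\ne k_2$ the noises $z_{k_1,i}(t)$ and $z_{k_2,i}(t)$ are independent and zero-mean, and likewise for the errors. This leaves only the $K$ diagonal contributions,
\[
\mathbb{E}\big[\Delta\hat{\theta}_{i,5}^2(t)\big]=\frac{1}{\alpha_t^2K^2M^2\sigma_h^4}\sum_{k=1}^{K}\mathbb{E}\Big[\big({\rm{Re}}\{(\tilde{h}_{k,i}(t))^{*}z_{k,i}(t)\}\big)^2\Big].
\]
The remaining per-antenna moment is evaluated exactly as in the step marked (b) in the proof of Lemma~\ref{LemmaAppThetaHat_term3}: writing the product in terms of real and imaginary parts, the cross term is proportional to $\mathbb{E}[{\rm{Re}}\{z_{k,i}(t)\}\,{\rm{Im}}\{z_{k,i}(t)\}]=0$ by the circular symmetry of $z_{k,i}(t)$, and the surviving pieces give $\mathbb{E}[({\rm{Re}}\{(\tilde{h}_{k,i})^{*}z_{k,i}\})^2]=\tilde{\sigma}_h^2\sigma_z^2/2$ upon inserting $\mathbb{E}[|\tilde{h}_{k,i}(t)|^2]=\tilde{\sigma}_h^2$ and $\mathbb{E}[|z_{k,i}(t)|^2]=\sigma_z^2$. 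Note that circular symmetry of the noise is what makes this moment hold for a general zero-mean error $\tilde{h}_{k,i}(t)$, without any distributional assumption beyond its variance.

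Collecting the $K$ identical diagonal terms then yields $\mathbb{E}[\Delta\hat{\theta}_{i,5}^2(t)]=\tilde{\sigma}_h^2\sigma_z^2/(2\alpha_t^2KM^2\sigma_h^4)$ for each $1\le i\le d/2$; the range $d/2+1\le i\le d$ is identical after replacing ${\rm{Re}}\{\cdot\}$ by ${\rm{Im}}\{\cdot\}$, which gives the same value since $\mathbb{E}[({\rm{Im}}\{(\tilde{h}_{k,i})^{*}z_{k,i}\})^2]=\tilde{\sigma}_h^2\sigma_z^2/2$ as well. Summing the common value over all $d$ coordinates produces the claimed $\tilde{\sigma}_h^2\sigma_z^2d/(2\alpha_t^2KM^2\sigma_h^4)$. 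The computation is entirely routine; the only point requiring care—and the natural place to slip—is the bookkeeping of the summation range. Relative to Lemma~\ref{LemmaAppThetaHat_term3}, whose $MK$ diagonal terms come from the joint sum over $(m,k)$, here there are only $K$ diagonal terms, and it is precisely this absence of the $m$-sum that turns the $1/M$ scaling of the third term into the $1/M^2$ scaling appearing in the statement.
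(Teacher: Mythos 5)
Your proposal is correct and follows essentially the same route as the paper's proof of Lemma \ref{LemmaAppThetaHat_term5}: substitute $\Delta\hat{\theta}_{i,5}(t)$ from \eqref{DeltaHatTheta_l}, drop the $k_1\ne k_2$ cross terms by independence across antennas, evaluate the per-antenna moment as $\tilde{\sigma}_h^2\sigma_z^2/2$, treat $d/2+1\le i\le d$ identically with ${\rm{Re}}\{\cdot\}$ replaced by ${\rm{Im}}\{\cdot\}$, and sum over the $d$ coordinates. Your observation that the circular symmetry of $z_{k,i}(t)$ makes the moment computation valid for any zero-mean $\tilde{h}_{k,i}(t)$ with variance $\tilde{\sigma}_h^2$ is a correct refinement of the paper's step (b), and your accounting of why the missing $m$-sum yields $1/M^2$ rather than the $1/M$ of Lemma \ref{LemmaAppThetaHat_term3} is also accurate.
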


\begin{proof}
From the definition of $\Delta \hat{{\theta}}_{i,5} (t)$, given in \eqref{DeltaHatTheta_l}, for $1 \le i \le d/2$ we have
\newcommand\firstequal{\mathrel{\overset{\makebox[0pt]{\mbox{\normalfont\tiny\sffamily (a)}}}{=}}}
\newcommand\secondequal{\mathrel{\overset{\makebox[0pt]{\mbox{\normalfont\tiny\sffamily (b)}}}{=}}}
\begin{align}\label{App_Proof_Fifth_term_1}
\mathbb{E} & \left[ \Delta \hat{{\theta}}^2_{i,5} (t) \right] = \mathbb{E} \Big[ \Big( \frac{1}{\alpha_t K M \sigma_h^2} \sum\limits_{k=1}^{K} {\rm{Re}} \big\{ \big( \tilde{h}_{k,i} (t) \big)^{*} {z}_{k, i} (t)  \big\} \Big)^2 \Big] \nonumber\\
& \quad \quad \quad \quad \, \firstequal \mathbb{E} \Big[  \frac{1}{\alpha_t^2 K^2 M^2 \sigma_h^4} \sum\limits_{k=1}^{K} \big( {\rm{Re}} \big\{ \big( \tilde{h}_{k,i} (t) \big)^{*} {z}_{k, i} (t)  \big\} \big)^2 \Big] \secondequal \frac{\tilde{\sigma}_h^2 \sigma_z^2}{2 \alpha_t^2 K M^2 \sigma_h^4},
\end{align}
where (a) follows from the independence of $\tilde{h}_{k,i} (t)$ and $z_{k,i} (t)$, $\forall k$, and (b) follows since $\mathbb{E} \big[ \big| \tilde{h}_{k,i} (t) \big|^2 \big] = \tilde{\sigma}^2_h$ and $\mathbb{E} \big[ \left| {z}_{k,i} (t) \right|^2 \big] = \sigma^2_z$. 
The same result can be obtained for $d/2+1 \le i \le d$ by following the same procedure as above.
The proof of Lemma \ref{LemmaAppThetaHat_term5} is completed from \eqref{App_Proof_Fifth_term_1}. 
\end{proof}
By substituting the results of Lemmas \ref{LemmaAppThetaHat_term1}-\ref{LemmaAppThetaHat_term5} into \eqref{APP_sum_of_5_terms}, it follows that 
\newcommand\firstequal{\mathrel{\overset{\makebox[0pt]{\mbox{\normalfont\tiny\sffamily (a)}}}{=}}}
\newcommand\firstinequal{\mathrel{\overset{\makebox[0pt]{\mbox{\normalfont\tiny\sffamily (b)}}}{\le}}}
\newcommand\secondinequal{\mathrel{\overset{\makebox[0pt]{\mbox{\normalfont\tiny\sffamily (c)}}}{\le}}}
\begin{align}
&\mathbb{E} \left[ \left\| \boldsymbol{\theta} (t+1) - {\boldsymbol{\upsilon}} (t+1) \right\|_2^2 \right] = \frac{\big(1+\frac{\tilde{\sigma}_h^2}{M{\sigma}_h^2}\big)}{KM} \sum\limits_{m=1}^{M} \mathbb{E} \big[ \left\| \Delta \boldsymbol{\theta}_m (t) \right\|_2^2 \big] + \frac{\big(1+\frac{\tilde{\sigma}_h^2}{M{\sigma}_h^2}\big)\sigma_z^2d}{2 \alpha_t^2 K M \sigma_h^2}\nonumber\\
&\firstequal \frac{\Big(1+\frac{\tilde{\sigma}_h^2}{M{\sigma}_h^2}\Big) \eta^2(t)}{KM} \sum\limits_{m=1}^{M} \mathbb{E} \Big[ \Big\| \sum\limits_{l=1}^{\tau} \nabla F_m \left( \boldsymbol{\theta}_m^l (t), \xi_m^l (t) \right) \Big\|_2^2 \Big] + \frac{\big(1+\frac{\tilde{\sigma}_h^2}{M{\sigma}_h^2}\big)\sigma_z^2d}{2 \alpha_t^2 K M \sigma_h^2} \nonumber\\
& \firstinequal \frac{\Big(1+\frac{\tilde{\sigma}_h^2}{M{\sigma}_h^2}\Big) \eta^2(t) \tau}{KM} \sum\limits_{m=1}^{M} \sum\limits_{l=1}^{\tau} \mathbb{E} \Big[ \left\|  \nabla F_m \left( \boldsymbol{\theta}_m^l (t), \xi_m^l (t) \right) \right\|_2^2 \Big] + \frac{\big(1+\frac{\tilde{\sigma}_h^2}{M{\sigma}_h^2}\big)\sigma_z^2d}{2 \alpha_t^2 K M \sigma_h^2}\nonumber\\
& \secondinequal \frac{\Big(1+\frac{\tilde{\sigma}_h^2}{M{\sigma}_h^2}\Big) \eta^2(t) \tau^2 G^2}{K} + \frac{\big(1+\frac{\tilde{\sigma}_h^2}{M{\sigma}_h^2}\big)\sigma_z^2d}{2 \alpha_t^2 K M \sigma_h^2},
\end{align}
where (a) follows by replacing $\Delta \boldsymbol{\theta}_m (t)$ from \eqref{DeltaTheta_m_t_conver}, (b) is due to the convexity of $\left\| \cdot \right\|_2^2$, and (c) follows from Assumption \ref{AssumpBoundedVarGradient}.

\section{Proof of Lemma \ref{AppFDPLemmaTerm_2}}\label{AppProofFDPLemmaTerm_2}

We follow the same procedure as the one used to prove \cite[Lemma 3]{FLConvergenceMohDenSanjVince}. We have
\begin{align}\label{AppLemmaTemr_2_Eq_1}
\mathbb{E} \left[ \left\| \boldsymbol{\upsilon} (t+1) - {\boldsymbol{\theta}}^* \right\|_2^2 \right] & = \mathbb{E} \left[ \left\| \boldsymbol{\theta} (t) + \Delta {\boldsymbol{\theta}} (t) - {\boldsymbol{\theta}}^* \right\|_2^2 \right]  \nonumber\\
& = \mathbb{E} \left[ \left\| \boldsymbol{\theta} (t) - {\boldsymbol{\theta}}^* \right\|_2^2 \right] + \mathbb{E} \left[ \left\| \Delta {\boldsymbol{\theta}} (t) \right\|_2^2 \right] + 2 \mathbb{E} \left[ \langle \boldsymbol{\theta} (t) - {\boldsymbol{\theta}}^* , \Delta {\boldsymbol{\theta}} (t) \rangle \right].
\end{align}
From the convexity of $\left\| \cdot \right\|_2^2$, it follows that
\newcommand\onestequal{\mathrel{\overset{\makebox[0pt]{\mbox{\normalfont\tiny\sffamily (a)}}}{=}}}
\newcommand\binequal{\mathrel{\overset{\makebox[0pt]{\mbox{\normalfont\tiny\sffamily (b)}}}{\le}}}
\begin{align}\label{AppLemmaTemr_2_Eq_1_2}
\mathbb{E} \left[ \left\| \Delta {\boldsymbol{\theta}} (t) \right\|_2^2 \right] & \le  \frac{1}{M} \sum\limits_{m =1}^{M} \mathbb{E} \left[ \left\| \Delta \boldsymbol{\theta}_m (t) \right\|_2^2 \right] \onestequal \frac{\eta^2(t)}{M} \sum\limits_{m =1}^{M} \mathbb{E} \Big[ \Big\| \sum\limits_{i=1}^{\tau} \nabla F_m \left( \boldsymbol{\theta}_m^i (t), \xi_m^i (t) \right) \Big\|_2^2 \Big] \nonumber\\
& \le \frac{\eta^2(t) \tau}{M} \sum\limits_{m =1}^{M} \sum\limits_{i=1}^{\tau} \mathbb{E} \left[ \left\| \nabla F_m \left( \boldsymbol{\theta}_m^i (t), \xi_m^i (t) \right) \right\|_2^2 \right] \binequal \eta^2(t) \tau^2 G^2,
\end{align}
where (a) follows by replacing $\Delta \boldsymbol{\theta}_m (t)$ from \eqref{DeltaTheta_m_t_conver}, and (b) follows from Assumption \ref{AssumpBoundedVarGradient}. 
Plugging the above inequality into \eqref{AppLemmaTemr_2_Eq_1} yields
\begin{align}\label{AppLemmaTemr_2_Eq_1_3}
&\mathbb{E} \left[ \left\| \boldsymbol{\upsilon} (t+1) - {\boldsymbol{\theta}}^* \right\|_2^2 \right] \le \mathbb{E} \left[ \left\| \boldsymbol{\theta} (t) - {\boldsymbol{\theta}}^* \right\|_2^2 \right] + \eta^2(t) \tau^2 G^2 + 2 \mathbb{E} \left[ \langle \boldsymbol{\theta} (t) - {\boldsymbol{\theta}}^* , \Delta {\boldsymbol{\theta}} (t) \rangle \right].
\end{align}
We bound the last term on the RHS of the above inequality.
We have \vspace{-.1cm}
\begin{align}\label{AppLemmaTemr_2_Eq_2}
2 \mathbb{E} \left[ \langle \boldsymbol{\theta} (t) - {\boldsymbol{\theta}}^* , \Delta {\boldsymbol{\theta}} (t) \rangle \right] & \onestequal \frac{2}{M} \sum\limits_{m=1}^{M} \mathbb{E} \left[ \langle \boldsymbol{\theta} (t) - {\boldsymbol{\theta}}^* , \Delta {\boldsymbol{\theta}}_m (t) \rangle \right] \nonumber\\
&= \frac{2 \eta(t)}{M} \sum\limits_{m=1}^{M} \mathbb{E} \Big[ \langle {\boldsymbol{\theta}}^* - \boldsymbol{\theta} (t) , \sum\limits_{i=1}^{\tau} \nabla F_m \left( \boldsymbol{\theta}_m^i (t), \xi_m^i (t) \right) \rangle \Big] \nonumber\\
& = \frac{2 \eta(t)}{M} \sum\limits_{m=1}^{M} \mathbb{E} \left[ \langle {\boldsymbol{\theta}}^* - \boldsymbol{\theta} (t) , \nabla F_m \left( \boldsymbol{\theta} (t), \xi_m^1 (t) \right) \rangle \right] \nonumber \\
& \quad + \frac{2 \eta(t)}{M} \sum\limits_{m=1}^{M} \mathbb{E} \Big[ \langle {\boldsymbol{\theta}}^* - \boldsymbol{\theta} (t) , \sum\limits_{i=2}^{\tau} \nabla F_m \left( \boldsymbol{\theta}_m^i (t), \xi_m^i (t) \right) \rangle \Big]. 
\end{align}
Next we bound the two terms on the RHS of the above equality. 
We have \vspace{-.1cm}
\newcommand\aequal{\mathrel{\overset{\makebox[0pt]{\mbox{\normalfont\tiny\sffamily (a)}}}{=}}}
\newcommand\thirdinequal{\mathrel{\overset{\makebox[0pt]{\mbox{\normalfont\tiny\sffamily (b)}}}{\le}}}
\newcommand\cinequal{\mathrel{\overset{\makebox[0pt]{\mbox{\normalfont\tiny\sffamily (c)}}}{\le}}}
\begin{align}\label{AppLemmaTemr_2_Eq_3}
& \frac{2 \eta(t)}{M} \sum\limits_{m=1}^{M} \mathbb{E} \left[ \langle {\boldsymbol{\theta}}^* - \boldsymbol{\theta} (t) , \nabla F_m \left( \boldsymbol{\theta} (t), \xi_m^1 (t) \right) \rangle \right]   \aequal \frac{2 \eta(t)}{M} \sum\limits_{m=1}^{M} \mathbb{E} \left[ \langle {\boldsymbol{\theta}}^* - \boldsymbol{\theta} (t) , \nabla F_m \left( \boldsymbol{\theta} (t) \right) \rangle \right] \nonumber\\
&  \thirdinequal \frac{2 \eta(t)}{M} \sum\limits_{m=1}^{M} \mathbb{E} \big[ F_m (\boldsymbol{\theta}^*) - F_m(\boldsymbol{\theta} (t)) - \frac{\mu}{2} \left\| \boldsymbol{\theta} (t) - {\boldsymbol{\theta}}^* \right\|_2^2 \big]  = 2 \eta(t) \big( F^* - \mathbb{E} \left[ F(\boldsymbol{\theta} (t)) \right] - \frac{\mu}{2} \mathbb{E} \left[ \left\| \boldsymbol{\theta} (t) - {\boldsymbol{\theta}}^* \right\|_2^2 \right] \big) \nonumber\\
& \cinequal - \mu \eta(t) \mathbb{E} \left[ \left\| \boldsymbol{\theta} (t) - {\boldsymbol{\theta}}^* \right\|_2^2 \right],
\end{align}
where (a) follows since $\mathbb{E}_{\xi} \left[ \nabla F_m \left( \boldsymbol{\theta} (t), \xi^1_m (t) \right) \right] = \nabla F_m \left( \boldsymbol{\theta} (t)  \right)$, (b) follows since $F_m$ is $\mu$-strongly convex, and (c) holds because $F^* \le F(\boldsymbol{\theta} (t))$. 
For the second term on the RHS of \eqref{AppLemmaTemr_2_Eq_2}, we have \vspace{-.9cm} 
\begin{align}\label{AppLemmaTemr_2_Eq_4}
& \frac{2 \eta(t)}{M} \sum\limits_{m=1}^{M} \mathbb{E} \Big[ \langle {\boldsymbol{\theta}}^* - \boldsymbol{\theta} (t) , \sum\limits_{i=2}^{\tau} \nabla F_m \left( \boldsymbol{\theta}_m^i (t), \xi_m^i (t) \right) \rangle \Big]  = \frac{2 \eta(t)}{M} \sum\limits_{m=1}^{M} \sum\limits_{i=2}^{\tau} \mathbb{E} \left[ \langle {\boldsymbol{\theta}}^* - \boldsymbol{\theta} (t) , \nabla F_m \left( \boldsymbol{\theta}_m^i (t), \xi_m^i (t) \right) \rangle \right] \nonumber\\
& = \frac{2 \eta(t)}{M} \sum\limits_{m=1}^{M} \sum\limits_{i=2}^{\tau} \mathbb{E} \left[ \langle \boldsymbol{\theta}_m^i (t) - \boldsymbol{\theta} (t) , \nabla F_m \left( \boldsymbol{\theta}_m^i (t), \xi_m^i (t) \right) \rangle \right] \nonumber\\
& \quad + \frac{2 \eta(t)}{M} \sum\limits_{m=1}^{M} \sum\limits_{i=2}^{\tau} \mathbb{E} \left[ \langle \boldsymbol{\theta}^* - \boldsymbol{\theta}_m^i (t) , \nabla F_m \left( \boldsymbol{\theta}_m^i (t), \xi_m^i (t) \right) \rangle \right].
\end{align}
From Cauchy-Schwarz inequality, it follows that
\newcommand\ainequal{\mathrel{\overset{\makebox[0pt]{\mbox{\normalfont\tiny\sffamily (a)}}}{\le}}}
\begin{align}\label{AppLemmaTemr_2_Eq_5}
& \frac{2 \eta(t)}{M} \sum\limits_{m=1}^{M} \sum\limits_{i=2}^{\tau} \mathbb{E} \left[ \langle \boldsymbol{\theta}_m^i (t) - \boldsymbol{\theta} (t) , \nabla F_m \left( \boldsymbol{\theta}_m^i (t), \xi_m^i (t) \right) \rangle \right] \nonumber\\
& \; \; \; \quad \le \frac{ \eta(t)}{M} \sum\limits_{m=1}^{M} \sum\limits_{i=2}^{\tau} \mathbb{E} \Big[ \frac{1}{\eta(t)} \left\| \boldsymbol{\theta}_m^i (t) - \boldsymbol{\theta} (t) \right\|_2^2 + \eta(t) \left\| \nabla F_m \left( \boldsymbol{\theta}_m^i (t), \xi_m^i (t) \right) \right\|_2^2 \Big]\nonumber\\
& \; \; \; \quad \ainequal \frac{1}{M} \sum\limits_{m=1}^{M} \sum\limits_{i=2}^{\tau} \mathbb{E} \big[ \left\| \boldsymbol{\theta}_m^i (t) - \boldsymbol{\theta} (t) \right\|_2^2 \big] + \eta^2 (t) \left( \tau - 1 \right) G^2, 
\end{align}
where (a) follows from Assumption \ref{AssumpBoundedVarGradient}.  
Also, the following lemma presents an upper bound on the second term in the RHS of \eqref{AppLemmaTemr_2_Eq_4}.

\begin{lemma}\label{LemmaTermE}
The second term on the RHS of \eqref{AppLemmaTemr_2_Eq_4} is upper bounded as follows:
\begin{align}\label{EQ_LemmaTermE}
& \frac{2  \eta(t)}{M} \sum\limits_{m=1}^{M} \sum\limits_{i=2}^{\tau} \mathbb{E} \left[ \langle \boldsymbol{\theta}^* - \boldsymbol{\theta}_m^i (t) , \nabla F_m \left( \boldsymbol{\theta}_m^i (t), \xi_m^i (t) \right) \rangle \right]  \le - \mu  \eta(t) (1 - \eta(t)) (\tau - 1) \mathbb{E} \left[ \left\| \boldsymbol{\theta} (t) - \boldsymbol{\theta}^* \right\|_2^2 \right] \nonumber\\
& \qquad \qquad \quad + \frac{\mu (1-\eta(t))}{M} \sum\limits_{m=1}^{M} \sum\limits_{i=2}^{\tau} \mathbb{E} \left[ \left\| \boldsymbol{\theta}_m^i (t) - {\boldsymbol{\theta}} (t) \right\|_2^2 \right] + 2 \eta(t) (\tau - 1) \Gamma.
\end{align}
\end{lemma}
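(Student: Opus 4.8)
The plan is to pass from the stochastic gradient to the true gradient, invoke $\mu$-strong convexity on each summand, and then split the resulting squared distance with a Young's inequality whose free parameter is tuned to reproduce the $(1-\eta(t))$ factors in \eqref{EQ_LemmaTermE}.

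First I would remove the randomness of the stochastic gradient. Since $\boldsymbol{\theta}_m^i(t)$ depends only on the mini-batches $\xi_m^1(t),\dots,\xi_m^{i-1}(t)$ drawn during the earlier local iterations, conditioning on those and applying the unbiasedness property \eqref{AverageStochGradientEst} together with the tower rule gives
\[
\mathbb{E}\big[ \langle \boldsymbol{\theta}^* - \boldsymbol{\theta}_m^i(t), \nabla F_m(\boldsymbol{\theta}_m^i(t),\xi_m^i(t))\rangle \big]
= \mathbb{E}\big[ \langle \boldsymbol{\theta}^* - \boldsymbol{\theta}_m^i(t), \nabla F_m(\boldsymbol{\theta}_m^i(t))\rangle \big],
\]
so the stochastic gradient may be replaced by the exact one at no cost. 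Next I would apply Assumption~\ref{AssumpStrongConvexLoss} with $\boldsymbol{v}=\boldsymbol{\theta}^*$ and $\boldsymbol{w}=\boldsymbol{\theta}_m^i(t)$, which yields
\[
\langle \boldsymbol{\theta}^* - \boldsymbol{\theta}_m^i(t), \nabla F_m(\boldsymbol{\theta}_m^i(t))\rangle
\le F_m(\boldsymbol{\theta}^*) - F_m(\boldsymbol{\theta}_m^i(t)) - \tfrac{\mu}{2}\left\| \boldsymbol{\theta}^* - \boldsymbol{\theta}_m^i(t) \right\|_2^2 .
\]

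The two function-value terms are disposed of using $F_m(\boldsymbol{\theta}_m^i(t)) \ge F_m^*$ and the definition of $\Gamma$: averaging over $m$ with the (equal) weights used throughout the analysis gives $\frac{1}{M}\sum_{m=1}^M [F_m(\boldsymbol{\theta}^*) - F_m(\boldsymbol{\theta}_m^i(t))] \le F^* - \frac{1}{M}\sum_{m=1}^M F_m^* = \Gamma$, and summing over the $\tau-1$ indices $i=2,\dots,\tau$ produces the final $2\eta(t)(\tau-1)\Gamma$ term. The crux is the remaining quadratic $-\frac{\mu}{2}\| \boldsymbol{\theta}^* - \boldsymbol{\theta}_m^i(t) \|_2^2$, which must be recast in terms of $\| \boldsymbol{\theta}(t) - \boldsymbol{\theta}^* \|_2^2$ and $\| \boldsymbol{\theta}_m^i(t) - \boldsymbol{\theta}(t) \|_2^2$. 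Writing $\boldsymbol{\theta}^* - \boldsymbol{\theta}(t) = (\boldsymbol{\theta}^* - \boldsymbol{\theta}_m^i(t)) + (\boldsymbol{\theta}_m^i(t) - \boldsymbol{\theta}(t))$ and applying $\|\boldsymbol{a}+\boldsymbol{b}\|_2^2 \le (1+\gamma)\|\boldsymbol{a}\|_2^2 + (1+\gamma^{-1})\|\boldsymbol{b}\|_2^2$ furnishes a lower bound on $\| \boldsymbol{\theta}^* - \boldsymbol{\theta}_m^i(t) \|_2^2$, hence an upper bound on its negative.

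The whole point is to choose $\gamma$ so that the outer prefactor $\frac{2\eta(t)}{M}\sum_{m=1}^M\sum_{i=2}^\tau$ converts the coefficients into exactly those appearing in \eqref{EQ_LemmaTermE}. The choice $1+\gamma = 1/(1-\eta(t))$, i.e.\ $\gamma = \eta(t)/(1-\eta(t))$, does this: it gives $\frac{1}{1+\gamma} = 1-\eta(t)$ in front of $\| \boldsymbol{\theta}^* - \boldsymbol{\theta}(t) \|_2^2$ and $\frac{1+\gamma^{-1}}{1+\gamma} = (1-\eta(t))/\eta(t)$ in front of $\| \boldsymbol{\theta}_m^i(t) - \boldsymbol{\theta}(t) \|_2^2$, after which the prefactor collapses these to $-\mu\eta(t)(1-\eta(t))(\tau-1)$ and $\frac{\mu(1-\eta(t))}{M}$ respectively, matching the statement. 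The condition $0 < \eta(t) \le 1$ guarantees $\gamma > 0$ and $1-\eta(t) \ge 0$, so every inequality direction is preserved. I expect the tuning of $\gamma$ to be the only genuinely delicate step: it is where the precise $(1-\eta(t))$ factors and the sign restriction on $\eta(t)$ actually enter, while the remainder is routine bookkeeping over the double sum and an application of the definition of $\Gamma$.
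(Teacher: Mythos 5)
Your proof is correct and takes essentially the same route as the paper's: pass to the true gradient via unbiasedness, apply $\mu$-strong convexity at $\boldsymbol{\theta}_m^i (t)$, insert $F_m^*$ to extract the $2 \eta(t) (\tau - 1) \Gamma$ term, and split the residual quadratic with an $\eta(t)$-tuned parameter. Your Peter--Paul inequality with $\gamma = \eta(t)/(1-\eta(t))$ is algebraically identical to the paper's expansion of $- \left\| \boldsymbol{\theta}_m^i (t) - \boldsymbol{\theta}^* \right\|_2^2$ followed by Young's inequality with parameter $\eta(t)$, yielding exactly the coefficients $-(1-\eta(t))$ and $\tfrac{1}{\eta(t)} - 1 = \tfrac{1-\eta(t)}{\eta(t)}$.
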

\begin{proof}
See Appendix \ref{AppProofLemmaTermE}. 
\end{proof}

Substituting the results in \eqref{AppLemmaTemr_2_Eq_5} and \eqref{EQ_LemmaTermE} into \eqref{AppLemmaTemr_2_Eq_4} yields
\begin{align}\label{AppLemmaTemr_2_Eq_6}
& \frac{2 \eta(t)}{M} \sum\limits_{m=1}^{M} \mathbb{E} \Big[ \langle {\boldsymbol{\theta}}^* - \boldsymbol{\theta} (t) , \sum\limits_{i=2}^{\tau} \nabla F_m \left( \boldsymbol{\theta}_m^i (t), \xi_m^i (t) \right) \rangle \Big]   \le - \mu \eta(t) (1 - \eta(t)) (\tau - 1) \mathbb{E} \left[ \left\| \boldsymbol{\theta} (t) - \boldsymbol{\theta}^* \right\|_2^2 \right] \nonumber\\
&\;\;\;\; \quad + \frac{\left( 1+ \mu (1- \eta(t)) \right)}{M} \sum\limits_{m=1}^{M} \sum\limits_{i=2}^{\tau} \mathbb{E} \big[ \left\| \boldsymbol{\theta}_m^i (t) - \boldsymbol{\theta} (t) \right\|_2^2 \big] + \eta^2 (t) \left( \tau - 1 \right) + 2 \eta(t) (\tau - 1) \Gamma.
\end{align}
We have 
\begin{align}\label{AppLemmaTemr_2_Eq_7}
\frac{1}{M} \sum\limits_{m=1}^{M} \sum\limits_{i=2}^{\tau} \mathbb{E} \big[ \left\| \boldsymbol{\theta}_m^i (t) - \boldsymbol{\theta} (t) \right\|_2^2 \big]  &= \frac{\eta^2(t)}{M} \sum\limits_{m=1}^{M} \sum\limits_{i=2}^{\tau} \mathbb{E} \Big[ \Big\| \sum\nolimits_{l=1}^{i} \nabla F_m \left( \boldsymbol{\theta}_m^l (t), \xi_m^l (t) \right) \Big\|_2^2 \Big] \nonumber\\
& \ainequal \eta^2(t) G^2 \frac{\tau (\tau-1)(2\tau-1)}{6}, 
\end{align}
where (a) follows from the convexity of $\left\| \cdot \right\|_2^2$ and Assumption \ref{AssumpBoundedVarGradient}. 
For $\eta(t) \le 1$, $\forall t$, it follows from \eqref{AppLemmaTemr_2_Eq_6} and \eqref{AppLemmaTemr_2_Eq_7} that
\begin{align}\label{AppLemmaTemr_2_Eq_8}
& \frac{2 \eta(t)}{M} \sum\limits_{m=1}^{M} \mathbb{E} \Big[ \langle {\boldsymbol{\theta}}^* - \boldsymbol{\theta} (t) , \sum\limits_{i=2}^{\tau} \nabla F_m \left( \boldsymbol{\theta}_m^i (t), \xi_m^i (t) \right) \rangle \Big]  \le - \mu \eta(t) (1 - \eta(t)) (\tau - 1) \mathbb{E} \left[ \left\| \boldsymbol{\theta} (t) - \boldsymbol{\theta}^* \right\|_2^2 \right] \nonumber\\
&  \;\;\; +  \left( 1+ \mu (1- \eta(t)) \right) \eta^2(t) G^2 \frac{\tau (\tau-1)(2\tau-1)}{6} + \rho (t) \eta^2 (t) \left( \tau - 1 \right) G^2 + 2  \eta(t) (\tau - 1) \Gamma .
\end{align}
By substituting the results in \eqref{AppLemmaTemr_2_Eq_3} and \eqref{AppLemmaTemr_2_Eq_8} into \eqref{AppLemmaTemr_2_Eq_2}, we obtain
\begin{align}\label{AppLemmaTemr_2_Eq_9}
& 2 \mathbb{E} \left[ \langle \boldsymbol{\theta} (t) - {\boldsymbol{\theta}}^* , \Delta {\boldsymbol{\theta}} (t) \rangle \right]  \le - \mu \eta(t) \left( \tau - \eta(t) (\tau - 1) \right) \mathbb{E} \left[ \left\| \boldsymbol{\theta} (t) - \boldsymbol{\theta}^* \right\|_2^2 \right] \nonumber\\
& \; \;\; +  \left( 1+ \mu (1- \eta(t)) \right) \eta^2(t) G^2 \frac{\tau (\tau-1)(2\tau-1)}{6} + \rho (t) \eta^2 (t) \left( \tau - 1 \right)G^2 + 2  \eta(t) (\tau - 1) \Gamma .
\end{align}
Plugging \eqref{AppLemmaTemr_2_Eq_9} into \eqref{AppLemmaTemr_2_Eq_1_3} completes the proof of Lemma \ref{AppFDPLemmaTerm_2}.

\section{Proof of Lemma \ref{LemmaTermE}}\label{AppProofLemmaTermE}
We have 
\begin{align}\label{EQ_LemmaTermE_app_1}
& \frac{2 \eta(t)}{M} \sum\limits_{m=1}^{M} \sum\limits_{i=2}^{\tau} \mathbb{E} \left[ \langle \boldsymbol{\theta}^* - \boldsymbol{\theta}_m^i (t) , \nabla F_m \left( \boldsymbol{\theta}_m^i (t), \xi_m^i (t) \right) \rangle \right] \nonumber\\
& \qquad \qquad \qquad \ainequal \frac{2 \eta(t)}{M} \sum\limits_{m=1}^{M} \sum\limits_{i=2}^{\tau} \mathbb{E} \left[ \langle \boldsymbol{\theta}^* - \boldsymbol{\theta}_m^i (t) , \nabla F_m \left( \boldsymbol{\theta}_m^i (t) \right) \rangle \right]\nonumber\\
& \qquad \qquad \qquad \thirdinequal \frac{2 \eta(t)}{M} \sum\limits_{m=1}^{M} \sum\limits_{i=2}^{\tau} \mathbb{E} \Big[ F_m (\boldsymbol{\theta}^*) - F_m(\boldsymbol{\theta}_m^i (t)) - \frac{\mu}{2} \left\| \boldsymbol{\theta}_m^i (t) - {\boldsymbol{\theta}}^* \right\|_2^2 \Big]\nonumber\\
& \qquad \qquad \qquad = \frac{2 \eta(t)}{M} \sum\limits_{m=1}^{M} \sum\limits_{i=2}^{\tau} \mathbb{E} \Big[ F_m (\boldsymbol{\theta}^*) - F_m^* + F_m^* - F_m(\boldsymbol{\theta}_m^i (t)) - \frac{\mu}{2} \left\| \boldsymbol{\theta}_m^i (t) - {\boldsymbol{\theta}}^* \right\|_2^2 \Big]\nonumber\\
& \qquad \qquad \qquad = 2 \eta(t) (\tau - 1) \big( F^* - \frac{1}{M} \sum\nolimits_{m=1}^{M} F_m^* \big) +  \frac{2 \eta (t)}{M} \sum\limits_{m=1}^{M} \sum\limits_{i=2}^{\tau} \left( F_m^* - \mathbb{E} \left[ F_m({\boldsymbol{\theta}}_m^i (t)) \right] \right) \nonumber\\
& \qquad \qquad \qquad \quad - \frac{\mu \eta(t)}{M} \sum\limits_{m=1}^{M} \sum\limits_{i=2}^{\tau} \mathbb{E} \left[ \left\| \boldsymbol{\theta}_m^i (t) - {\boldsymbol{\theta}}^* \right\|_2^2 \right] \nonumber\\
& \qquad \qquad \qquad \cinequal 2 \eta(t) (\tau - 1) \Gamma - \frac{\mu \eta(t)}{M} \sum\limits_{m=1}^{M} \sum\limits_{i=2}^{\tau} \mathbb{E} \left[ \left\| \boldsymbol{\theta}_m^i (t) - {\boldsymbol{\theta}}^* \right\|_2^2 \right],  
\end{align}
where (a) follows since $\mathbb{E}_{\xi} \left[ \nabla F_m \left( \boldsymbol{\theta} (t), \xi_m^i (t) \right) \right] = \nabla F_m \left( \boldsymbol{\theta} (t)  \right)$, $\forall i, m, t$, (b) holds because $F_m$ is $\mu$-strongly convex, and (c) follows since $F_m^* \le F_m({\boldsymbol{\theta}}_m^i (t))$, $\forall m, i, t$. 
We have
\begin{align}\label{EQ_LemmaTermE_app_2}
& - \left\| \boldsymbol{\theta}_m^i (t) - {\boldsymbol{\theta}}^* \right\|_2^2 = - \left\| \boldsymbol{\theta}_m^i (t) - {\boldsymbol{\theta}} (t) \right\|_2^2 - \left\| \boldsymbol{\theta} (t) - {\boldsymbol{\theta}}^* \right\|_2^2 - 2 \langle \boldsymbol{\theta}_m^i (t) - {\boldsymbol{\theta}} (t) , \boldsymbol{\theta} (t) - {\boldsymbol{\theta}}^* \rangle \nonumber\\
& \qquad \quad \ainequal - \left\| \boldsymbol{\theta}_m^i (t) - {\boldsymbol{\theta}} (t) \right\|_2^2 - \left\| \boldsymbol{\theta} (t) - {\boldsymbol{\theta}}^* \right\|_2^2 + \frac{1}{\eta(t)} \left\| \boldsymbol{\theta}_m^i (t) - {\boldsymbol{\theta}} (t) \right\|_2^2 + \eta(t) \left\| \boldsymbol{\theta} (t) - {\boldsymbol{\theta}}^* \right\|_2^2 \nonumber\\
&\qquad \quad = - (1 - \eta(t)) \left\| \boldsymbol{\theta} (t) - {\boldsymbol{\theta}}^* \right\|_2^2 + \Big(\frac{1}{\eta(t)} - 1\Big) \left\| \boldsymbol{\theta}_m^i (t) - {\boldsymbol{\theta}} (t) \right\|_2^2,
\end{align}
where (a) follows from the Cauchy-Schwarz inequality. 
The proof of Lemma \ref{LemmaTermE} is completed by substituting the result in \eqref{EQ_LemmaTermE_app_2} into \eqref{EQ_LemmaTermE_app_1}.

\bibliographystyle{IEEEtran}
\bibliography{Report}

\end{document}